\numberwithin{equation}{section}
\renewcommand{\section}{\@startsection{section}{1}{0pt}{20pt}{6pt}{\large\bf}}
\renewcommand{\@seccntformat}[1]{\csname the#1\endcsname.\ }
\def\footnoterule{\kern -3pt \hrule width 2.7 true cm \kern 2.6pt}
\def\ni{\noindent}
\def\vs{\vspace}
\def\hs{\hspace}
\def\EE{\mathsf E}
\def\cC{\mathcal C}
\def\cE{\mathcal E}
\def\cD{\mathcal D}
\def\L{I\!\!L}
\def\wt{\widetilde}
\newcommand{\p}{\! +\! }
\newcommand{\m}{\! -\! }
\newtheorem{theorem}{Theorem}[section]
\newtheorem{lemma}[theorem]{Lemma}
\newtheorem{definition}[theorem]{Definition}
\newtheorem{remark}[theorem]{Remark}
\begin{document}

\title{\textbf{American Options with Discontinuous Two-Level Caps}}
\author{Jerome Detemple \& Yerkin Kitapbayev}
\date{19 July 2017}
\maketitle


{\par \leftskip=2.6cm \rightskip=2.6cm \footnotesize This paper examines the
valuation of American capped call options with two-level caps. The structure
of the immediate exercise region is significantly more complex than in the
classical case with constant cap. When the cap grows over time, making
extensive use of probabilistic arguments and local time, we show that the
exercise region can be the union of two disconnected set. Alternatively, it
can consist of two sets connected by a line. The problem then reduces to the
characterization of the upper boundary of the first set,
which is shown to satisfy a recursive integral equation. When the cap
decreases over time, the boundary of the exercise region has piecewise
constant segments alternating with non-increasing segments. General
representation formulas for the option price, involving the exercise
boundaries and the local time of the underlying price process, are derived.
An efficient algorithm is developed and numerical results are provided.\par}


\footnote{\textit{Mathematics Subject Classification 2010.} Primary 91G20,
60G40. Secondary 60J60, 35R35, 45G10.}

\footnote{\textit{Key words and phrases:} American capped option, optimal
stopping, geometric Brownian motion, free-boundary problem, local time,
integral equation}


\vspace{-18pt}

\vspace{-18pt}

\section{Introduction}


Caps are attractive provisions found in a variety of derivative contracts.
For an investor seeking to establish a short position, a cap limits the
exposure to unfavorable price movements. For a buyer, it reduces the price,
hence the cost of entering a long position. Products with caps can be found
on organized exchanges, in OTC markets, or appear as components of financial
packages offered by firms to raise capital. A common example is the CAPS
contract, introduced in 1991 by the CBOE and traded to the end of the 90s.
This option, which is written on the S\&P100 and S\&P500 indices, has gains
limited to $30$-point moves in the underlying index. It also closes
automatically, hence has an automatic exercise provision, once the change in
the underlying index hits the cap. Another example is the Mexican
Index-Linked Euro Security (MILES), which traded in the early 90s on the
Luxembourg stock exchange. The MILES is an American capped option on the
Mexican stock index, the Bolsa Mexicana de Valores. It has a initial vesting
period during which exercise is prohibited, followed by a period during
which exercise is at the option of the holder, but with gains limited by the
cap provision. It is thus a special case of an American capped option with
two-level cap. Nowadays, caps appear as features that can be incorporated in
FLEX options offered by the CBOE. FLEX options are fully customizable
products written on indices or equities. They can be European- or
American-style. Bull and Bear spreads are examples of FLEX options with caps.

The valuation of capped option has been a topic of long-standing interest in
academic and practitioner circles. Early studies dealing with the valuation
of European-style capped options include Boyle and Turnbull (1989). Capped
options with automatic exercise at the cap are discussed by Flesacker (1992)
and Chance (1994), among others. The case of American-style capped options
is examined by Broadie and Detemple (1995). They consider both constant and
growing caps. Using dominance arguments, they show that the optimal exercise
policy for a capped call with maturity date $T$ and constant cap $L$ is the
first hitting time of $B\wedge L$, the minimum of the cap $L$ and the
optimal exercise boundary $B=\left( B\left( t\right) \right) _{t\in \left[
0,T\right] }$ of an uncapped, but otherwise identical option. They deduce
several representations of the capped option price. For a cap that grows at
a constant rate $g$, they show that the optimal policy lies in a 3-parameter
family. They characterize the optimal parametric policy and find the
associated option value. In a recent study, Qiu (2015) analyzes American
capped strangle options and exploits the local time-space formula to derive
an early exercise premium (EEP) representation. The pair of optimal exercise
boundaries is characterized as the unique solution to a system of coupled
integral equations.

This paper is devoted to the pricing of an American capped call option with
a two-level cap $L_{1}1_{\tau <T_{1}}+L_{2}1_{T_{1}\leq \tau \leq T_{2}}$,
where $L_{1},L_{2}$ are arbitrary levels such that $L_{1},L_{2}>K$. Such a
cap structure provides more flexibility from the point of view of an issuing
firm seeking to raise capital. The case $L_{1}<L_{2}$ is especially relevant
for the success of an issue, as the growing upside potential of the contract
raises its attractiveness for potential buyers. At the same time, the
two-tiered cap still limits the exposure of the firm, making it interesting
for the issuer. Nevertheless, for completeness, we also consider the case $%
L_{2}<L_{1}.$

It is clear that on the interval $[T_{1},T_{2}]$, the pricing problem
reduces to the problem with single cap $L_{2}$ for which the immediate
exercise region $\mathcal{E}_{2}$ was characterized by Broadie and Detemple
(1995). Thus, $\mathcal{E}_{2}=$ $\left\{ \left( S,t\right) \in \mathbb{R}%
^{+}\times \left[ T_{1},T_{2}\right] :S\geq B\left( t\right) \wedge
L_{2}\right\} $. We make use of the local time-space calculus of Peskir
(2005) and provide an alternative representation for the option price using
the European capped option as the benchmark for the early exercise premium
formula. Due to the non-smoothness of the payoff function in the exercise
region at the cap $L_{2}$, a local time term appears in the pricing formula.
However, this representation is found to be efficient and we exploit it
extensively for the analysis of the optimal exercise strategy on $[0,T_{1})$%
. Remarkably, this representation also provides an alternative proof of the
result of Broadie and Detemple (1995). Namely, it shows that the part of the
boundary below the cap satisfies the same integral equation as the boundary
of the American uncapped call option (see Remark 2.6).

We then consider the pricing problem over the interval $[0,T_{1})$. There
are three cases of interest. In the first case, $L_{1}<B\left( T_{1}\right)
\wedge L_{2}$. We show that the immediate exercise region on $[0,T_{1})$
takes the form%
\begin{equation*}
\hspace{2pc} \mathcal{E}_{1}=\left\{ \left( S,t\right) \in \mathbb{R}%
^{+}\times \left[ 0,t^{1}\right] :L_{1}\leq S\leq B^{L,1}(t)\right\}
\end{equation*}%
where $B^{L,1}=\left( B^{L,1}\left( t\right) \right) _{t\in \left[ 0,t^{1}%
\right] }$ is an upper boundary and $t^{1}<T_{1}$ is an endogenous time. It
has three unusual features: (i) if $t^{0}\equiv T_{1}-\frac{1}{r}\log
((L_{2}-K)/(L_{1}-K))\geq 0$, then $B^{L,1}(t)=+\infty $ for $t\in \lbrack
0,t^{0}]$ and $B^{L,1}(t^{0}+)=+\infty $; (ii) there exists a time $%
t^{0}\vee 0\le t^{1}<T_{1}$ such that $\left( S,t\right) \notin \mathcal{E}%
_{1}$ for any $S>0$ and $t\in \left( t^{1},T_{1}\right) $; (iii) there
exists a time $t^{0}\vee 0\le T_{0}\le t^{1}$ such that $B^{L,1}(t)=L_{1}$
for$\;t\in \lbrack T_{0},t^{1}]$. We characterize the times $\left(
t^{0},T_{0},t^{1}\right) $ and show that the upper boundary $B^{L,1}$ solves
a recursive integral equation on $[t^0\vee 0,T_0]$.

In the second case, $B\left( T_{1}\right) \leq L_{1}<L_{2}$. The immediate
exercise region on $[0,T_{1})$ becomes%
\begin{align*}
\hspace{2pc}\mathcal{E}_{1}=& \left\{ \left( S,t\right) \in \mathbb{R}%
^{+}\times \left[ 0,T_{1}\right) :L_{1}\leq S\leq B^{L,1}(t)\right\} \\
&\cup \left\{ \left( S,t\right) \in \mathbb{R}^{+}\times \left[ t^{\ast
},T_{1}\right) :B(t)\leq S\leq L_{1}\right\}
\end{align*}%
where $t^{\ast }$ is the unique solution to the equation $B\left( t\right)
=L_{1}$ and the upper boundary $B^{L,1}=\left( B^{L,1}\left( t\right)
\right) _{t\in \left[ 0,T_{1}\right) }$ is defined over the whole interval $%
\left[ 0,T_{1}\right) $. Moreover, there exists $T_0<T_1$ such that $%
B^{L,1}(t)=L_1$ for $t\in[T_0,T_1)$. The main differences with the previous
case are that immediate exercise (i) is always optimal when $S=L_{1}$ (thus $%
t^{1}=T_{1}$) and (ii) is optimal below $L_{1}$ when $B(t)\leq S\leq L_{1}$.
We derive the corresponding recursive integral equation for $B^{L,1}$ on $%
[t^0\vee 0,T_0]$.

In the last case, when $L_{1}>L_{2}$, we assume that the cap is
left-continuous. We impose this condition in order to avoid technical
difficulties associated with the discontinuity of the option price at $T_{1}$
from the left and the possible non-existence of an optimal exercise time.
Moreover, the prices of the options with left- and right-continuous caps
coincide prior to $T_{1}$. For this case, we show that the immediate
exercise region is%
\begin{equation*}
\mathcal{E}=\left\{ \left( S,t\right) \in \mathbb{R}^{+}\times \left[ 0,T_{1}%
\right] :S\geq B^{L,1}\left( t\right) \right\} \cup \left\{ \left(
S,t\right) \in \mathbb{R}^{+}\times \left( T_{1},T_{2}\right] :S\geq
B(t)\wedge L_{2}\right\}
\end{equation*}%
where $B^{L,1}=\left( B^{L,1}\left( t\right) \right) _{t\in \left[ 0,T_{1}%
\right] }$ is an endogenous non-increasing boundary defined over $\left[
0,T_{1}\right] $ satisfying the boundary condition $B^{L}(T_{1}-)=\max(rK/%
\delta,L_{2}\wedge B(T_{1}))$ and $B^{L}(T_{1})=L_{2}\wedge B(T_{1})$. In
this instance, immediate exercise above the cap $L_{1}$ is always optimal on
$[0,T_1)$. Immediate exercise below the cap is also optimal if $%
B^{L,1}(t)\le S \le L_1$. We derive the recursive integral equation for $%
B^{L,1}$ on $[0,T_1)$.

In all cases, the immediate exercise region over the whole interval $%
[0,T_{2}]$ is the union of the two sets, $\mathcal{E}_{1}=\mathcal{E}%
_{1}\cup \mathcal{E}_{2}$. In the first case described above, the sets in
question are disconnected, i.e., they are separated by a region in which
continuation is optimal. For each case, we derive a general representation
formula for the option price, involving the relevant boundary and the local
times of the underlying price process at the cap and the boundary. We also
develop an efficient algorithm for computing the exercise boundary and the
option price, and provide numerical results.

The main difficulties for pricing this option and determining the optimal
exercise policy, compared to the standard American-style derivatives, are
the non-smoothness of the payoff with respect to underlying asset price in
the exercise region and the time-discontinuity of the payoff. Although the
first issue can be tackled as in aforementioned papers, the second feature
induces a complicated structure of the exercise region and prevents the
application of standard methods. This paper provides probabilistic
arguments, making extensive use of dominance relations and local time to
tackle these types of difficulties. Other potential applications of our
approach include finite horizon impulse control problems in the presence of
execution delays and decision lags, which are common in algorithmic trading
and other decision-making processes. For related studies and theoretical
background, we refer to Palczewski and Stettner (2010), among others, who
showed regularity of the value functions of optimal stopping problems with
time-discontinuous payoffs under a quite general class of Markov processes.

The paper is organized as follows. Section 2 formulates the problem,
revisits the results of Broadie and Detemple (1995) and provides an
alternative pricing formula for the constant cap option problem. Section 3
contains the main result of the paper. It describes the optimal exercise
policy for the three cases of interest. Sections 4, 5 and 6 consider each
case in isolation and prove the optimality of the exercise policy announced
in Section 3. They also provide the relevant early exercise premium formulas
for the option price. Finally, Section 7 describes a step-by-step algorithm
for computation and presents numerical results.\vspace{6pt}

\section{Problem Formulation and Preliminary Results}

Consider the probability space $(\Omega ,\mathcal{F},Q)$ where $Q$ is the
risk-neutral measure. Let $W$ be a $Q$-standard Brownian motion with natural
filtration $(\mathcal{F}_{t})_{t\geq 0}$. We assume the standard financial
market with an underlying asset price $S$ that follows a geometric Brownian
motion under the risk-neutral measure $Q$
\begin{equation}
\hspace{6.0905pc}dS_{t}=S_{t}\left( \left( r-\delta \right) dt+\sigma
dW_{t}\right)
\end{equation}%
where $r>0$ is the interest rate, $\delta \geq 0$ the dividend yield, and $%
\sigma >0$ the volatility parameter.

An American capped call option has exercise payoff $\left( S_{\tau }\wedge
L(\tau)-K\right) ^{+}$ at exercise time $\tau $ where $L=\left( L\left(
t\right) \right) _{t\in \left[ 0,T_{2}\right] }$ is the cap. In this paper,
we focus on capped options with two-level caps, i.e., such that%
\begin{equation}
\hspace{6.1133pc}L(\tau)=L_{1}1_{\tau <T_{1}}+L_{2}1_{T_{1}\leq \tau \leq
T_{2}}
\end{equation}%
with $T_{1}<T_{2}$. The cap $L(t)$ is right-continuous with jump at the
intermediate date $T_{1}$ if $L_{1}\neq L_{2}$. If $L_{1}>L_{2}$, we
restrict attention to the left-continuous version of this cap, namely $%
L(\tau)=L_{1}1_{\tau \leq T_{1}}+L_{2}1_{T_{1}<\tau \leq T_{2}}$. The
strike price is $K$ and the maturity date $T_{2}$. To avoid trivial cases we
assume $L_{1},L_{2}>K$.

Standard arguments (e.g., Karatzas (1988)) can be invoked to write the price
function as%
\begin{equation}
\hspace{6.0905pc}C^{A,L}(S,t)=\sup\limits_{t\leq \tau \leq T_{2}}\mathsf{E}%
_{t}\left[ e^{-r(\tau -t)}\left( S_{\tau }\wedge L(\tau)-K\right) ^{+}%
\right]   \label{problem-1}
\end{equation}%
at time $0\leq t\leq T_{2}$ and given the stock price $S>0$, where the
supremum is taken over all stopping times $\tau \in \lbrack t,T_{2}]$ w.r.t.
the filtration $(\mathcal{F}_{t})_{t\geq 0}$. Throughout the paper, we use
the notation $\mathsf{E}_{t}\left[ \cdot \right] \equiv \mathsf{E}[\,\cdot
\,|S_{t}=S]$ and $\mathsf{E}$ is the expectation under $Q$. We also define
the infinitesimal generator of $S$ as
\begin{equation}
\hspace{6.0905pc}I\!\!L_{S}=(r-\delta )S\frac{\partial }{\partial S}+\frac{%
\sigma ^{2}}{2}S^{2}\frac{\partial ^{2}}{\partial S^{2}}.
\end{equation}

We note that it is not true in general that the value functions of optimal
stopping problems with discontinuous payoffs are continuous and that optimal
stopping rules have the standard form. However, if $L_{1}<L_{2}$ we can
apply Theorem 3.1 iii) of Palczewski and Stettner (2010) to obtain that $%
C^{A,L}$ is continuous and the optimal exercise time is given as%
\begin{equation}
\hspace{6.0678pc}\tau ^{\ast }(t)=\inf \{u\geq
t:C^{A,L}(S_{u},u)=(S_{u}\wedge L(u)-K)^{+}\}\wedge T_{2}.
\end{equation}%
for $t\leq T_{2}$. The case $L_{1}>L_{2}$ is more complicated as the value
function becomes discontinuous at $T_{1}$ from the left and we may have
non-existence of an optimal stopping time. To avoid these issues, we will
consider the left-continuous cap $L(\tau)=L_{1}1_{\tau \leq
T_{1}}+L_{2}1_{T_{1}<\tau \leq T_{2}}$ when $L_{1}>L_{2}$ and show the
connection between two versions (see Section 6).

Let $B=(B(t))_{t\in \lbrack 0,T_{2}]}$ denote the optimal exercise boundary
of an uncapped call option with identical strike $K$ and maturity date $T_{2}
$, i.e., with exercise payoff $\left( S_{\tau }-K\right) ^{+}$. In the case $%
L_{1}<L_{2}$, following Broadie and Detemple (1995), we define $t^{\ast }$
as follows%
\begin{equation*}
\hspace{0pc}\left\{
\begin{array}{l}
t^{\ast }=s,\quad \text{if }B(s)=L(s)\text{ for some }s\in \left[ 0,T_{2}%
\right]  \\
t^{\ast }=0,\quad \text{if }B(\cdot )<L(\cdot)\text{ on }\in \left[ 0,T_{2}%
\right]  \\
t^{\ast }=T_{2},\quad \text{if }B(T_{2}-)>L_{2} \\
t^{\ast }=T_{1},\quad \text{if }L_{1}<B\left( T_{1}\right) <L_{2}.%
\end{array}%
\right.
\end{equation*}%
Thus, $t^{\ast }$ is the unique solution to the equation $B(s)=L(s)$, if
such a solution exists. Otherwise, $t^{\ast }$ is at a corner of the
interval $\left[ 0,T_{2}\right] $. It is clear that for $t\in \lbrack
T_{1},T_{2}]$, the optimal exercise rule follows the strategy for the capped
option $C^{A,L_{2}}$ with single cap $L_{2}$. This problem has been solved
in Broadie and Detemple (1995) and, in Theorems 2.1 and 2.2 below, we
revisit their results.

\begin{theorem}
The optimal exercise boundary of the capped call option \eqref{problem-1} on
$\left[ T_{1},T_{2}\right] $ is $B^{L,2}=B\wedge L_{2}$ so that the optimal
exercise time in \eqref{problem-1} is given by
\begin{equation}
\hspace{6.0225pc}\tau ^{\ast }(t)=\inf \left\{ u\geq t:S_{u}\geq
B^{L,2}(u)\right\} \wedge T_{2}  \label{tau-star}
\end{equation}%
for $t\in \lbrack T_{1},T_{2}]$. The price of the capped call %
\eqref{problem-1} is then
\begin{equation}
\hspace{6.0225pc}C^{A,L_{2}}\left( S,t\right) =\mathsf{E}_{t}\left[
e^{-r(\tau ^{\ast }(t)-t)}\left( S_{\tau ^{\ast }(t)}\wedge L_{2}-K\right)
^{+}\right]
\end{equation}%
for $S>0$ and $t\in \lbrack T_{1},T_{2}]$.
\end{theorem}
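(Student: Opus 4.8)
The plan is to reduce this to a classical result by observing that on $[T_{1},T_{2}]$ the cap is identically $L_{2}$, so problem \eqref{problem-1} restricted to $t\in[T_{1},T_{2}]$ is precisely the single-cap problem $C^{A,L_{2}}$ studied by Broadie and Detemple (1995). Thus I would state at the outset that the claim is essentially a citation together with a short self-contained argument based on dominance, and then carry out that argument.

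First I would set up the comparison between the capped payoff $(S\wedge L_{2}-K)^{+}$ and the uncapped payoff $(S-K)^{+}$. The key structural fact is that $(s\wedge L_{2}-K)^{+}=(s-K)^{+}$ for $s\le L_{2}$ and equals the constant $L_{2}-K$ for $s\ge L_{2}$. Using this, I would show two dominance inequalities for the value function $C^{A,L_{2}}$: on the one hand $C^{A,L_{2}}(S,t)\le C^{A}(S,t)$ where $C^{A}$ is the uncapped American price (monotonicity of the payoff), and on the other hand $C^{A,L_{2}}(S,t)\le L_{2}-K$ for all $S$ whenever $S\ge L_{2}$, because the discounted capped payoff is bounded by $L_{2}-K$ and $r>0$ makes waiting costly; more precisely for $S\ge L_{2}$ immediate exercise yields exactly $L_{2}-K$ while any later exercise yields at most $e^{-r(\tau-t)}(L_{2}-K)\le L_{2}-K$, so it is optimal to stop immediately on $\{S\ge L_{2}\}$. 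This already pins down the exercise region above $L_{2}$.

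Next I would treat the region below $L_{2}$. Here the capped and uncapped payoffs agree, and one shows by the standard dominance/superreplication argument (as in Broadie–Detemple 1995) that it is optimal to exercise the capped option exactly when $S\ge B(t)$, i.e.\ at the uncapped boundary, as long as $B(t)\le L_{2}$; when $B(t)>L_{2}$ the relevant threshold is $L_{2}$ itself by the previous paragraph. Combining, the optimal exercise boundary on $[T_{1},T_{2}]$ is $B^{L,2}(t)=B(t)\wedge L_{2}$, the stopping time is the first hitting time of this boundary capped at $T_{2}$, namely \eqref{tau-star}, and the price is the expectation of the discounted payoff evaluated at $\tau^{\ast}(t)$. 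The final identity for $C^{A,L_{2}}(S,t)$ is then just the statement that $\tau^{\ast}(t)$ is optimal in \eqref{problem-1}, which follows from the general optimal stopping theory since the gain process is continuous and of class (D) on the compact horizon $[T_{1},T_{2}]$.

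The main obstacle — though it is a routine one given the cited literature — is the verification that hitting $B\wedge L_{2}$ is genuinely optimal and not merely admissible: one must check that the value function restricted to the continuation region $\{S<B(t)\wedge L_{2}\}$ strictly exceeds the payoff and solves the appropriate variational inequality with generator $I\!\!L_{S}$, and that the candidate stopping time is finite and attains the supremum. Since Broadie and Detemple (1995) established exactly this for the single-cap problem, I would simply invoke their Theorem, noting that on $[T_{1},T_{2}]$ our problem \emph{is} their problem with cap $L_{2}$ and horizon $T_{2}$, which completes the proof. (An alternative, fully probabilistic verification via the local time-space formula is available and is in fact what the paper develops in Remark 2.6, but for Theorem 2.1 the citation suffices.)
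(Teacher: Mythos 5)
Your proposal is correct and matches the paper's treatment: the paper gives no proof of Theorem 2.1, simply observing that on $[T_{1},T_{2}]$ the problem reduces to the single-cap problem with cap $L_{2}$ and citing Broadie and Detemple (1995), which is exactly your strategy. The dominance inequalities you sketch (immediate exercise above $L_{2}$ via the discounted bound $e^{-r(\tau-t)}(L_{2}-K)\le L_{2}-K$, and $C^{A,L_{2}}\le C^{A}$ pinning down the boundary at $B\wedge L_{2}$ below the cap) are precisely the Broadie--Detemple arguments the paper invokes.
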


\begin{figure}[t]
\begin{center}
\includegraphics[scale=0.55]{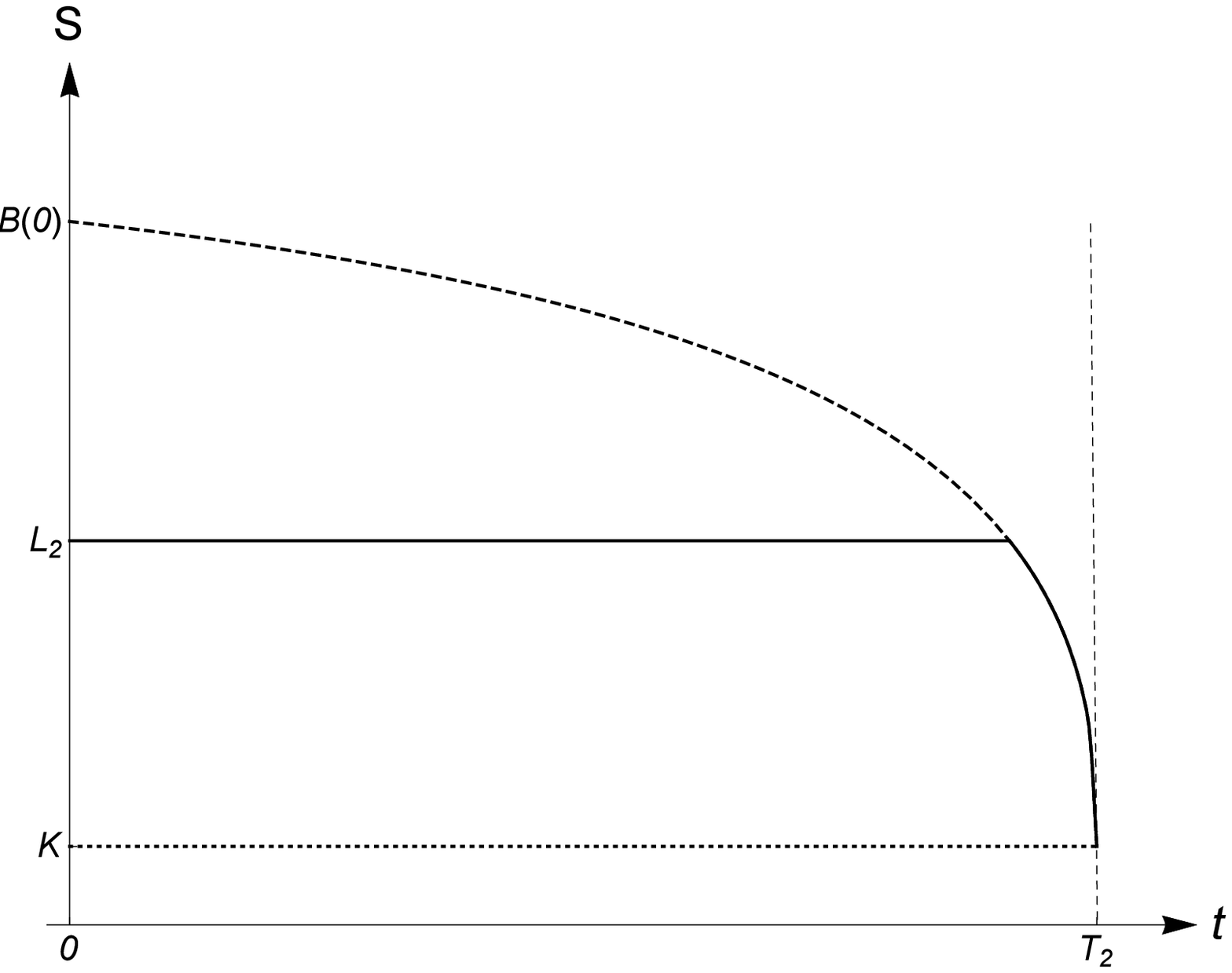}
\end{center}
\par
{}
\par
\leftskip=1.6cm \rightskip=1.6cm {\small \noindent \vspace{-10pt} }
\par
{\small \textbf{Figure 1.} This figure plots the optimal exercise boundary $%
B^{L,2}=B\wedge L_2$ for the single cap option problem. The dashed curve
represents the uncapped exercise boundary $B$. The parameter set is $T_2=4,
K=1, L_2=1.39, r=0.1, \delta=0.1, \sigma=0.3$. }
\par
\vspace{10pt}
\end{figure}

Figure 1 illustrates the optimal exercise boundary $B^{L,2}=B\wedge L_{2}$
for the single cap problem. The optimal exercise region $\mathcal{E}_{2}$
for the problem \eqref{problem-1} when time is restricted to the interval $%
\left[ T_{1},T_{2}\right] $, is then clearly
\begin{equation}
\hspace{6.0451pc}\mathcal{E}_{2}=\left\{ \left( S,t\right) \in \mathbb{R}%
^{+}\times \left[ T_{1},T_{2}\right] :S\geq B^{L,2}(t)\right\} .
\end{equation}%
Using the theorem above, Broadie and Detemple (1995) derived a pricing
formula for $C^{A,L_{2}}\left( S,T_{1}\right) $ by noting that if $t_{\ast
}>T_{1}$ then $S$ either hits $L_{2}$ before $t^{\ast }$ or the capped
option becomes essentially the uncapped call at $t^{\ast }$. If $t^{\ast
}\leq T_{1}$ , then for $S<L_{2}$ the capped option price coincides with the
uncapped option price. An alternative formula was obtained by using a capped
option $C^{ae,L_{2}}$ with automatic exercise at $L_{2}$ as the benchmark
for an early exercise premium representation.

\begin{theorem}
$(i)$ For $t\in[T_1,T_2]$, the price of the capped call option with
discontinuous cap $L$ is equal to the capped call option price with constant
cap $L_2$ so that
\begin{align}  \label{capped-price-1}
C^{A,L}\left( S,t\right) &=C^{A,L_{2}}\left( S,t\right) \\
&=(L_{2}\! -\! K)\mathsf{E}_{t}\left[ e^{-r\left( \tau ^{\ast }(t)-t\right)
}1_{\{\tau ^{\ast }(t)<t^{\ast }\}}\right] +\mathsf{E}_{t}\left[ e^{-r\left(
t^{\ast }\vee t-t\right) }C^{A}\left( S_{t^{\ast }\vee t},t^{\ast}\vee t
\right) 1_{\{\tau^{\ast }(t)\geq t^{\ast }\}}\right]  \notag
\end{align}%
for $S>0$, where $C^{A}$ is the price of a standard American call option
with strike $K$ and maturity $T_{2}$. The second term can be simplified
using the known distribution of $\tau ^{\ast }(t)$ on $[t,t^{\ast }\vee t]$
(i.e., first hitting time of a constant barrier) and is then given in terms
of an integral of $C^{A}$. Lemma \protect\ref{lemma:appendix} in Appendix shows how to compute both expectations. \vspace{6pt} 

$(ii)$ Alternatively, the capped option can be priced as
\begin{equation}
C^{A,L_{2}}\left( S,t\right) =C^{ae,L_{2}}(S,t)+\mathsf{E}_{t}\left[
\int_{t}^{\tau _{L_{2}}}e^{-r(u-t)}(\delta S_{u}\!-\!rK)1_{\{S_{u}\geq
B(u)\}}du\right]  \label{capped-price-2}
\end{equation}%
for $S>0$ and $t\in \lbrack T_{1},T_{2}]$, where $C^{ae,L_{2}}(S,t)=\mathsf{E%
}_{t}\left[ e^{-r\left( \tau _{L_{2}}-t\right) }\left( S_{\tau
_{L_{2}}}\wedge L_{2}-K\right) ^{+}\right] $ is the price of a capped option
$C^{ae,L_{2}}$ with automatic exercise at cap $L_{2}$ and $\tau
_{L_{2}}=\inf \{u\geq t:S_{u}=L_{2}\}\wedge T_{2}$ is the first hitting time
of $L_{2}$ on $[t,T_{2}]$. The option price $C^{ae,L_{2}}$ can be computed
in the closed form.
\end{theorem}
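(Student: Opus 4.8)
The plan is to establish the two representations separately: a dynamic-programming decomposition at the critical time $t^{\ast}$ for part $(i)$, and an application of Dynkin's formula in the local time-space form of Peskir (2005) for part $(ii)$.

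\textbf{Part $(i)$.} The equality $C^{A,L}=C^{A,L_2}$ on $[T_1,T_2]$ is immediate, since $L(\tau)=L_2$ for every $\tau\ge T_1$. For the representation I would start from Theorem 2.1, which gives $C^{A,L_2}(S,t)=\mathsf{E}_t[e^{-r(\tau^{\ast}(t)-t)}(S_{\tau^{\ast}(t)}\wedge L_2-K)^{+}]$ with $\tau^{\ast}(t)=\inf\{u\ge t:S_u\ge B(u)\wedge L_2\}\wedge T_2$, and split the expectation along the events $\{\tau^{\ast}(t)<t^{\ast}\}$ and $\{\tau^{\ast}(t)\ge t^{\ast}\}$. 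On the first event, which forces $t<t^{\ast}$, monotonicity of $B$ gives $B(u)\wedge L_2=L_2$ for all $u\le t^{\ast}$, so $S_{\tau^{\ast}(t)}=L_2$ by path-continuity and the payoff is $L_2-K$; this produces the first term. On the complement one has $\tau^{\ast}(t)=\inf\{u\ge t^{\ast}\vee t:S_u\ge B(u)\wedge L_2\}\wedge T_2$, so the strong Markov property and the optimality of $\tau^{\ast}$ from an arbitrary start give, conditionally at time $t^{\ast}\vee t$, the value $C^{A,L_2}(S_{t^{\ast}\vee t},t^{\ast}\vee t)$. The remaining ingredient is the identity $C^{A,L_2}(S',u)=C^{A}(S',u)$ for $u\ge t^{\ast}$ and $S'\le L_2$: for such $u$ the uncapped boundary satisfies $B(u)\le L_2$, so by Theorem 2.1 the exercise region of $C^{A,L_2}$ on $[t^{\ast},T_2]$ coincides with that of the uncapped call, and exercise occurs at $S=B(u)\le L_2$ where the cap is inactive; on $\{\tau^{\ast}(t)\ge t^{\ast}\}$ one indeed has $S_{t^{\ast}\vee t}\le L_2$. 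Assembling the two pieces yields \eqref{capped-price-1}; the reduction of the two expectations to integrals relies only on the hitting-time law of $S$ at the constant level $L_2$ and the corresponding sub-density of $S_{t^{\ast}\vee t}$ on the no-hit event, and is deferred to the Appendix lemma. (For $t\ge t^{\ast}$ and $S\ge L_2$ the value is simply the immediate-exercise payoff $L_2-K$.)

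\textbf{Part $(ii)$.} Here I would apply Dynkin's formula to $u\mapsto e^{-r(u-t)}C^{A,L_2}(S_u,u)$ on the stochastic interval $[t,\tau_{L_2}]$. In the continuation region $\{S<B(t)\wedge L_2\}$ one has $(I\!\!L_S+\partial_t-r)C^{A,L_2}=0$, while in the exercise region below the cap, $\{B(t)\le S<L_2\}$, Theorem 2.1 gives $C^{A,L_2}(S,t)=S-K$ and hence $(I\!\!L_S+\partial_t-r)C^{A,L_2}=rK-\delta S$; since $S_u<L_2$ for every $u<\tau_{L_2}$, the relevant indicator is exactly $1_{\{S_u\ge B(u)\}}$. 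The boundary term is $C^{A,L_2}(S_{\tau_{L_2}},\tau_{L_2})=(S_{\tau_{L_2}}\wedge L_2-K)^{+}$ — equal to $L_2-K$ on $\{\tau_{L_2}<T_2\}$ because $(L_2,\tau_{L_2})\in\mathcal{E}_2$, and equal to the terminal payoff on $\{\tau_{L_2}=T_2\}$ — so its discounted expectation is precisely $C^{ae,L_2}(S,t)$. Boundedness of $C^{A,L_2}$ on $[t,\tau_{L_2}]$ makes the stochastic integral a true martingale with zero mean, and rearranging the identity gives \eqref{capped-price-2}.

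\textbf{Main obstacle.} The delicate point in $(ii)$ is that $C^{A,L_2}$ is only $C^{1}$, not $C^{2}$, in $S$: across the free boundary $B(t)$ smooth fit holds, so It\^o--Tanaka contributes no local time there, but $C^{A,L_2}$ has a gradient jump across $S=L_2$ (slope $1$ from below, $0$ from above), which would normally produce a local time term at $L_2$ proportional to the jump in the slope. The key observation — and the reason the automatic-exercise option $C^{ae,L_2}$, rather than the European capped option, is the natural benchmark — is that on $[t,\tau_{L_2}]$ the local time of $S$ at the level $L_2$ vanishes, since $S$ reaches $L_2$ only at the terminal time $\tau_{L_2}$. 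Making this rigorous requires the local time-space calculus of Peskir (2005) together with the $C^{1,2}$ smoothness of $C^{A,L_2}$ in the open continuation region and its continuity up to the boundary. A similar but easier verification underlies $(i)$, namely the identification $C^{A,L_2}=C^{A}$ below the cap beyond $t^{\ast}$, which in turn rests on the monotonicity of $B$ and Theorem 2.1.
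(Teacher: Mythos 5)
Your proposal is correct and follows essentially the same route as the paper, which states this theorem as a revisit of Broadie and Detemple (1995) and only sketches the argument: part $(i)$ by splitting at $t^{\ast}$ according to whether $S$ hits $L_2$ first and using $C^{A,L_2}=C^{A}$ below the cap beyond $t^{\ast}$, and part $(ii)$ by an early exercise premium decomposition against the automatic-exercise benchmark $C^{ae,L_2}$. Your observation that the local time at $L_2$ vanishes on $[t,\tau_{L_2}]$ is exactly the reason this benchmark avoids the local-time term that the paper's Theorem 2.3 must carry when the European capped option is used instead.
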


\vspace{6pt}

Formula \eqref{capped-price-1} requires the integration of the standard
American call option price $C^{A}(S,t^{\ast })$ with respect to $S$, whereas
the formula \eqref{capped-price-2} boils down to the integration of the
early exercise premium with respect to the joint distribution of $\tau
_{L_{2}}$ and $S$. In this paper, we will make use of a third version of the
capped option price, which is computationally convenient for our purposes.
It is also based on an early exercise premium decomposition where the
benchmark, in contrast, is given by the European capped option price $%
C^{E,L_{2}}(S,t)=$ $\mathsf{E}_{t}\left[ e^{-r\left( T_{2}-t\right) }\left(
S_{T_{2}}\wedge L_{2}-K\right) ^{+}\right] $. The proof relies on the local
time-space formula on curves (Peskir (2005)). The noteworthy feature of the
American capped option is that the payoff function $\left( S\wedge
L_{2}-K\right) ^{+}$ is not smooth in the exercise region $\mathcal{E}_{2}$
when $S=L_{2}$. This fact implies the presence of a local time term in the
option pricing formula (a similar result is obtained by Qiu (2015) for
American capped strangle options). Figure 2 illustrates the American capped
option price $C^{A,L_{2}}(S,0)$ with the constant cap $L_{2}$. It can be
clearly seen that the local time term brings significant value.

\begin{figure}[t]
\begin{center}
\includegraphics[scale=0.75]{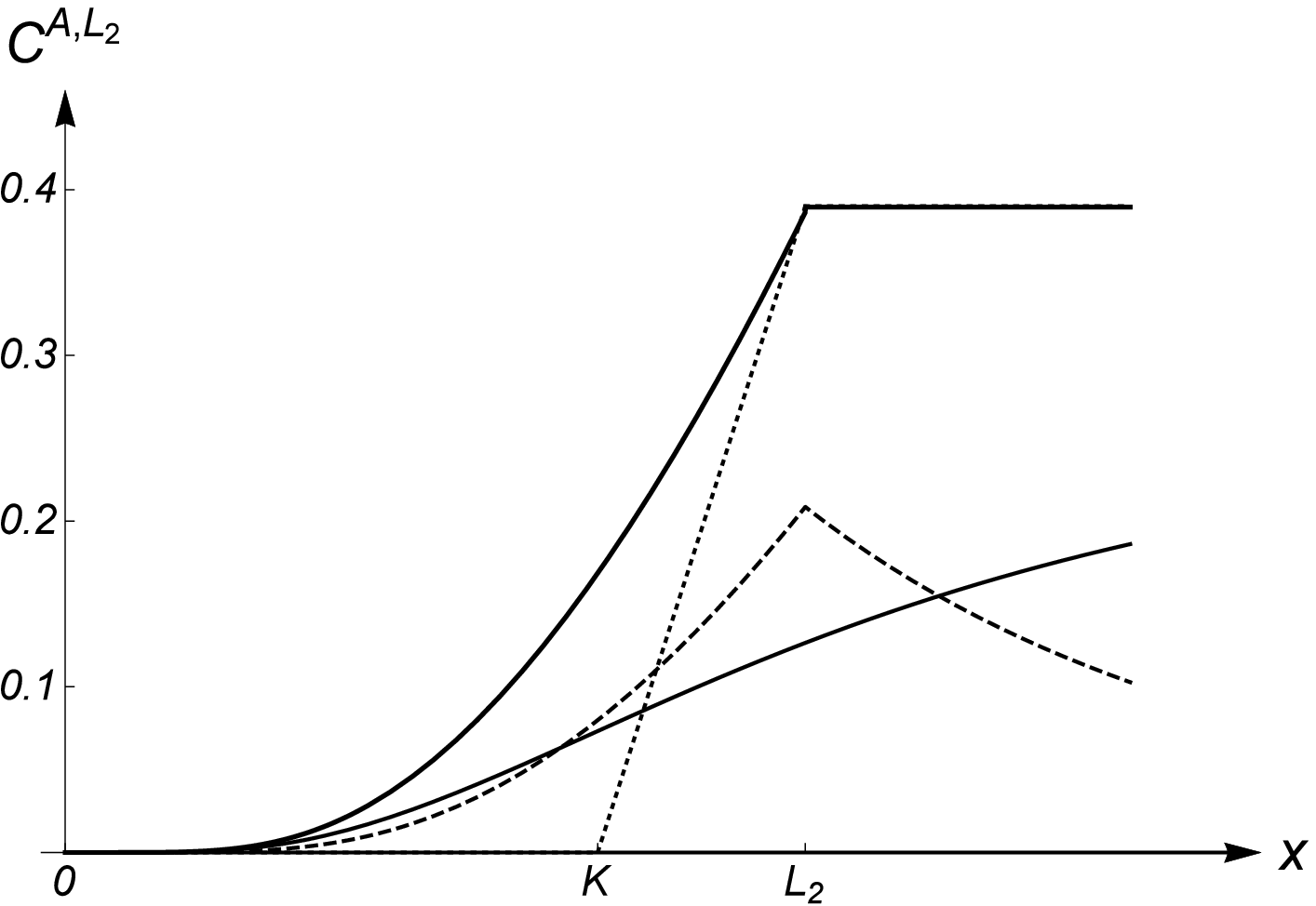}
\end{center}
\par
{}
\par
\leftskip=1.6cm \rightskip=1.6cm {\small \noindent \vspace{-10pt} }
\par
{\small \textbf{Figure 2.} This figure plots the American capped option
price $C^{A,L_{2}}(S,0)$ with the constant cap $L_2$ (thick line). The thin
line represents the European capped option price $C^{E,L_{2}}(S,0)$, the
dashed line represents the local time contribution, the dotted line is the
immediate payoff $(S\wedge L_2-K)^+$. The parameter set is $T_2=4, K=1,
L_2=1.39, r=0.1, \delta=0.1, \sigma=0.3$. It can be seen that the local time
term has substantial value. }
\par
\vspace{10pt}
\end{figure}

\vspace{6pt}


\begin{theorem}
The price of the American capped option $C^{A,L_2}(S,t)$ is
\begin{align}  \label{capped-price-3}
\hspace{3pc} C^{A,L_2}(S,t)=&C^{E,L_2}(S,t) +r(L_2\! -\! K)\mathsf{E}_{t}%
\left[\int_{t}^{T_2} e^{-r(u-t)} 1_{\{S_u\ge L_2\}}du \right] \\
&+\mathsf{E}_{t}\left[\int_{t}^{T_2} e^{-r(u-t)}(\delta S_u\! -\! rK)1_{\{
B(u)\le S_u\le L_2\}}du \right]  \notag \\
&+\frac{1}{2}\mathsf{E}_{t}\left[\int_{t}^{T_2}
e^{-r(u-t)}C^{A,L_{2}}_S(L_2-,u)d\ell_u^{L_2} \right]  \notag
\end{align}
for $t\in[T_1,T_2]$ and $S>0$, where $\ell^{L_2}=(\ell^{L_2}_t)_{t\in
[T_1,T_2]}$ is the local time that the process $S$ spends at $L_2$
\begin{align}  \label{local-time}
\hspace{3pc} \ell^{L_2}_t= Q-\lim_{\varepsilon \downarrow 0}\frac{1}{%
2\varepsilon}\int_{T_1}^{t}
1_{\{L_2-\varepsilon<S_u<L_2+\varepsilon\}}d\left \langle S \right \rangle_u.
\end{align}
\end{theorem}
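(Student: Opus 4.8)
The plan is to apply the change-of-variable formula with local time on curves of Peskir (2005) to the discounted value process $u\mapsto e^{-r(u-t)}C^{A,L_2}(S_u,u)$ on $[t,T_2]$ and then take $\mathsf E_t[\cdot]$. The structural input comes from Theorem 2.1: on $[T_1,T_2]$ the continuation region is $\mathcal C_2=\{(S,u):S<B^{L,2}(u)\}$ with $B^{L,2}=B\wedge L_2$, there $C^{A,L_2}$ is of class $C^{1,2}$ and solves $(\partial_u+I\!\!L_S-r)C^{A,L_2}=0$; on the exercise region $\mathcal E_2$ one has $C^{A,L_2}(S,u)=(S\wedge L_2-K)^+$, which is smooth away from $\{S=L_2\}$; and along the lower segment of the boundary, where $B(u)<L_2$ and $B^{L,2}=B$, smooth fit holds, so $C^{A,L_2}_S$ is continuous there and no local-time term is generated. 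The only surface across which $C^{A,L_2}$ fails to be $C^1$ in $S$ is $\{S=L_2\}$: for $S>L_2$ one sits in $\mathcal E_2$ with the flat payoff $L_2-K$, so $C^{A,L_2}_S(L_2+,u)=0$, whereas $C^{A,L_2}_S(L_2-,u)$ is strictly positive (it equals $1$ where $B(u)<L_2$, and is the slope of the continuation value, in $(0,1)$, where $L_2$ is itself the boundary); this kink is the source of the local-time term.

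First I would record that, using $d\langle S\rangle_u=\sigma^2 S_u^2\,du$ and $\ell^{L_2}$ as defined in \eqref{local-time}, the generalized It\^o formula yields
\begin{align*}
e^{-r(T_2-t)}C^{A,L_2}(S_{T_2},T_2)=&\,C^{A,L_2}(S,t)+\int_t^{T_2}e^{-r(u-t)}(\partial_u+I\!\!L_S-r)C^{A,L_2}(S_u,u)\,du\\
&+\int_t^{T_2}e^{-r(u-t)}\sigma S_u\,C^{A,L_2}_S(S_u,u)\,dW_u\\
&+\frac12\int_t^{T_2}e^{-r(u-t)}\bigl(C^{A,L_2}_S(L_2+,u)-C^{A,L_2}_S(L_2-,u)\bigr)\,d\ell^{L_2}_u.
\end{align*}
Two points need justification: that the failure of $C^{A,L_2}$ to be $C^2$ across $B$ and across $L_2$ does not affect the time integral (because $\{u:S_u=B^{L,2}(u)\}$ and $\{u:S_u=L_2\}$ are $du$-null for the nondegenerate diffusion $S$, and smooth fit across $B$ removes any local-time contribution there); and that the $dW$-integral is a true martingale on $[t,T_2]$ (from the linear growth of $C^{A,L_2}$ and of $S\,C^{A,L_2}_S$ together with the exponential integrability of geometric Brownian motion), so that its $\mathsf E_t$-expectation is zero.

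Next I would evaluate the time-integrand pointwise: it vanishes on $\mathcal C_2$; on $\mathcal E_2\cap\{S_u<L_2\}$, where $C^{A,L_2}=S-K$, it equals $(r-\delta)S_u-r(S_u-K)=-(\delta S_u-rK)$, this region being $\{B(u)\le S_u<L_2\}$ (empty when $B(u)\ge L_2$); and on $\mathcal E_2\cap\{S_u>L_2\}$, where $C^{A,L_2}=L_2-K$, it equals $-r(L_2-K)$. Substituting $C^{A,L_2}_S(L_2+,u)=0$, rearranging the displayed identity to isolate $C^{A,L_2}(S,t)$, taking $\mathsf E_t[\cdot]$, and noting $\mathsf E_t[e^{-r(T_2-t)}(S_{T_2}\wedge L_2-K)^+]=C^{E,L_2}(S,t)$ yields \eqref{capped-price-3}: the minus signs convert $-(\delta S_u-rK)$ and $-r(L_2-K)$ into the early-exercise premia $(\delta S_u-rK)1_{\{B(u)\le S_u\le L_2\}}$ and $r(L_2-K)1_{\{S_u\ge L_2\}}$ (the endpoint sets are $du$-null), and $-\frac12(0-C^{A,L_2}_S(L_2-,u))$ into the local-time term $\frac12 C^{A,L_2}_S(L_2-,u)\,d\ell^{L_2}_u$. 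The main obstacle is this first step — the regularity bookkeeping needed to invoke Peskir's formula, namely establishing smooth fit along $B$, the absence of a local-time term there, and enough regularity and boundedness of $C^{A,L_2}_S(L_2-,\cdot)$ for the local-time integral to be finite; the remaining manipulations are routine.
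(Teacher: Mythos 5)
Your proposal is correct and follows essentially the same route as the paper: apply Peskir's local time-space formula to $u\mapsto e^{-r(u-t)}C^{A,L_2}(S_u,u)$, use the PDE in the continuation region, evaluate the generator on the two pieces of the exercise region ($S-K$ below $L_2$, $L_2-K$ above), kill the local-time contribution at $B$ via smooth fit, use $C^{A,L_2}_S(L_2+,\cdot)=0$ to reduce the kink at $L_2$ to $\tfrac12 C^{A,L_2}_S(L_2-,\cdot)\,d\ell^{L_2}$, and take expectations. The only cosmetic difference is that the paper tracks the local time along the full curve $B^{L,2}=B\wedge L_2$ and then splits it at $t^{\ast}$ into an $L_2$-part and a $B$-part, whereas you treat $\{S=L_2\}$ directly as the single non-smooth surface; the two bookkeepings are equivalent.
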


\begin{proof}
 First, let us define
\begin{align*}
&\cC^{L_2}=\left\{ (S,t) \in \mathbb{R}%
^{+}\times \left[T_{1},T_2\right): S< B^{L,2}(t)\right\}\\
&\cD^{L_2}_1=\left\{ (S,t) \in \mathbb{R}%
^{+}\times \left[T_{1},T_2\right): S> L_2\right\}\\
&\cD^{L_2}_2=\left\{ (S,t) \in \mathbb{R}%
^{+}\times \left[t^*\vee T_{1},T_2\right]: B^{L,2}(t)=B(t)<S< L_2\right\}
\end{align*}
where $\cC^{L_2}$ is the continuation set and $\cD^{L_2}_1$, $\cD^{L_2}_2$ are two parts of the exercise region.
We also define the local time processes $\ell^{B^{L,2}}=(\ell^{B^{L,2}}_t)_{t\in [T_1,T_2]}$ and $\ell^{B}=(\ell^{B}_t)_{t\in [t^*\vee T_1,T_2]}$ at $B^{L,2}$ on $[T_1,T_2]$
and at $B$ on $[t^*\vee T_1,T_2]$, respectively,
\begin{align*}
 &\ell^{B^{L,2}}_t= Q-\lim_{\varepsilon \downarrow 0}\frac{1}{2\varepsilon}\int_{T_1}^{t}1_{\{B^{L,2}(u)-\varepsilon<S_u<B^{L,2}(u)+\varepsilon\}}d\left \langle S \right \rangle_u\\
 &\ell^{B}_t= Q-\lim_{\varepsilon \downarrow 0}\frac{1}{2\varepsilon}\int_{t^*\vee T_1}^{t}1_{\{B(u)-\varepsilon<S_u<B(u)+\varepsilon\}}d\left \langle S \right \rangle_u.
\end{align*}

Now, we verify that $C^{A,L_2}$ and $B^{L,2}$ satisfy
the conditions of the local time-space formula on curves (Peskir (2005), see Theorem 3.1 \& Remark 3.2).
Indeed, (i) $C^{A,L_2}$ is $C^{1,2}$ in $\cC^{L_2}$ (by strong Markov property) and is smooth in both $\cD^{L_2}_1$ and $\cD^{L_2}_2$; (ii) $B^{L,2}$ and $L_2$ are continuous and of bounded variation as the former is non-increasing and the latter is a constant threshold; (iii) $Q(X_t=B^{L,2}(t))=0$ and $Q(X_t=L_2)=0$ for any $t\in [T_1,T_2]$ as $X$ is GBMP;
(iv) $C^{A,L_2}_t \p\L_{S} C^{A,L_2}\m rC^{A,L_2}$ is locally bounded in $\cC^{L_2}\cup \cD^{L_2}_1\cup \cD^{L_2}_2$ as it is zero in $\cC^{L_2}$,
and as $C^{A,L_2}(S,t)=L_2 - K$  in $\cD^{L_2}_1$ and $C^{A,L_2}(S,t)=S - K$  in $\cD^{L_2}_2$;
(v) $C^{A,L_2}$ is convex on $\cC^{L_2}$, $\cD^{L_2}_1$, and $\cD^{L_2}_2$, respectively;
(vi) $t\mapsto C^{A,L_2}_{S}(t,L_2+)=0$ and $t\mapsto C^{A,L_2}_{S}(t,B^{L,2}(t)-)$ are continuous on $[T_1,T_2]$ (the latter holds due to the $C^{1,2}$-property of $C^{A,L_2}$ in $\cC^{L_2}$);
(vii) $t\mapsto C^{A,L_2}_{S}(t,L_2-)=1$ is continuous on $[t^*\vee T_1,T_2]$.

We can then apply Peskir's formula for $e^{-r(T_2-t)}C^{A,L_2}(S_{T_2},T_2)$, implying that
\begin{align} \label{th-proof} \hs{1pc}
e^{-r(T_2-t)}&C^{A,L_2}(S_{T_2},T_2)\\
=\;&C^{A,L_2}(S,t)+ \int_t^{T_2}
e^{-r(u-t)}\left(C^{A,L_2}_t \p\L_{S} C^{A,L_2}
\m rC^{A,L_2}\right)(S_u,u)du+M_{T_2}\nonumber\\
 &+\frac{1}{2}\int_t^{T_2}
e^{-r(u-t)}\left(C^{A,L_2}_S (S_u+,u)-C^{A,L_2}_S (S_u-,u)\right)1_{\{S_u= B^{L,2}(u)\}}d\ell^{B^{L,2}}_u\nonumber\\
 &+\frac{1}{2}\int_{t^*\vee t}^{T_2}
e^{-r(u-t)}\left(C^{A,L_2}_S (S_u+,u)-C^{A,L_2}_S (S_u-,u)\right)1_{\{S_u=L_2\}}d\ell^{L_2}_u\nonumber\\
=\;&C^{A,L_2}(S,t)+M_{T_2}- r(L_2\m K)\int_t^{T_2}
e^{-r(u-t)} 1_{\{S_u\ge L_2\}}du\nonumber\\
&+\int_{t}^{T_2} e^{-r(u-t)}(rK\m\delta S_u)1_{\{ B(u)\le S_u\le L_2\}}du\nonumber\\
 &+\frac{1}{2}\int_t^{t^* \vee t}
e^{-r(u-t)}\left(C^{A,L_2}_S (S_u+,u)-C^{A,L_2}_S (S_u-,u)\right)1_{\{S_u= L_2\}}d\ell^{L_2}_u\nonumber\\
 &+\frac{1}{2}\int_{t^*\vee t}^{T_2}
e^{-r(u-t)}\left(C^{A,L_2}_S (S_u+,u)-C^{A,L_2}_S (S_u-,u)\right)1_{\{S_u= B(u)\}}d\ell^{B}_u\nonumber\\
 &+\frac{1}{2}\int_{t^*\vee t}^{T_2}
e^{-r(u-t)}\left(C^{A,L_2}_S (S_u+,u)-C^{A,L_2}_S (S_u-,u)\right)1_{\{S_u=L_2\}}d\ell^{L_2}_u\nonumber\\
=\;&C^{A,L_2}(S,t)+M_{T_2}- r(L_2\m K)\int_t^{T_2}
e^{-r(u-t)} 1_{\{S_u\ge L_2\}}du\nonumber\\
&+\int_{t}^{T_2} e^{-r(u-t)}(rK\m\delta S_u)1_{\{ B(u)\le S_u\le L_2\}}du
-\frac{1}{2}\int_t^{T_2}
e^{-r(u-t)}C^{A,L_2}_S (L_2-,u)d\ell^{L_2}_u\nonumber
  \end{align}
where $M=(M_t)_{t\ge T_1}$ is the martingale part
and we used that $C^{A,L_2}_t \p\L_{S} C^{A,L_2}
\m rC^{A,L_2}=0$ in $\cC^{L_2}$; $C^{A,L_2}(S,t)=L_2 - K$  in $\cD^{L_2}_1$ and $C^{A,L_2}(S,t)=S - K$  in $\cD^{L_2}_2$; splitting the local time term w.r.t $\ell^{B^{L,2}}$ into two parts; the smooth-fit condition at $B$ on $[t^*\vee T_1,T_2]$;
and finally that $C^{A,L_2}_S(L_2+,\cdot)=0$.
Now
upon taking the expectation $\EE_t$, using the optional sampling theorem for $M$, rearranging terms and noticing that
$C^{A,L_2}(S,T_2)=(S\wedge L_2- K)^+$ for all $S>0$, we get \eqref{capped-price-3}.
\end{proof}

\begin{figure}[t]
\begin{center}
\includegraphics[scale=0.75]{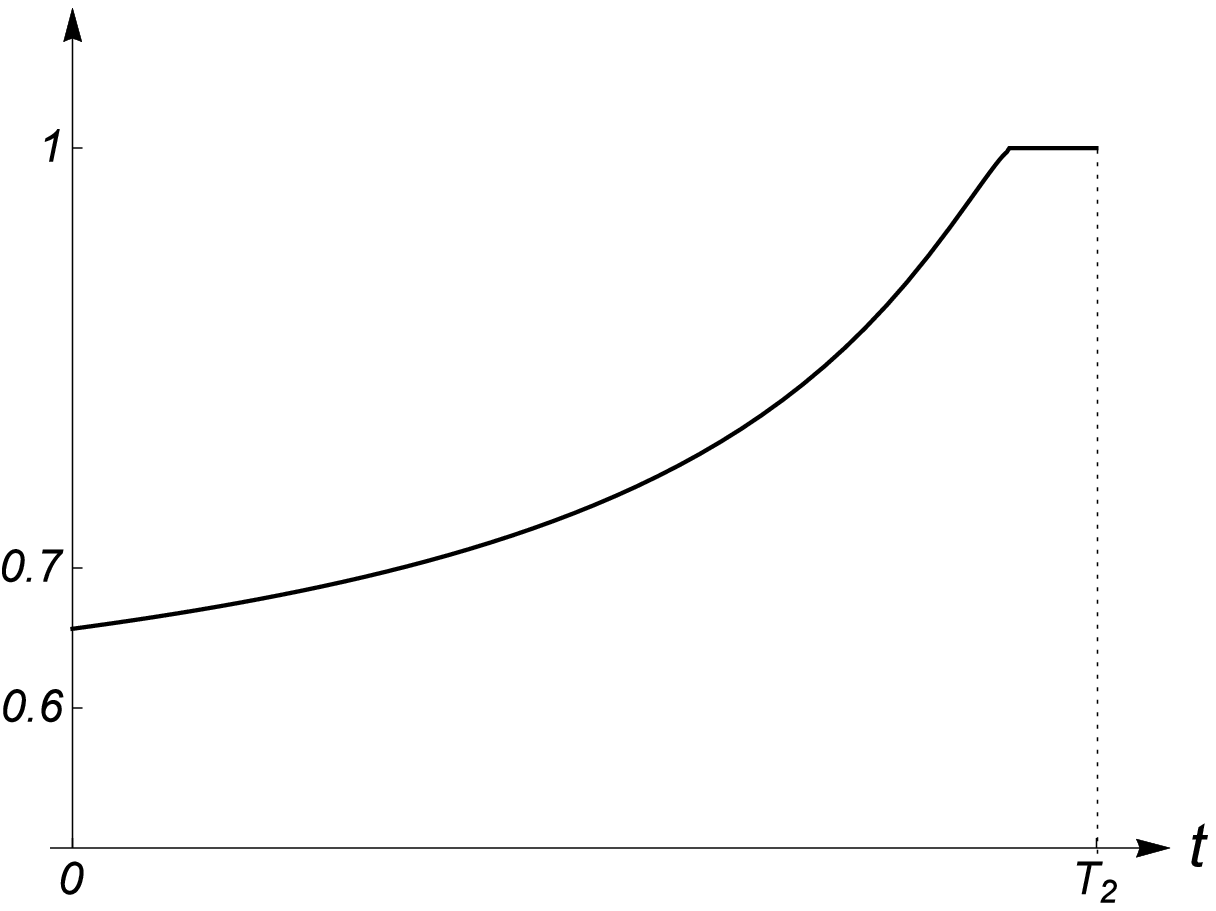}
\end{center}
\par
{}
\par
\leftskip=1.6cm \rightskip=1.6cm {\small \noindent \vspace{-10pt} }
\par
{\small \textbf{Figure 3.} This figure plots the left-hand side derivative $%
C_{S}^{A,L_{2}}(L_{2}-,t)$ of $C^{A,L_{2}}$ at $L_2$ for $t\in [0,T_2)$. The
parameter set is $T_2=4, K=1, L_2=1.39, r=0.1, \delta=0.1, \sigma=0.3$. }
\par
\vspace{10pt}
\end{figure}

\vspace{6pt}


\begin{remark}
Using the definition of the local time and by straightforward calculations
one can obtain that
\begin{equation}
\hspace{3.0112pc}d\mathsf{E}_{t}\left[ \ell _{u}^{L_{2}}\right] =\varphi
\left( -\tfrac{\log {(L_{2}/S)}-(r-\delta -\frac{\sigma ^{2}}{2})(u-t)}{%
\sigma \sqrt{u-t}}\right) \frac{\sigma L_{2}}{\sqrt{u-t}}du
\label{expected-local-time-2}
\end{equation}%
where $\varphi (z)=\frac{1}{\sqrt{2\pi }}e^{-\frac{z^{2}}{2}}$ is the
probability density function of the standard normal law, and thus the last
term in \eqref{capped-price-3} can be rewritten by inserting the expectation
inside so that it becomes a Lebesgue-Stieltjes integral with respect to the
expected local time
\begin{equation*}
\hspace{3.0112pc}\frac{1}{2}%
\int_{t}^{T_{2}}e^{-r(u-t)}C_{S}^{A,L_{2}}(L_{2}-,u)\;\varphi \left( -\tfrac{%
\log {(L_{2}/S)}-(r-\delta -\frac{\sigma ^{2}}{2})(u-t)}{\sigma \sqrt{u-t}}%
\right) \frac{\sigma L_{2}}{\sqrt{u-t}}du.
\end{equation*}
\end{remark}

\begin{remark}
We note that formula \eqref{capped-price-3} for $C^{A,L_2}$ is implicit and
requires the knowledge of $C_{S}^{A,L_{2}}(L_{2}-,u)$, which can be
estimated from the formula \eqref{capped-price-1} for $u\in [T_{1},t^{\ast
}) $ and is given by $C_{S}^{A,L_{2}}(L_{2}-,u)=1$ for $u\in [t^{\ast
},T_{2})$ (see Figure 3). This way of computing $C^{A,L_{2}}$ was found to
be quite efficient as it does not use formula \eqref{capped-price-1}
globally, but only below $L_{2}$, to estimate the left derivative $%
C_{S}^{A,L_{2}}(L_{2}-,u)$.
\end{remark}

\begin{remark}
If we take $t\geq \lbrack t^{\ast }\vee T_{1},T_{2}]$ and $S=B^{L,2}(t)$ in %
\eqref{capped-price-3}, then straightforward manipulations show that $%
B^{L,2} $ solves the following integral equation
\begin{equation*}
\hspace{3.0112pc}B^{L,2}(t)-K=C^{E}(B^{L,2}(t),t)+\mathsf{E}_{t}\left[
\int_{t}^{T_{2}}e^{-r(u-t)}(\delta S_{u}\!-\!rK)1_{\{S_{u}\geq
B^{L,2}(u)\}}du\right]
\end{equation*}%
on $[t^{\ast }\vee T_{1},T_{2}]$ where $C^{E}$ is the price of the
corresponding European uncapped call option. We note that this integral
equation is exactly the same as the integral equation for the American
uncapped call option boundary. This is not surprising as we know from
Theorem 2.1 that $B^{L,2}=B$ on $[t^{\ast }\vee T_{1},T_{2}]$. Therefore,
Theorem 2.3 provides alternative proof of this fact which was proven in
Broadie and Detemple (1995).
\end{remark}

In the next section we turn to the interval $\left[ 0,T_{1}\right] $.

\section{Main Result}

This section contains the main result of the paper, Theorem \ref{th:case1},
which describes the optimal exercise policy on the interval $\left[ 0,T_{1}%
\right] $ for all cases of interest. The complete details, proofs and the
option pricing formulas are given in Sections 4-6.

\begin{theorem}
\label{th:case1} (i) If $L_{1}<\min (L_{2},B(T_{1}))$, then the optimal
exercise time in \eqref{problem-1} is given by
\begin{equation}
\hspace{4.015pc}\tau ^{\ast }(t)=\inf \left\{ u\geq t:(S_{u},u)\in \mathcal{E%
}\right\} \wedge T_{2}  \label{tau-star-1}
\end{equation}%
for $t\in \lbrack 0,T_{1})$, where the exercise region $\mathcal{E}=\mathcal{%
E}_{1}\cup \mathcal{E}_{2}$ consists of two disjoint regions
\begin{align}
\hspace{4.015pc}& \mathcal{E}_{1}=\left\{ \left( S,t\right) \in \mathbb{R}%
^{+}\times \left[ 0,t^{1}\right] :L_{1}\leq S\leq B^{L,1}(t)\right\} \\
& \mathcal{E}_{2}=\left\{ \left( S,t\right) \in \mathbb{R}^{+}\times \left[
T_{1},T_{2}\right] :S\geq B^{L,2}(t)\right\}
\end{align}%
for some $t^{1}\in \lbrack 0,T_{1})$ (specified in the next section). The
exercise boundary $B^{L_{1}}=(B^{L,1}(t))_{t\in \left[ 0,t^{1}\right] }$ is
characterized as
\begin{equation*}
\hspace{4.015pc}B^{L,1}(t)=L_{1}\qquad \text{for}\;t\in \lbrack T_{0},t^{1}]
\end{equation*}%
where $0\leq T_{0}\le t^{1}$ is described in the next section as well, and
on the interval $[0,T_{0}]$ the boundary $B^{L,1}$ solves the integral
equation
\begin{equation}
\hspace{4.015pc}L_{1}-K=C^{E,L}(B^{L,1}(t),t)+\Pi
(B^{L,1}(t),t;B^{L,1}(\,\cdot \,))
\end{equation}%
subject to the boundary condition $B^{L,1}(T_{0}-)=L_{1}$, where the
functions $C^{E,L}(S,t)$ and $\Pi (S,t;B^{L,1}(\,\cdot \,))$ are given in %
\eqref{C-EL} and \eqref{Pi}, respectively. Moreover, if
\begin{equation}
\hspace{4.015pc}t^{0}\equiv T_{1}-\frac{1}{r}\log ((L_{2}-K)/(L_{1}-K))\geq 0
\end{equation}%
then $B^{L,1}(t)=+\infty $ for $t\in \lbrack 0,t^{0}]$ and $%
B^{L,1}(t)<+\infty $ for $t\in(t^0,t^1].$ \vspace{6pt}

(ii) If $B(T_{1})\leq L_{1}<L_{2}$, then the optimal exercise time in %
\eqref{problem-1} is given by
\begin{equation}
\hspace{4.0452pc}\tau ^{\ast }(t)=\inf \left\{ u\geq t:(S_{u},u)\in \mathcal{%
E}\right\} \wedge T_{2}  \label{tau-star-1}
\end{equation}%
for $t\in \lbrack 0,T_{1})$, where the exercise region $\mathcal{E}=\mathcal{%
E}_{1}\cup \mathcal{E}_{2}$ consists of
\begin{align}
\hspace{4.0452pc}\mathcal{E}_{1}=& \left\{ \left( S,t\right) \in \mathbb{R}%
^{+}\times \left[ 0,T_{1}\right) :L_{1}\leq S\leq B^{L,1}(t)\right\}  \\
& \cup \left\{ \left( S,t\right) \in \mathbb{R}^{+}\times \left[ t^{\ast
},T_{1}\right) :B(t)\leq S\leq L_{1}\right\}   \notag \\
\mathcal{E}_{2}=& \left\{ \left( S,t\right) \in \mathbb{R}^{+}\times \left[
T_{1},T_{2}\right] :S\geq B(t)\right\} .
\end{align}%
The exercise boundary $B^{L_{1}}=(B^{L,1}(t))_{t\in \lbrack 0,T_{1})}$ is
characterized as
\begin{equation*}
\hspace{4.0452pc}B^{L,1}(t)=L_{1}\qquad \text{for}\;t\in \lbrack T_{0},T_{1}]
\end{equation*}%
where $0\leq T_{0}<T_{1}$ (specified in Section 5), and on the interval $%
[0,T_{0}]$ the boundary $B^{L,1}$ solves the integral equation
\begin{equation*}
\hspace{4.0452pc}L_{1}-K=C^{E,L}(B^{L,1}(t),t)+\Pi
(B^{L,1}(t),t;B^{L,1}(\,\cdot \,))
\end{equation*}%
subject to the boundary condition $B^{L,1}(T_{0}-)=L_{1}$, where $%
C^{E,L}(S,t)$ and $\Pi (S,t;B^{L,1}(\,\cdot \,))$ are given in %
\eqref{C-EL-case2} and \eqref{Pi-case2}, respectively. Moreover, if $%
t^{0}\equiv T_{1}-\frac{1}{r}\log ((L_{2}-K)/(L_{1}-K))\geq 0$, then $%
B^{L,1}(t)=+\infty $ for $t\in \lbrack 0,t^{0}]$ and $B^{L,1}(t)<+\infty $
for $t\in (t^{0},T_{1}]$. \vspace{6pt}

(iii) If $L_1>L_2$ and the cap is left-continuous, then the optimal exercise
time in \eqref{problem-1} is given by
\begin{equation}  \label{tau-star-1}
\hspace{4pc}\tau ^{\ast }(t)=\inf \left\{ u\geq t:(S_{u},u)\in \mathcal{E}%
\right\} \wedge T_{2}
\end{equation}
for $t\in[0,T_1)$, where the exercise region is
\begin{equation}
\mathcal{E}=\left\{ \left( S,t\right) \in \mathbb{R}^{+}\times \left[ 0,T_{1}%
\right] :S\geq B^{L,1}\left( t\right) \right\} \cup \left\{ \left(
S,t\right) \in \mathbb{R}^{+}\times \left( T_{1},T_{2}\right] :S\geq
B(t)\wedge L_{2}\right\}.
\end{equation}%
The optimal exercise boundary $B^{L,1}=(B^{L,1}(t))_{t\in[0,T_1]}$ is decreasing and continuous
on $[0,T_1)$ and is characterized as the solution to the integral equation
\begin{align}
\hspace{4pc} B^{L,1}(t)\wedge L_1-K=C^{E,L}(
B^{L,1}(t),t)+\Pi(B^{L,1}(t),t;B^{L,1}(\,\cdot\,))
\end{align}
for $t\in[0,T_1)$ subject to the boundary condition $B^{L,1}(T_1-)=\max(rK/%
\delta,L_2\wedge B(T_1))$, where the functions $C^{E,L}(S,t)$ and $\Pi
(S,t;B^{L,1}(\,\cdot \,))$ are given in \eqref{C-EL-case3} and %
\eqref{Pi-case3}, respectively. The value of the boundary at $T_1$ is $%
B^{L,1}(T_1)=L_2\wedge B(T_1)$.
\end{theorem}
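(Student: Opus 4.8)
The plan is to establish the announced structure of the exercise region $\mathcal{E}$ on $[0,T_1)$ by a combination of dominance (comparison) arguments for the value function, a careful treatment of the left-continuous cap at $T_1$, and then an application of the local time-space formula of Peskir (2005) — exactly as in the proof of Theorem 2.3 — to derive the integral equation for $B^{L,1}$. First I would record the easy part: on $(T_1,T_2]$ the problem is the single-cap problem with cap $L_2$, so by Theorem 2.1 the exercise region there is $\{S\ge B(t)\wedge L_2\}$; this gives the second set in $\mathcal{E}$ and fixes the terminal data $C^{A,L}(S,T_1)=C^{A,L_2}(S,T_1)$, whose value and left-derivative at $L_2$ are already understood via Theorems 2.2--2.3 and Remark 2.5. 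The left-continuity of the cap means that at $T_1$ an exerciser still receives $(S\wedge L_1-K)^+$, so the payoff seen on $[0,T_1]$ is $(S\wedge L_1-K)^+$ with continuation value $C^{A,L_2}(\cdot,T_1)$ attached at $T_1$; I would remark (as the text promises) that the price for $t<T_1$ is insensitive to which version of the cap is used, since the single instant $\{T_1\}$ is Lebesgue-null for the running discount and the jump only affects the boundary case $S=L_1$ at exactly $t=T_1$.

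Next I would prove that immediate exercise is optimal whenever $S\ge B^{L,1}(t)$, and in particular always above $L_1$. The key dominance observation, valid because $L_1>L_2$, is that holding the option past any exercise opportunity with $S$ large can never gain: above $L_1$ the discounted payoff $e^{-ru}(L_1-K)$ is a strict supermartingale-type quantity relative to the best one can hope to collect after $T_1$ (where the cap drops to $L_2<L_1$), so waiting strictly loses the discount plus the cap reduction; this forces $[0,T_1)\times[L_1,\infty)\subset\mathcal{E}$. Below $L_1$ the option behaves locally like an uncapped American call with the modified maturity data at $T_1$, so by the standard argument (monotonicity of $S\mapsto C^{A,L}(S,t)-(S\wedge L_1-K)$ and convexity) the exercise region in the strip $\{S<L_1\}$ is of the form $\{S\ge B^{L,1}(t)\}$ for a single lower threshold $B^{L,1}(t)\le L_1$; continuity and monotonicity (non-increasing in $t$) of $B^{L,1}$ on $[0,T_1)$ follow from the usual comparison arguments together with the fact that the post-$T_1$ continuation value is itself monotone in the right way. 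The boundary behaviour as $t\uparrow T_1$ is pinned down by matching with $\mathcal{E}_2$: the limit must be $\max(rK/\delta,\;L_2\wedge B(T_1))$, the $rK/\delta$ coming from the classical requirement that $\mathcal{L}_S(S-K)-r(S-K)=\delta S-rK\ge0$ on the exercise set, and the value $B^{L,1}(T_1)=L_2\wedge B(T_1)$ being read off from Theorem 2.1; I would also note $B^{L,1}$ is bounded since it is trapped below $L_1$.

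Finally, with the geometry of $\mathcal{E}$ in hand, I would derive the integral equation by applying the local time-space formula to $e^{-r(u-t)}C^{A,L}(S_u,u)$ on $[t,T_1]$, decomposing $[0,T_1)\times\mathbb{R}^+$ into the continuation region $\{B^{L,1}(t)\wedge L_1>S\}$ — wait, more precisely into $\{S<B^{L,1}(t)\}$ (continuation), the band $\{B^{L,1}(t)\le S\le L_1\}$ where $C^{A,L}=S-K$, and $\{S>L_1\}$ where $C^{A,L}=L_1-K$ — and checking the seven verification conditions (smoothness on each piece, bounded variation and continuity of the curves $B^{L,1}$ and $L_1$, the GBM hitting-probability-zero conditions, local boundedness of $(\partial_t+\mathcal{L}_S-r)C^{A,L}$, convexity on each piece, and continuity of the relevant one-sided $S$-derivatives, including the smooth-fit $C^{A,L}_S(B^{L,1}(t)-,t)=1$) exactly as in the proof of Theorem 2.3. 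On the exercise side $(\partial_t+\mathcal{L}_S-r)C^{A,L}$ equals $\delta S-rK$ on the band and $-r(L_1-K)$ above $L_1$; the local time at $L_1$ enters through the jump $C^{A,L}_S(L_1+,\cdot)-C^{A,L}_S(L_1-,\cdot)=0-1=-1$, producing a term $-\tfrac12\int e^{-r(u-t)}\,d\ell^{L_1}_u$ which I would absorb into the definition of $\Pi(S,t;B^{L,1}(\cdot))$ in \eqref{Pi-case3}, together with the benchmark European capped price $C^{E,L}(S,t)$ in \eqref{C-EL-case3} (capped at $L_1$ on $[t,T_1)$ then at $L_2$ on $[T_1,T_2]$). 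Taking $\mathsf{E}_t$, using optional sampling on the martingale part, evaluating at $S=B^{L,1}(t)$ where $C^{A,L}(B^{L,1}(t),t)=B^{L,1}(t)\wedge L_1-K$, and rearranging yields the stated equation; the boundary condition $B^{L,1}(T_1-)=\max(rK/\delta,L_2\wedge B(T_1))$ selects the correct solution. The main obstacle I anticipate is the verification step around $S=L_1$ at times near $T_1$ — establishing that $C^{A,L}$ is genuinely $C^{1,2}$ up to the curve $B^{L,1}$ from the continuation side (smooth fit) and that the one-sided derivatives at $L_1$ are continuous in $t$ across $T_1$ despite the cap jump — since the left-continuity convention is doing subtle work there; this is where I would spend most of the effort, mirroring but extending the argument used for Theorem 2.3.
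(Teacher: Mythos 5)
Your overall route for part (iii) is the same as the paper's: reduce to the terminal data $G(S,T_1)=\max(S\wedge L_1-K,\,C^{A,L_2}(S,T_1))$ using the known solution on $(T_1,T_2]$; use dominance to show exercise is optimal for $S\ge L_1$ and up-connectedness below $L_1$; deduce from the monotonicity of the cap that $B^{L,1}$ is non-increasing, hence continuous with smooth fit; then apply the local time-space formula and evaluate at $S=B^{L,1}(t)$ to obtain the integral equation. Two steps, however, would fail as written.

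First, your local-time term at $L_1$ is wrong on part of the time interval. You assert $C^{A,L}_S(L_1+,\cdot)-C^{A,L}_S(L_1-,\cdot)=0-1=-1$ throughout and would absorb $-\tfrac12\int e^{-r(u-t)}\,d\ell^{L_1}_u$ into $\Pi$. But $C^{A,L}_S(L_1-,u)=1$ only for $u\in[t^*_1,T_1)$, where $t^*_1$ is the last time at which $B^{L,1}(t)=L_1$: only there is the band $\{B^{L,1}(u)\le S< L_1\}$ nonempty, so that $C^{A,L}=S-K$ just below $L_1$. For $u<t^*_1$ the boundary coincides with $L_1$, the region immediately below $L_1$ is continuation region, and the left derivative at $L_1$ is an unknown quantity; smooth fit is only established at $B^{L,1}(t)$ for $t\in(t^*_1,T_1)$, i.e., where $B^{L,1}<L_1$, via the monotonicity of the boundary. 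This is exactly why \eqref{Pi-case3} carries the factor $C^{A,L}_S(L_1-,u)$ rather than the constant $1$; hard-coding the coefficient gives an incorrect premium whenever $t<t^*_1$.

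Second, you dispose of the regularity issue created by the time-discontinuous payoff with a ``Lebesgue-null instant'' remark. That does not establish that $C^{A,L}$ is continuous on $\mathbb{R}^+\times[0,T_1]$ nor that the first entry time into $\mathcal{E}$ attains the supremum --- precisely the properties that can fail for discontinuous payoffs (indeed the paper shows the right-continuous version of this cap may admit no optimal stopping time). The paper obtains both from Palczewski and Stettner (2010), using that for the left-continuous cap the running payoff on $[0,T_1)$ is dominated by the terminal value $G(\cdot,T_1)$; you need to invoke or reprove this before identifying $\tau^*$ with the hitting time of $\mathcal{E}$. (Minor: the generator identity should read $\mathbb{L}_S(S-K)-r(S-K)=rK-\delta S$, so the exercise requirement is $rK-\delta S\le 0$; your conclusion $S\ge rK/\delta$ is unaffected.)
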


Theorem \ref{th:case1} shows that the structure of the optimal exercise
region over $\left[ 0,T_{1}\right] $ depends on the relative positions of
the cap components $L_{1},L_{2}$ and of the boundary $B\left( T_{1}\right) $
of the uncapped option at $T_{1}$. The first case is when $L_{1}<L_{2}\wedge
B\left( T_{1}\right) $. In this instance, the exercise region involves the
times $0\leq t^{0}\vee 0\le T_{0}\le t^{1}<T_{1}$. In the interval $\left(
t^{1},T_{1}\right) $, immediate exercise is always suboptimal: $\left(
S,t\right) \notin \mathcal{E}$ for all $S\in \mathbb{R}^{+}$ and $t\in
(t^{1},T_{1})$. In $\left[ T_{0},t^{1}\right] $, immediate exercise is
optimal along the line $S=L_{1}$: for $t\in \left[ T_{0},t^{1}\right] $, $%
\left( S,t\right) \in \mathcal{E}_{1}$ if and only if $S=L_{1}$. In $\left(
t^{0},T_{0}\right) $, it is optimal in the region between the boundaries $%
L_{1}$ and $B^{L,1}=\left( B^{L,1}\left( t\right) \right) _{t\in \left(
t^{0},T_{0}\right) }$: for $t\in \left( t^{0},T_{0}\right) $, $\left(
S,t\right) \in \mathcal{E}_{1}$ if and only if $L_{1}\leq S\leq
B^{L,1}\left( t\right) $. Finally, in $[0,t^{0})$, it is optimal to exercise
immediately above the cap $L_{1}$: for $t\in \lbrack 0,t^{0})$, $\left(
S,t\right) \in \mathcal{E}_{1}$ if and only if $S\geq L_{1}$. The most
noteworthy aspect of the exercise region is that it splits into two parts $%
\mathcal{E}_{1}$ and $\mathcal{E}_{2}$, separated by a region, $\left\{
\left( S,t\right) :S\in \mathbb{R}^{+},t\in (t^{1},T_{1})\right\} $, in
which it is optimal to continue. Figure 6 displays the exercise region $%
\mathcal{E}=\mathcal{E}_{1}\cup \mathcal{E}_{2}$ and the times $\left(
t^{0},T_{0},t^{1}\right) $ for this case.

The second case is when $B(T_{1})\leq L_{1}<L_{2}$. In this instance $%
t^{1}=T_{1}$ and the exercise region involves the times $0\leq t^{0}\vee 0\leq
T_{0}< T_{1}$. Thus, the continuation subregion separating the sets $%
\mathcal{E}_{1}$ and $\mathcal{E}_{2}$ has disappeared. But the exercise
region $\mathcal{E}_{1}$ now includes the set $\left\{ \left( S,t\right) \in
\mathbb{R}^{+}\times \left[ t^{\ast },T_{1}\right) :B(t)\leq S\leq
L_{1}\right\} $ below the cap $L_{1}$. Figure 9 illustrates this structure
of $\mathcal{E}=$ $\mathcal{E}_{1}\cup \mathcal{E}_{2}$.

The last case is when $L_{1}>L_{2}$ and the cap is left-continuous, i.e., at
time $T_{1}$ the cap is $L(T_{1})=L_{1}$. In this instance immediate
exercise is always optimal above the cap $L_{1}$ for $t\in \left[ 0,T_{1}%
\right] $. Moreover, there exists a lower boundary $B^{L}=(B^{L}(t))_{t\in
\lbrack 0,T_{1}]}$ such that it is optimal to exercise at $\left( S,t\right)
$ if $B^{L}(t)\leq S\leq L_{1}$ and $t\in \lbrack 0,T_{1}]$. Figure 10 shows
$\mathcal{E}$ for this case.

Intuition for the various components of the exercise region can be provided
as follows. The time $t^{0}$, which appears in cases (i) and (ii), is the
unique solution to the equation $e^{-r\left( T_{1}-t^{0}\right) }\left(
L_{2}-K\right) =L_{1}-K$. If $t^0\ge 0$, then the left-hand side of this equation is the value
at $t^{0}\in \left[ 0,T_{1}\right] $ from receiving the maximum payoff on
the contract at the earliest possible date, $T_{1}$. At times $t$ prior to $%
t^{0} $, immediate exercise above the cap $L_{1}$ pays off $%
L_{1}-K>e^{-r\left( T_{1}-t\right) }\left( L_{2}-K\right) $, hence cannot be
dominated by any waiting policy.

The time $t^{1}$, which plays a major role in case (i), \ is determined by
the policy of waiting until the jump time $T_{1}$ before deciding to
exercise or not. The value of this particular waiting policy is the value $%
C^{w}\left( S,t^{1}\right) $ of a European compound option with maturity
date $T_{1}$ paying the American capped call option price at time $T_{1}$.
As $L_{1}<\min (L_{2},B(T_{1}))$, the payoff at $\left( L_{1},T_{1}\right) $
is $C^{A,L_{2}}\left( L_{1},T_{1}\right) >L_{1}-K$. Hence, immediate
exercise is suboptimal at the point $\left( L_{1},T_{1}\right) $. Moving
back in time along the cap $L_{1}$, continuously changes the value $%
C^{w}\left( L_{1},t\right) $ of the compound contract, because of
discounting and uncertainty in the underlying asset price, but the immediate
exercise payoff remains the same. The time $t^{1}$ is the largest time in $%
\left[ t^{0},T_{1}\right) $ at which $C^{w}\left( L_{1},t\right) =L_{1}-K$.
For any $t\in \left( t^{1},T_{1}\right) $, $C^{w}\left( L_{1},t\right)
>L_{1}-K$ and immediate exercise at $\left( L_{1},t\right) $ is dominated.
Moreover, because $C^{w}\left( S,t\right) $ is an increasing function of $S$%
, $C^{w}\left( S,t\right) >L_{1}-K$ and immediate exercise at $\left(
S,t\right) $ is also dominated for $S>L_{1}$ and $t\in \left(
t^{1},T_{1}\right) $. That immediate exercise is dominated for $S<L_{1}$ and
$t\in \left( t^{1},T_{1}\right) $ follows from the fact that a capped option
with cap $L_{1}$ and maturity date $T_{1}$ would not be exercised at this
point.

Time $t^{1}$ is the largest time in $\left[ t^{0},T_{1}\right) $ at which it
becomes optimal to exercise along the cap. As $C^{w}\left( \cdot
,t^{t}\right) $ is strictly increasing and the exercise payoff of the capped
option does not change above the cap, $C^{w}\left( S,t^{1}\right) >L_{1}-K$
for $S>L_{1}$ and immediate exercise remains suboptimal at $\left(
S,t^{1}\right) $ for $S>L_{1}$. It is also suboptimal at $\left(
S,t^{1}\right) $ for $S<L_{1}$ for the reasons indicated above. This
knife-edge property at $S=L_{1}$ persists as one moves further back in time
along the cap. The time $T_{0}$ marks the end of that stretch. For $t<T_{0}$%
, immediate exercise can become optimal above the cap, specifically for $%
L_{1}\leq S\leq B^{L,1}(t)$ where $B^{L,1}=\left( B^{L,1}\left( t\right)
\right) _{t\in \left( t^{0},T_{0}\right) }$ is an endogenous boundary.

Intuition for case (iii) is as follows. As we impose left-continuity of the
cap, we know the optimal exercise strategy on $(T_{1},T_{2}]$. At time $%
t=T_{1}$, we compare the immediate payoff $S\wedge L_{1}-K$ with the
continuation value, i.e., the price of the capped option $%
C^{A,L_{2}}(S,T_{1})$ (with cap $L_{2}$). Clearly, immediate exercise is
optimal if and only if $S$ is above the critical threshold $%
b^{L,1}(T_{1})=L_{2}\wedge B(T_{1})$. It then remains to understand the
exercise policy on $[0,T_{1})$. Immediate exercise above $L_{1}$ is always
optimal because $L_{1}-K$ is the maximum payoff that can be attained. As the
cap is non-increasing in this case, we have that there exists a boundary $%
B^{L,1}$ on $[0,T_{1})$ below the cap $L_{1}$ such that it is optimal to
exercise if $B^{L,1}(t)\leq S\leq L_{1}$. Furthermore, one should not
exercise if $S<rK/\delta \wedge L_{1}$, because the local benefits of
waiting to exercise are positive in this region. Near $T_{1}$, it is optimal
to exercise the option when $S\geq \max (rK/\delta ,L_{2}\wedge B(T_{1}))$
as these benefits are negative. Therefore, there is a part of the exercise
region below $L_{1}$ and prior to $T_{1}$ if and only if $rK/\delta <L_{1}$.


\begin{remark}
(Non dividend-paying asset) If $\delta =0$, then it is never optimal to
exercise the American uncapped call option early, i.e., $B=+\infty $ on $%
[0,T_{2}]$. This immediately excludes the second case. The exercise region
on $[T_{1},T_{2}]$ is simply the set above $L_{2}$. The structure of $%
\mathcal{E}_{1}$ on $[0,T_{1})$, in the first and third cases, has the form
described above. The condition $\delta =0$ does not simplify it. To
summarize, the only qualitative change in this case is that the exercise
boundary on $[T_{1},T_{2}]$ becomes the constant cap $L_{2}$.
\end{remark}

\begin{remark}
(Constant cap) If $L_{1}=L_{2}$, then the contract simply becomes the
American capped option with single cap $L=L_{1}=L_{2}$ from Broadie and
Detemple (1995) (see the results of Section 2).
\end{remark}

\vspace{6pt}

\section{Case $L_1<\min (L_2,B(T_1))$}

In this section we assume that $L_{1}<B(T_{1})$ or, equivalently, that $%
t^{\ast }\geq T_{1}$, and the goal is to establish the statements of Theorem %
\ref{th:case1}$(i)$ above. The complete analysis of this case is divided
into steps 1-8 below.

\begin{proof}[Proof of Theorem \protect\ref{th:case1}$(i)$]
\vs{2pt}

1. Let us define the function%
\begin{equation*}
C^{w}\left( S,t\right) \equiv \mathsf{E}_{t}\left[ e^{-r\left(
T_{1}-t\right) }C^{A,L_{2}}\left( S_{T_{1}},T_{1}\right) \right] =\mathsf{E}%
_{t}\left[ e^{-r\left( \tau ^{*}(T_1)-t\right) }\left( S_{\tau ^{*}(T_1)}\wedge
L_{2}-K\right) ^{+}\right]
\end{equation*}%
for $t<T_{1}$ and $S>0$, where $\tau^{*}(T_1)$ is defined in \eqref{tau-star} and in the second
equality we used Theorem 2.1. The function $C^{w}\left( S,t\right) $ is the
value of a European derivative with maturity date $T_{1}$ and payoff given
by the American capped call option price with cap $L_{2}$. It can also be
viewed as an American capped call option with delayed exercise provision
restricted to $\left[ T_{1},T_{2}\right] $. For the holder of the original
capped option \eqref{problem-1}, $C^{w}\left( S,t\right) $ is the value of
the policy which consists in waiting until at least $T_{1}$ before deciding
to exercise or not. This waiting policy is always feasible and its value
provides a lower bound for $C^{A,L}$.
By using the
Ito-Tanaka formula for $C^{A,L_2}$ we can rewrite $C^w$ as
\begin{align}\label{Cw}
C^{w}(S,t)=& \mathsf{E}_{t}\left[ e^{-r(T_{1}-t)}C^{A,L_{2}}(S_{T_{1}},T_{1})%
\right]  \\
=& C^{A,L_{2}}(S,t)-r(L_{2}-K)\mathsf{E}_{t}\left[\int_{t}^{T_{1}}e^{-r(u-t)}
1_{\{S_{u}\geq L_{2})\}} du \right] \notag \\
& -\frac{1}{2}\mathsf{E}%
_{t}\left[\int_{t}^{T_{1}}e^{-r(u-t)}C_{S}^{A,L_{2}}(L_{2}-,u)d\ell _{u}^{L_{2}}\right]  \notag\\
=& C^{E,L_{2}}(S,t)+r(L_{2}-K)\mathsf{E}_{t}\left[\int_{T_1}^{T_{2}}e^{-r(u-t)}
1_{\{S_{u}\geq L_{2})\}} du \right] \notag \\
&+\mathsf{E}_{t}\left[\int_{t}^{T_2} e^{-r(u-t)}(\delta S_u\m rK)1_{\{
B(u)\le S_u\le L_2\}}du \right]  \notag \\
& +\frac{1}{2}\mathsf{E}%
_{t}\left[\int_{T_1}^{T_{2}}e^{-r(u-t)}C_{S}^{A,L_{2}}(L_{2}-,u)d\ell _{u}^{L_{2}}\right]  \notag
\end{align}%
for $S>0$ and $t<T_1$. We then use \eqref{expected-local-time-2} to calculate $d\EE_t[\ell^{L_2}_u]$, the derivative
$C^{A,L_2}_S(L_2-,u)$ is estimated using \eqref{capped-price-1}, and $\mathsf{E}_{t}[1_{\{S_{u}\geq L_{2}\}}]$,
$\mathsf{E}_{t}[1_{\{B(u)\le S_u\le L_2\}}]$,
$\mathsf{E}_{t}[S_u 1_{\{B(u)\le S_u\le L_2\}}]$
can be written in terms of the cumulative distribution function $\Phi(\;\cdot\;)$ of the standard normal law.
\vspace{2pt}

As $L_{1}<B(T_{1})$, we have that $C^{w}\left( L_{1},T_{1}\right)
>L_{1}-K $. Then the unique solution $B^{w}(T_1)$ to
\begin{equation}
\hspace{4pc} C^{w}\left( B^{w}(T_1),T_{1}\right) =L_{1}-K
\end{equation}
is such that $B^{w}(T_1)<L_{1}$. Similarly, for $t\in \left[ 0,T_{1}\right)$%
, let $B^{w}(t)$ be the unique solution to%
\begin{equation}  \label{B1}
\hspace{4pc} C^{w}\left( B^{w}(t),t\right) =L_{1}-K
\end{equation}
and note that%
\begin{equation*}
\hspace{4pc} \left\{
\begin{array}{cc}
C^{w}\left( S,t\right) >L_{1}-K & \text{for }S>B^{w}(t) \\
C^{w}\left( S,t\right) <L_{1}-K & \text{for }S<B^{w}(t).%
\end{array}%
\right.
\end{equation*}%
The boundary $B^{w}=(B^{w}(t))_{t\in[0,T_1]}$ represents the continuous curve at which the waiting
policy associated with $C^w$ described above, has the same value as $L_1-K$, that is the largest possible payoff on $[0,T_1)$. The
following result is an obvious consequence.

\begin{lemma}
\label{ls} Immediate exercise of the capped option is suboptimal if $t\in %
\left[ 0,T_{1}\right] $ and $S>B^{w}(t)$ or $S<L_{1}$.
\end{lemma}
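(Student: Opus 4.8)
The plan is to exploit the fact that $C^w$ is, by construction, the value of a specific feasible policy for the holder of the capped option \eqref{problem-1} — the policy of waiting until at least $T_1$ before any exercise decision — so that $C^{A,L}(S,t)\ge C^w(S,t)$ for all $t\in[0,T_1)$ and $S>0$. The idea is that immediate exercise at $(S,t)$ pays exactly $(S\wedge L_1-K)^+$, and this is suboptimal as soon as $C^w(S,t)>(S\wedge L_1-K)^+$, because then the waiting policy strictly dominates. So the proof reduces to checking this strict inequality in the two claimed regions: $S>B^w(t)$ and $S<L_1$.

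For the region $S>B^w(t)$: here the immediate payoff is $(S\wedge L_1 - K)^+$, which is at most $L_1-K$ since $S$ could exceed $L_1$ and is capped at $L_1$ on $[0,T_1)$. First I would note that $C^w(S,t)$ is increasing in $S$ (it is the expectation of the increasing function $C^{A,L_2}(\cdot,T_1)$ composed with the monotone flow $S\mapsto S_{T_1}$), so for $S>B^w(t)$ the defining relation \eqref{B1} gives $C^w(S,t)>L_1-K\ge (S\wedge L_1-K)^+$. Hence immediate exercise is dominated by the waiting policy. This disposes of the first region regardless of whether $S$ is above or below $L_1$.

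For the region $S<L_1$: here I would compare against the value of an even more restrictive auxiliary policy, namely holding a capped call with cap $L_1$ and maturity $T_1$ (and then continuing optimally past $T_1$ if desired — but even the $T_1$-maturity capped call suffices). Since $L_1>K$ and $S<L_1$, a standard dominance/no-early-exercise argument for a capped call below its cap shows that such a contract is worth strictly more than its immediate exercise payoff $(S-K)^+=(S\wedge L_1-K)^+$; this in turn is bounded above by $C^w(S,t)$ (waiting until $T_1$ with cap $L_2$ is at least as valuable, since $L_2>K$ and the holder retains the option to exercise at $T_1$). Alternatively, and more directly in the spirit of the paper, one can simply observe that $C^w(S,t)\ge C^{E,L_2}(S,t)$ by dropping the nonnegative premium terms in \eqref{Cw}, and then verify $C^{E,L_2}(S,t)>(S-K)^+$ for $S<L_1<L_2$ by the strict convexity/positivity of time value of the European capped call strictly inside its cap. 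Either way, immediate exercise is strictly suboptimal.

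The only mild subtlety — and the step I would treat most carefully — is the boundary behavior and the monotonicity of $C^w$ in $S$, since one must be sure the inequalities are strict and that $B^w(t)$ is genuinely the unique crossing point (so that "$S>B^w(t)$" is the correct description of the dominated region). Both follow from the strict monotonicity and continuity of $S\mapsto C^w(S,t)$ together with $C^w(0+,t)=e^{-r(T_1-t)}C^{A,L_2}(0+,T_1)=0<L_1-K$ and $C^w(S,t)\to L_2-K>L_1-K$ as $S\to\infty$, which were already recorded in the text preceding the lemma; so this is bookkeeping rather than a real obstacle.
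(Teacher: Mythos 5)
Your argument for the region $S>B^{w}(t)$ is correct and coincides with the paper's: $C^{w}$ is the value of the feasible wait-until-$T_{1}$ policy, it is increasing in $S$, and \eqref{B1} then gives $C^{w}(S,t)>L_{1}-K\geq (S\wedge L_{1}-K)^{+}$, so immediate exercise is dominated.

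The region $S<L_{1}$ is where there is a genuine gap: both of your proposed routes rest on the claim that a capped call sitting strictly below its cap is worth strictly more than $(S-K)^{+}$, and that claim is false. For the American capped call with cap $L_{1}$ and maturity $T_{1}$, the optimal exercise boundary is the minimum of $L_1$ and the corresponding uncapped boundary (Theorem 2.1), so on the nonempty set where that uncapped boundary lies below $S<L_{1}$ (which occurs near $T_1$, since the uncapped boundary tends to $\max(K,rK/\delta)$, typically below $L_{1}$) the auxiliary contract is worth exactly $S-K$, not strictly more. For the European benchmark, $C^{E,L_{2}}(S,t)\leq e^{-r(T_{2}-t)}(L_{2}-K)$, so whenever $e^{-r(T_{2}-t)}(L_{2}-K)<L_{1}-K$ --- essentially the regime in which the paper's time $t^{0}$ is nonnegative --- one gets $C^{E,L_{2}}(S,t)<S-K$ for $S$ just below $L_{1}$; capped calls have no ``positive time value inside the cap'' property because the payoff is concave there. (The intermediate bound $V\le C^{w}$ you invoke for the American auxiliary contract is also unsubstantiated: that contract permits exercise before $T_{1}$ while $C^{w}$ forbids it, so there is no clean ordering.) The paper's proof avoids all of this by comparing with an \emph{uncapped} call whose maturity $T_{0}\leq T_{1}$ is chosen so that its exercise boundary satisfies $S<B^{0}(t)<L_{1}$: then $(S,t)$ lies strictly inside that option's continuation region, so its value strictly exceeds $S-K$, and its exercise policy is feasible for the holder of the capped option with identical payoff because exercise occurs below $L_{1}$. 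The point you are missing is that ``below the cap'' is not enough; one must place $(S,t)$ strictly below the exercise boundary of an auxiliary contract whose payoff the capped option replicates.
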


\begin{proof}
If $S>B^{w}(t)$, then $C^{w}( S,t) >L_{1}-K\ge S\wedge L_1-K$ implying that
immediate exercise is strictly dominated by the policy of waiting until time $T_{1}$
before considering exercise. If $S<L_{1}$, then there exists an American
uncapped call option $C^{A}\left( S,t;T_{0}\right)$ with shorter maturity $T_{0}\leq T_{1}$ and corresponding
optimal exercise boundary $B^0$ such that $S<B^{0}(t)<L_1$
and thus $C^{A}\left( S,t;T_{0}\right) >S-K$. As the policy $B^{0}$ is feasible for the capped option holder and the payoff matches the
payoff of the uncapped option, it follows that $C^{A,L}\left( S,t\right)
\geq C^{A}\left( S,t;T_{0}\right) >S-K$. Immediate exercise is therefore
suboptimal for the original American capped option.
\end{proof}
\vspace{6pt}

2. Let us define $t^{1}$ as
the largest root of $B^{w}(t)=L_{1}$ on $[0,T_1]$ if one exists, otherwise
we let $t^{1}=0$,
and note that $t^{1}<T_{1}$ as $B^{w}(T_{1})<L_{1}$. Also define the set
\begin{equation}
\hspace{4pc} \mathcal{D}^w=\left\{ \left( S,t\right) \in \mathbb{R}%
^{+}\times \left[ 0,t^{1}\right] :L_{1}\leq S\leq \max(B^{w}(t),L_1) \right\}
\label{Dw}
\end{equation}
which represents the region with lower boundary $L_{1}$ and upper boundary $%
\max(B^{w},L_1)$ on $[0,t^1]$ (see Figure 4 for illustration). Now we are ready to provide
further insights into the structure of the optimal exercise region.

\begin{figure}[t]
\begin{center}
\includegraphics[scale=0.75]{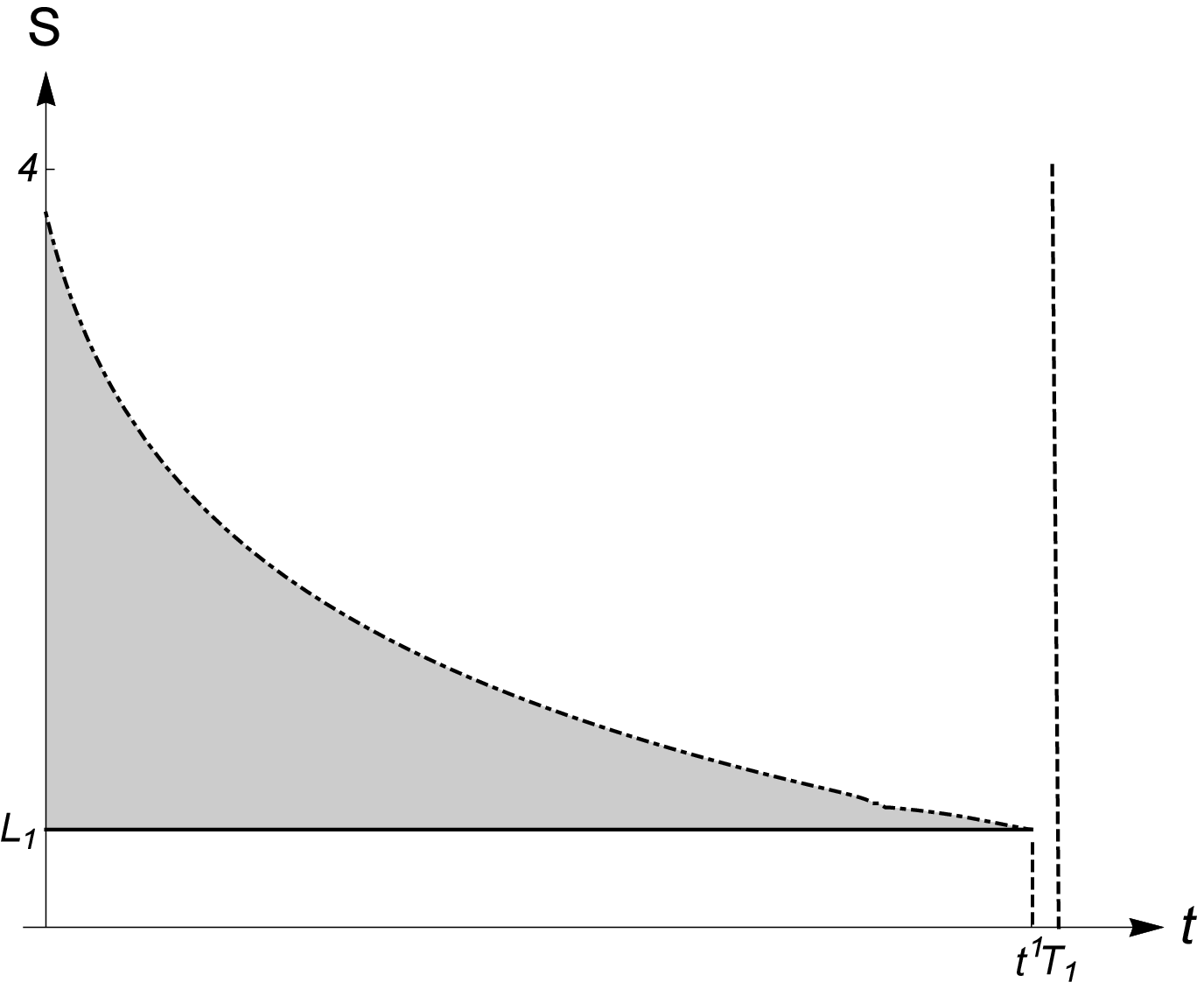}
\end{center}

{\par \leftskip=1.6cm \rightskip=1.6cm \small \ni \vs{-10pt}

\textbf{Figure 4.} This figure plots the region $\mathcal{D}^w$ (gray). The dot-dashed line
represents the upper boundary $B^w$.
 The parameter set is $T_1=3, T_2=4, K=1, L_1=1.3, L_2=1.39, r=0.1, \delta=0.1, \sigma=0.3$. The value of $t^1$ is 2.93.

\par} \vs{10pt}

\end{figure}

\begin{theorem}
\label{td} $(i)$ The immediate exercise region of the capped call option %
\eqref{problem-1} consists of two disconnected subregions $\mathcal{E}=%
\mathcal{E}_{1}\cup \mathcal{E}_{2}$ where $\mathcal{E}_{1}\subseteq
\mathcal{D}^w$ and
\begin{equation*}
\hspace{4pc} \mathcal{E}_{2}=\left\{ \left( S,t\right) \in \mathbb{R}%
^{+}\times \left[ T_{1},T_2\right] :S\geq B^{L,2}(t)\right\}.
\end{equation*}

$(ii)$ The subregion $\mathcal{E}_{1}$ is down-connected above $L_1$ and
non-empty if $t^{1}>0$. If $t^{1}=0$ and $B^{w}(0)=L_{1}$, it consists of
the single point $\left( L_{1},t^{1}\right) $. If $t^{1}=0$ and $%
B^{w}(0)<L_{1}$, then $\mathcal{E}_{1}=\varnothing $.

$(iii)$ $\left( L_{1},v\right) \in \mathcal{E}_{1}$ for all $v\in \left[
0,t^{1}\right]$.

$(iv)$ If $t^0\equiv T_1 -\frac{1}{r}\log((L_2-K)/(L_1-K))\ge 0$, then  $(S,t)\in%
\mathcal{E}_{1}$ for any $t\in[0,t^0]$ and $S\ge L_1$. Also, for any $%
t\in(t_0,T_1)$ there exists $\widetilde{s}=\widetilde{s}(t)$ such that for
any $S>\widetilde{s}$ it is optimal to continue at $(S,t)$. 

$(v)$ There exists $T_0\in[0,t^1]$ such that it is not optimal to exercise
at $(S,t)$ for $S>L_1$ and $t\in[T_0,t^1]$.
\end{theorem}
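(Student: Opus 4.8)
The plan is to derive $(v)$ from Lemma \ref{ls} (immediate exercise is suboptimal whenever $S>B^{w}(t)$), the continuity of the curve $B^{w}$, and the fact that $t^{1}$ is the \emph{largest} root of $B^{w}(t)=L_{1}$. First dispose of the degenerate case $t^{1}=0$: by Theorem \ref{td}$(ii)$, $\mathcal{E}_{1}$ is then either empty or equal to the single point $(L_{1},0)$, and since $\mathcal{E}_{2}\subseteq \mathbb{R}^{+}\times\lbrack T_{1},T_{2}]$ with $0<T_{1}$, no point $(S,0)$ with $S>L_{1}$ lies in $\mathcal{E}$; thus $T_{0}=0=t^{1}$ works.

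Now assume $t^{1}>0$, so a root of $B^{w}=L_{1}$ exists and $B^{w}(t^{1})=L_{1}$. Let $\mathcal{T}=\{t\in\lbrack 0,t^{1}]:B^{w}(s)\le L_{1}\text{ for all }s\in\lbrack t,t^{1}]\}$ and $T_{0}=\inf\mathcal{T}$. Since $B^{w}(t^{1})=L_{1}$ we have $t^{1}\in\mathcal{T}$, so $\mathcal{T}\neq\varnothing$ and $0\le T_{0}\le t^{1}$. The set $\mathcal{T}$ is monotone, in the sense that $t\in\mathcal{T}$ and $t\le t'\le t^{1}$ imply $t'\in\mathcal{T}$; hence $(T_{0},t^{1}]\subseteq\mathcal{T}$, and since $B^{w}$ is continuous, letting $s\downarrow T_{0}$ in $B^{w}(s)\le L_{1}$ gives $B^{w}(T_{0})\le L_{1}$, so $T_{0}\in\mathcal{T}$ and $\mathcal{T}=[T_{0},t^{1}]$. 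Consequently $B^{w}(t)\le L_{1}$ for every $t\in\lbrack T_{0},t^{1}]$. For such a $t$ and any $S>L_{1}$ we then have $S>L_{1}\ge B^{w}(t)$, so Lemma \ref{ls} applies: $C^{A,L}(S,t)\ge C^{w}(S,t)>L_{1}-K=(S\wedge L_{1}-K)^{+}$, i.e. immediate exercise is suboptimal at $(S,t)$. Since for $t<T_{1}$ one has $(S,t)\in\mathcal{E}$ iff $(S,t)\in\mathcal{E}_{1}$, this is exactly $(v)$.

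I do not expect a genuine obstacle in $(v)$ as stated, since it is merely an existence claim and the choice of $T_{0}$ above always works; in the extreme case where $B^{w}>L_{1}$ on a whole left-neighbourhood of $t^{1}$ it reduces to $T_{0}=t^{1}$, which is still admissible and in fact already follows from Theorem \ref{td}$(i)$ because $\mathcal{D}^{w}\cap(\{t^{1}\}\times\mathbb{R}^{+})=\{(L_{1},t^{1})\}$. The substantive point is the nontriviality of the stretch $[T_{0},t^{1}]$ drawn in Figure 6, i.e. that $B^{w}\le L_{1}$ (equivalently $C^{w}(L_{1},\cdot)\ge L_{1}-K$) on a genuine left-neighbourhood of $t^{1}$; I would obtain this from $C^{w}(L_{1},t^{1})=L_{1}-K$ together with the structure of $C^{w}$ near $T_{1}$ (it is the value at time $t$ of the capped payoff received at $T_{1}$, hence satisfies the Black--Scholes equation on $[0,T_{1})$), and the precise position of $T_{0}$ and the bound $t^{0}\vee 0\le T_{0}$ would be pinned down in the later steps.
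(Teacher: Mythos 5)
Your proposal addresses only part $(v)$ of the theorem. Parts $(i)$--$(iv)$ contain most of the substance of the result --- the containment $\mathcal{E}_{1}\subseteq\mathcal{D}^{w}$ via Lemma \ref{ls}, down-connectedness above $L_{1}$ via monotonicity of $C^{A,L}$ in $S$, the claim $(L_{1},v)\in\mathcal{E}_{1}$ for all $v\in[0,t^{1}]$ (which the paper proves with an auxiliary capped option whose intermediate date $\widetilde{T}_{1}$ is shifted so that $\widetilde{t}^{\,1}=v$), and the discounting bound $e^{-r(T_{1}-t)}(L_{2}-K)\leq L_{1}-K$ for $t\leq t^{0}$ --- and your argument for $(v)$ explicitly invokes $(i)$ and $(ii)$. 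As a proof of the theorem as stated, this is therefore incomplete; as a proof of $(v)$ alone it is correct.

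For $(v)$ itself, your route is genuinely different from the paper's and is worth comparing. The paper introduces the value $C^{0}(S,t)$ of the concrete strategy ``exercise at the first hitting time of $L_{1}$ if it occurs before $t^{1}$, otherwise follow the known optimal rule from $t^{1}$ onward,'' and defines $T_{0}$ by the condition $C^{0}(S,t)>L_{1}-K$ for $S>L_{1}$ (in practice via the sign of $C^{0}_{S}(L_{1}+,t)$); the payoff is that $C^{0}$ is in fact the value function on $(L_{1},\infty)\times[T_{0},t^{1}]$ and this $T_{0}$ is the detachment time used in Steps 5--8 and in the integral equation. Your argument instead takes $T_{0}$ to be the left endpoint of the maximal interval ending at $t^{1}$ on which $B^{w}\leq L_{1}$, and applies Lemma \ref{ls} directly; this is cleaner and avoids any appeal to numerics, but the $T_{0}$ it produces is governed by $C^{w}$ (the ``wait until $T_{1}$'' value) rather than by the better strategy $C^{0}$. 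Since $C^{0}-C^{w}=\mathsf{E}_{t}[e^{-r(\tau_{L_{1}}-t)}(L_{1}-K-C^{w}(L_{1},\tau_{L_{1}}))1_{\{\tau_{L_{1}}<t^{1}\}}]$ is nonnegative precisely where $B^{w}>L_{1}$, the region where $C^{0}>L_{1}-K$ can strictly contain yours, so your $T_{0}$ may be larger than the paper's and can degenerate to $t^{1}$ (as you note). That suffices for the bare existence claim in $(v)$, but it cannot be substituted for the paper's $T_{0}$ in the subsequent reformulation of the problem with maturity $T_{0}$ and terminal payoff $C^{0}(\cdot,T_{0})$ without further argument. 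To be fair, the paper's own justification that $T_{0}<t^{1}$ is also not fully rigorous (it defers to numerical evidence), so on the ``nontriviality of the stretch'' both arguments are at a comparable level.
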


\begin{proof}
$(i)$ By Lemma \ref{ls}, immediate exercise is suboptimal above $B^{w}$ and below $%
L_{1}$. It follows immediately that $\mathcal{E}_{1}\subseteq \mathcal{D}^w$.

$(ii )$ To show that $\mathcal{E}_{1}$ is down-connected above $L_1$, suppose that $\left(
S,t\right) \in \mathcal{E}_{1}$ and consider a point $\left( S^{\prime
},t\right) $ such that $S\geq S^{\prime }\geq L_{1}$. As $%
S_{\cdot}\geq S_{\cdot}^{\prime }$ it follows that $C^{A,L}\left( S,t\right) \geq
C^{A,L}\left( S^{\prime },t\right)$. Optimality of immediate
exercise at $\left( S,t\right) $ implies $L_{1}-K=C^{A,L}\left( S,t\right)
\geq C^{A,L}\left( S^{\prime },t\right) \geq L_{1}-K$. Thus $\left(
S^{\prime },t\right) \in \mathcal{E}_{1}$, which establishes
down-connectedness. To show non-emptiness, let $t^{1}>0$ and note that
$C^{w}\left( L_{1},t^1\right) =L_{1}-K$. In $%
\left( t^{1},T_{1}\right] $, immediate exercise is suboptimal, so the value
of the American capped option is the same as the value of the option with
exercise delayed to the interval $\left[ T_{1},T_2\right] $, i.e., $%
C^{A,L}\left( S,t\right) =C^{w}\left( S,t\right) $. If $t=t^1$ and
$S=L_{1}$, then there are only two choices, exercise or follow the delayed exercise
policy. Both produce the same value. Thus, $C^{A,L}\left( L_{1},t^{1}\right)
=$ $C^{w}\left( L_{1},t^{1}\right) =L_{1}-K$ and $\left(
L_{1},t^{1}\right) \in \mathcal{E}_{1}$.  The same argument applies if $%
t^{1}=0$ and $B^w(0)=L_{1}$. In this case, the region consists of a single
point, $\mathcal{D}=\left\{ \left( L_{1},t^{1}\right) \right\} $ and $%
\mathcal{E}_{1}=\mathcal{D}$. Finally, if $t^{1}=0$ and $B^w(0)<L_{1}$,
immediate exercise is suboptimal for all $S\in \mathbb{R}^{+}$, implying $%
\mathcal{E}_{1}=\varnothing $.
\vs{2pt}

$(iii)$ From $(i)$ we have that $(L_1,t^1)\in \mathcal{E}_{1}$ and let us take $\left( L_{1},t\right)$ for any
$t\in \left[ 0,t^{1}\right]$. Now, we introduce an auxiliary capped option $C^{A,\wt{L}}$ with earlier maturity $\wt{T}_1<T_1$ for the cap $L_1$ so that
the new cap is given as $\wt{L}_{t}=L_{1}1_{t<\wt{T}_{1}}+L_{2}1_{\wt{T}_{1}\leq t\leq T_2}$. Clearly $C^{A,L}\le C^{A,\wt{L}}$, then we can decrease $\wt{T}_1$ continuously such that the point $\wt{t}^1<\wt{T_1}$ (which is defined in the same way as $t^1$ for $T_1$) coincides with $t$. From $(i$) we then have that
$C^{A,\wt{L}}\left( L_{1},\wt{t}^{1}\right) =L_{1}-K$ and thus $L_1-K\le C^{A,L}\left( L_{1},t\right)\le C^{A,\wt{L}}\left( L_{1},\wt{t}^1\right)  = L_{1}-K$, i.e.,
 $\left( L_{1},t\right) \in \mathcal{E}_{1}$.
\vs{2pt}

$(iv)$ Now if we assume that $t^0=T_1- \frac{1}{r}\log((L_2-K)/(L_1-K))\ge 0$ and take $t\le t^0$. Then
for any exercise policy $\tau$ we have that
\begin{align} \hs{3pc}
&\EE_t\left[e^{-r(\tau-t)}\left((L_1\wedge S_{\tau}\m K)1_{\{\tau<T_1\}}+(L_2\wedge S_{\tau}\m K)1_{\{\tau\ge T_1\}}\right)\right]\\
&\le (L_1\m K)Q_t(\tau<T_1)+e^{-r(T_1-t)}(L_2\m K)Q_t(\tau\ge T_1)\le L_1\m K\notag
\end{align}
and therefore
the immediate exercise payoff $L_1-K$ dominates the expected value of any admissible exercise strategy.
Thus, it is optimal to exercise at once for any $t\le t^0$ and $S\ge L_1$.
 \vs{2pt}

Now, we take any $t\in(t^0,T_1)$. Then choose large enough $\wt{s}=\wt{s}(t)>0$
such that the probability $e^{r(T_1-t)}(L_1- K)/(L_2-K)\le Q_t(S_{T_1}>L_2)<1$. If we consider the strategy of exercising at $T_1$, then clearly
\begin{equation}\label{C-0}
C^{A,L}(S,t)\ge \EE_t\left[e^{-r(T_1-t)}(L_2\m K)1_{\{S_{T_1}>L_2\}}\right]>(L_2\m K)e^{-r(T_1-t)}Q_t(S_{T_1}>L_2)>L_1-K
\end{equation}
 for $S>\wt{s}$ as $t>t^0$ and hence it is not optimal to stop at $(S,t)$. Thus, we proved that there exists
$\wt{s}$ such that for any $S>\wt{s}$ it is optimal to continue at $(S,t)$.
 \vs{2pt}


$(v)$ We know the optimal exercise strategy on $[t^1,T_1]$ and proved above that points $(L_1,t)\in\mathcal{E}_{1}$ for $t\in[0,t^1]$. Now let us consider the points $(S,t)$, where $S>L_1$ and $t<t^1$, and the exercise strategy given by the first hitting time of the cap $L_1$ if it occurs before $t_1$, otherwise we follow the
already known optimal exercise rule starting from $t^1$. The value of this strategy can be computed as
\begin{equation}\label{C-0}
C^{0}(S,t)=(L_1\m K)\EE_{t}\left[ e^{-r\left( \tau_{L_1}-t\right) }1_{\{\tau_{L_1}<t^{1}\}}\right]+
\EE_{t}\left[ e^{-r\left( t^{1}-t\right) }C^{w}\left( S_{t^{1}},t^{1}\right)1_{\{\tau_{L_1}\ge t^{1}\}} \right]
\end{equation}
 where $\tau_{L_1}=\inf\{u\ge t: S_u=L_1\}$ denotes the first hitting time of $L_1$. We refer to Lemma \protect\ref{lemma:appendix} in Appendix in order to compute both expectations in \eqref{C-0}. In general, we have that there exists
 $T_0\in[0,t^1]$ (numerical results show that $T_0<t^1$, see Figure 5) such that $C^0(S,t)>L_1-K$ for $S>L_1$ and $t\in[T_0,t^1]$. Therefore, it is optimal not
 to exercise and $C^{A,L}=C^0$ for $S>L_1$ and $t\in[T_0,t^1]$.
\end{proof}

\vspace{6pt}

\begin{figure}[t]
\begin{center}
\includegraphics[scale=0.75]{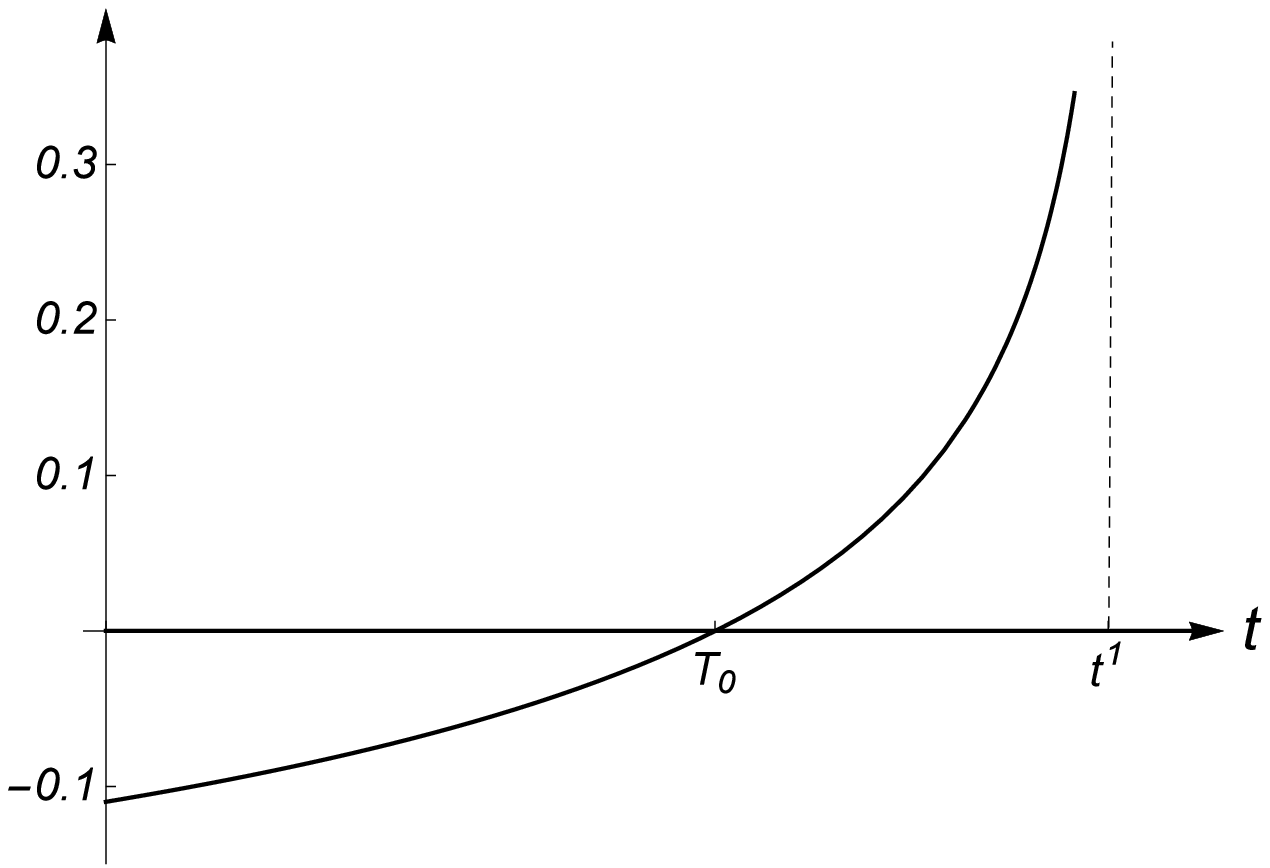}
\end{center}
\par
{}
\par
\leftskip=1.6cm \rightskip=1.6cm {\small \noindent \vspace{-10pt} }
\par
{\small \textbf{Figure 5.} This figure plots the right-hand side derivative $%
C_{S}^{0}(L_1+,t)$ of $C^{0}$ at $L_1$ for $t\in [0,t^1)$. The parameter set
is $T_1=3, T_2=4, K=1, L_1=1.3, L_2=1.39, r=0.1, \delta=0.1, \sigma=0.3$.
We obtain $T_0=1.78$. }
\par
\vspace{10pt}
\end{figure}

3. From Theorem 3.2, we have that $C^{A,L}=C^{w}$ on $\{ \left( S,t\right)
\in \mathbb{R}^{+}\times [t^{1},T_1]\}$ and it is natural to introduce the
upper exercise boundary of $\mathcal{E}_{1}$ in the case $t^1>0$.

\begin{definition}
We define the upper exercise boundary $B^{L,1}=(B^{L,1}(t))_{t\in[0,t^1]}$ such that the immediate
exercise subregion $\mathcal{E}_{1}$ can be written as
\begin{equation*}
\hspace{4pc} \mathcal{E}_{1}=\left\{ \left( S,t\right) \in \mathbb{R}%
^{+}\times \left[ 0,t^{1}\right] :L_{1}\leq S\leq B^{L,1}(t)\right\}.
\end{equation*}
\end{definition}

From Theorem 3.2, we immediately have the following properties of $B^{L,1}$:
a) $B^{L,1}(t)\le B^{w}(t)$ for $t\in[0,t^1]$; b) $B^{L,1}(t)=L_1$ for $t\in[%
T_0,t^1]$; c) $B^{L,1}(t)<+\infty$ for $t>t^0\vee 0$ and if $t^0\ge 0$ then $B^{L,1}(t)=+\infty$ for $t\in[0,t^0]$.

Using the property b) above, we can rewrite $\mathcal{E}_{1}$ as $\mathcal{%
\widetilde{E}}_{1}\cup \mathcal{\overline{E}}_{1}$ where
\begin{align}
\hspace{4pc} &\mathcal{\widetilde{E}}_{1}=\left\{ \left( S,t\right) \in
\mathbb{R}^{+}\times \left[ 0,T_{0}\right] :L_{1}\leq S\leq
B^{L,1}(t)\right\} \\
&\mathcal{\overline{E}}_{1}=\left\{ \left( S,t\right) \in \mathbb{R}%
^{+}\times \left[ T_0,t^{1}\right] :S=L_{1}\right\}.
\end{align}

4. Next, we provide lower and upper bounds for $\mathcal{E}_{1}$ using
simple benchmarks. First, let us consider the capped option price $%
C^{A,L,\infty }$ with infinite maturity $T_{2}$ and cap $L_{\tau }^{\infty
}=L_{1}1_{\tau <T_{1}}+L_{2}1_{T_{1}\leq \tau <\infty }$. Let $\mathcal{E}%
^{\infty }$ be the immediate exercise region for this option and define the
boundary $B^{w,\infty }=(B^{w,\infty }(t))_{t\in[0,T_1]}$, time $t^{1,\infty }$ and region
$\mathcal{D}^{w,\infty }$ in the same way as $B^{w}$, $t^{1}$ and  $\mathcal{D}^{w}$,
respectively. Clearly, $\mathcal{E}^{\infty }=\mathcal{E}_{1}^{\infty }\cup
\mathcal{E}_{2}^{\infty }$ where $\mathcal{E}_{1}^{\infty }\subseteq
\mathcal{D}^{w,\infty }$ satisfies properties $(ii)$-$(v)$ in Theorem \ref%
{td} and $\mathcal{E}_{2}^{\infty }=\left\{ \left( S,t\right) \in \mathbb{R}%
^{+}\times \left[ T_{1},T_{2}\right) :S\geq B^{\infty }\wedge L_{2}\right\} $%
, where $B^{\infty }$ is the optimal threshold of the uncapped perpetual call
option. Moreover, the following lemma holds.

\begin{lemma}
\label{lci}The exercise subregion $\mathcal{E}_{1}^{\infty }$ is
left-connected: if $\left( S,t\right) \in \mathcal{E}_{1}^{\infty }$ for $%
t\leq t^{1,\infty }$ and $S>L_{1}$, then $\left( S,v\right) \in \mathcal{E}%
_{1}^{\infty }$ for all $v\in \left[ 0,t\right] $.
\end{lemma}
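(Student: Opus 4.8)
The plan is to reduce the lemma to a single monotonicity property of the perpetual value function, namely that for every fixed $S$ the map $t\mapsto C^{A,L,\infty}(S,t)$ is non-increasing on $[0,T_1)$. Granting this, the conclusion is immediate. Take $(S,t)\in\mathcal{E}_1^{\infty}$ with $S>L_1$; by hypothesis $t\le t^{1,\infty}$, and $t^{1,\infty}<T_1$ for the same reason that $t^1<T_1$, so $t<T_1$. Optimality of immediate exercise at $(S,t)$ means $C^{A,L,\infty}(S,t)=(S\wedge L_1-K)^{+}=L_1-K$. For any $v\in[0,t]$ we then have $v<T_1$ and
\[
L_1-K\;\le\;C^{A,L,\infty}(S,v)\;\le\;C^{A,L,\infty}(S,t)\;=\;L_1-K ,
\]
the first inequality being the trivial lower bound from exercising at once (recall $S>L_1>K$) and the second the monotonicity claim. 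Hence $C^{A,L,\infty}(S,v)=(S\wedge L_1-K)^{+}$, so immediate exercise is optimal at $(S,v)$; since $v<T_1$, this gives $(S,v)\in\mathcal{E}_1^{\infty}$.

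To prove the monotonicity claim I would use the time-homogeneity of $S$ to express the value in terms of the time remaining until the jump and then compare cap schedules. For a starting date $w<T_1$, put $\bar S_u:=S_{w+u}$, which is a geometric Brownian motion issued from $S$ regardless of $w$; a stopping time $\tau\ge w$ corresponds to $\sigma:=\tau-w\ge0$, and
\[
C^{A,L,\infty}(S,w)=\sup_{\sigma\ge0}\mathsf{E}\!\left[e^{-r\sigma}\bigl(\bar S_{\sigma}\wedge\bigl(L_1 1_{\{\sigma<T_1-w\}}+L_2 1_{\{\sigma\ge T_1-w\}}\bigr)-K\bigr)^{+}\right]=:G(S,T_1-w),
\]
so the value depends on $w$ only through $m:=T_1-w$. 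For $m_1\le m_2$ and any $\sigma\ge0$,
\[
L_1 1_{\{\sigma<m_2\}}+L_2 1_{\{\sigma\ge m_2\}}\;\le\;L_1 1_{\{\sigma<m_1\}}+L_2 1_{\{\sigma\ge m_1\}},
\]
because the two schedules coincide for $\sigma<m_1$ and for $\sigma\ge m_2$, while on $[m_1,m_2)$ the left-hand side equals $L_1\le L_2$; the assumption $L_1<L_2$ is essential here. Since $c\mapsto(x\wedge c-K)^{+}$ is non-decreasing, the integrand above is pointwise no larger for $m_2$ than for $m_1$; taking expectations and then the supremum over $\sigma$ gives $G(S,m_2)\le G(S,m_1)$. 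Equivalently $w\mapsto C^{A,L,\infty}(S,w)$ is non-increasing on $[0,T_1)$, which is the claim.

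The one step that needs care is the reduction to $G(S,T_1-w)$ — that is, the time-shift invariance of the value function — which rests on the strong Markov property and the stationarity of the log-increments of $S$, under which a shift of the starting date is mirrored by an equal shift of the cap's jump time. Everything else is a routine dominance argument and I do not anticipate a genuine obstacle; in particular the proof uses nothing about the geometry of $\mathcal{E}_1^{\infty}$ beyond the fact, already recorded, that on $\mathcal{E}_1^{\infty}$ one has $t\le t^{1,\infty}<T_1$, and the sign of $L_2-L_1$ enters only through the cap comparison (which would reverse were $L_1>L_2$).
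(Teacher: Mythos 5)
Your proof is correct in substance and follows essentially the same route as the paper: both arguments rest on the time-homogeneity of the perpetual problem (so the value depends only on the time $T_1-w$ remaining until the cap jumps) combined with a dominance comparison of cap schedules using $L_1<L_2$ --- the paper packages this as a contradiction via a ``translated'' contract with jump date $T_1+t-v$, while you argue directly. One caveat: the monotonicity you actually prove and use, namely $G(S,m_2)\le G(S,m_1)$ for $m_1\le m_2$, i.e.\ $C^{A,L,\infty}(S,v)\le C^{A,L,\infty}(S,t)$ for $v\le t$, says that $w\mapsto C^{A,L,\infty}(S,w)$ is non-\emph{decreasing}, not non-increasing as you twice state; the displayed inequalities are the correct ones and the argument goes through, so only the word needs fixing.
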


\begin{proof} The proof is by contradiction.
Suppose that $\left( S,t\right) \in \mathcal{E}_{1}^{\infty }$ for $t\leq
t^{1,\infty}$ and $S>L_{1}$ and that there exists $v\in \left[ 0,t\right] $ such
that $\left( S,v\right) \notin \mathcal{E}_{1}^{\infty }$, i.e., $%
C^{A,L,\infty }\left( S,v\right) >$ $L_{1}-K$. Let $\tau _{v}^{\infty }$ be
the optimal exercise policy at $\left( S,v\right) $. Now consider the translated
capped contract with $L_{\tau }^{t}=L_{1}1_{\tau
<T_{1}+t-v}+L_{2}1_{T_{1}+t-v\leq \tau <\infty }$ and price $C^{A,L,\infty
}\left( S,s;T_{1}\p t\m v\right) $ for $s\in \left[ 0,\infty \right) $. Note
that $C^{A,L,\infty }( S,\cdot;T_{1}\p t\m v) \leq C^{A,L,\infty }(S,\cdot)$.
As $C^{A,L,\infty }\left( S,t;T_{1}\p t\m v\right) =C^{A,L,\infty }\left(
S,v\right) $, it then follows that $C^{A,L,\infty }\left( S,v\right) \leq
C^{A,L,\infty }\left( S,t\right) $. Hence, given the initial assumptions, $%
L_{1}-K<C^{A,L,\infty }\left( S,v\right) \leq C^{A,L,\infty }\left(
S,t\right) =L_{1}-K$, we arrive at a contradiction. It therefore must be that $\left(
S,v\right) \in \mathcal{E}_{1}^{\infty }$.
\end{proof}

If we define the upper boundary $B^{L,1,\infty }=(B^{L,1,\infty}(t))_{t\in[0,t^{1,\infty}]}$ of $\mathcal{E}%
_{1}^{\infty }$, then left-connectedness of $\mathcal{E}_{1}^{\infty }$
implies that the boundary $B^{L,1,\infty }$ is non-increasing. Moreover, as $%
C^{A,L,\infty}\ge C^{A,L}$, we have that $\mathcal{E}_{1}^{\infty }\subseteq
\mathcal{E}_{1}$ and hence the boundary $B^{L,1,\infty }$ is a lower bound
for $B^{L,1}$. \vspace{2pt}

Next, we consider the contract with restricted exercise provision, limited
to $\left[ 0,T_{1}\right] $, and automatic exercise at $L_{2}$ in $\left[
T_{1},T_{2}\right] $. Let $C^{A,L,T_{1}}\left( S,v\right) $ be the contract
price and $\mathcal{E}^{T_{1}}$ be the immediate exercise region. We define
the boundary $B^{w,T_1}=(B^{w,T_1}(t))_{t\in[0,T_1]}$ and region $\mathcal{D}^{w,T_1}$ in the same way as
in \eqref{B1} and \eqref{Dw}, respectively. Note that $\mathcal{E}^{T_{1}}=%
\mathcal{E}_{1}^{T_{1}}\cup \mathcal{E}_{2}^{T_{1}}$ where $\mathcal{E}%
_{1}^{T_{1}}$ $\subseteq \mathcal{D}^{w,T_{1}}$ satisfies properties $(ii)$-$%
(v)$ in Theorem \ref{td} and $\mathcal{E}_{2}^{T_{1}}=\left\{ \left(
S,t\right) \in \mathbb{R}^{+}\times \left[ T_{1},T_{2}\right] :S\geq
L_{2}\right\} $ is the region with automatic exercise at $L_{2}$. It is
clear that $C^{A,L,T_{1}}\left( S,t\right) \leq C^{A,L}\left( S,t\right) $
for all $t\in \left[ 0,T_{2}\right]$, so that $\mathcal{E}_{1}\subseteq
\mathcal{E}_{1}^{T_{1}}$.

To summarize, we have the following lemma.

\begin{lemma}
\label{lcau}$\mathcal{E}_{1}^{\infty }\subseteq \mathcal{E}_{1}\subseteq
\mathcal{E}_{1}^{T_{1}}$, where $\mathcal{E}_{1}^{\infty }$ and $\mathcal{E}%
_{1}^{T_{1}}$ satisfy the properties of Theorem \ref{td}.
\end{lemma}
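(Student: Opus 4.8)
The plan is to deduce the chain $\mathcal{E}_{1}^{\infty }\subseteq \mathcal{E}_{1}\subseteq \mathcal{E}_{1}^{T_{1}}$ from two one-line dominance comparisons, and then to argue that the structural description of Theorem \ref{td} transfers to $\mathcal{E}_{1}^{\infty }$ and $\mathcal{E}_{1}^{T_{1}}$ because the proof of that theorem used only dominance relations, monotonicity of the price in $S$, the translated-contract trick, and the strong Markov property — none of which is sensitive to whether the second-level horizon is $T_{2}$ or $+\infty$, nor to whether exercise on the second interval is at the holder's discretion or automatic at $L_{2}$.

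For $\mathcal{E}_{1}^{\infty }\subseteq \mathcal{E}_{1}$: every stopping time $\tau \le T_{2}$ admissible for the original contract is admissible for the perpetual contract, and the two caps agree on $[t,T_{2}]$, so $C^{A,L,\infty }\ge C^{A,L}$ everywhere. If $(S,t)\in \mathcal{E}_{1}^{\infty }$ then, since $\mathcal{E}_{1}^{\infty }\subseteq \mathcal{D}^{w,\infty }$, we have $S\ge L_{1}$ and $t\le t^{1,\infty }<T_{1}$, so the immediate payoff equals $L_{1}-K$; optimality of immediate exercise for the perpetual contract gives $C^{A,L,\infty }(S,t)=L_{1}-K$, whence $L_{1}-K\le C^{A,L}(S,t)\le C^{A,L,\infty }(S,t)=L_{1}-K$ (the first inequality being feasibility of stopping now), so $C^{A,L}(S,t)=L_{1}-K$ and $(S,t)\in \mathcal{E}_{1}$. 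Symmetrically, the restricted contract shrinks the admissible family (exercise forced into $[0,T_{1}]$, with automatic exercise at $L_{2}$ afterwards), so $C^{A,L,T_{1}}\le C^{A,L}$; for $(S,t)\in \mathcal{E}_{1}$ one has $C^{A,L}(S,t)=L_{1}-K$ and $S\ge L_{1}$ with $t\le t^{1}<T_{1}$, and stopping now is feasible for the restricted contract, so $L_{1}-K\le C^{A,L,T_{1}}(S,t)\le C^{A,L}(S,t)=L_{1}-K$, giving $(S,t)\in \mathcal{E}_{1}^{T_{1}}$.

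For the structural claim I would re-run the proof of Theorem \ref{td} with $C^{A,L}$ replaced by $C^{A,L,\infty }$ (resp.\ $C^{A,L,T_{1}}$): the analogue of Lemma \ref{ls} holds because the waiting value at $T_{1}$ still dominates $L_{1}-K$ above its own crossing boundary and below $L_{1}$ one can still insert a shorter-maturity uncapped call; down-connectedness above $L_{1}$ and the single-point case follow from monotonicity in $S$; (iii) uses only the translated-contract trick, available in both settings; and (iv)–(v) use only the bound $L_{1}\wedge S-K\le L_{1}-K$ together with a waiting-value comparison, both unchanged. This produces the constants $t^{0},t^{1,\infty },T_{0}^{\infty }$ (resp.\ $t^{0},t^{1,T_{1}},T_{0}^{T_{1}}$) and properties (ii)–(v). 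I expect the main obstacle to be precisely this verification of hypotheses for the two benchmarks rather than the inclusions themselves: one must check that $C^{A,L,T_{1}}(\cdot ,T_{1})$, whose ``terminal'' payoff is the automatically-exercised capped value $(S_{\tau _{L_{2}}}\wedge L_{2}-K)$, is still convex, increasing in $S$, and strictly above $L_{1}-K$ at $S=L_{1}$, so the $t^{1,T_{1}}$-construction and the waiting-value comparisons go through; and for the perpetual contract one must additionally confirm that the value is finite (so $B^{\infty }$ and $C^{A,L,\infty }$ are well defined), which is where $\delta >0$ or a standard discounting estimate enters. Lemma \ref{lci} is the one genuinely new ingredient, but it is needed only for the monotonicity of $B^{L,1,\infty }$, not for the present lemma.
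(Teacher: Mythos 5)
Your proposal is correct and follows essentially the same route as the paper: the paper derives both inclusions directly from the dominance relations $C^{A,L,T_{1}}\leq C^{A,L}\leq C^{A,L,\infty }$ (your sandwich argument makes explicit what the paper leaves implicit), and it likewise asserts without detailed verification that the two benchmark contracts satisfy properties $(ii)$--$(v)$ of Theorem \ref{td}. Your closing remarks on which hypotheses need re-checking for the perpetual and restricted contracts are a reasonable (and slightly more careful) gloss on what the paper takes for granted.
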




5. We already know the optimal exercise strategy and the option price on
interval $[T_{0},T_{2}]$, so that we can set $T_{0}$ as the new maturity date
of the contract. Thus, the American capped option \eqref{problem-1} with
discontinuous cap $L_{\tau }=L_{1}1_{\left\{ \tau <T_{1}\right\}
}+L_{2}1_{\left\{ T_{1}\leq t\leq T_{2}\right\} }$ and maturity date $T_{2}$
is equivalent to an American capped derivative with maturity date $T_{0}$
and exercise payoff
\begin{equation}
\hspace{6pc}G(S_{\tau },\tau )=\left( S_{\tau }\wedge L_{1}-K\right)
^{+}1_{\left\{ \tau <T_{0}\right\} }+C^{0}\left( S_{\tau },\tau \right)
1_{\left\{ \tau =T_{0}\right\} }
\end{equation}%
at $\tau \in \left[ 0,T_{0}\right] $. The equivalent contract has a constant
cap $L_{1}$, in the interval $\left[ 0,T_{0}\right) $ and terminal value at $%
T_{0}$ given as the known value of the American capped option $%
C^{A,L}(S_{T_{0}},T_{0})=C^{0}(S_{T_{0}},T_{0})$ (see \eqref{C-0} for the
definition of $C^{0}$).

For this contract, the instantaneous benefits of waiting to exercise are
given by%
\begin{equation}
\hspace{6pc} H\left( S\right) \equiv \left( \mathbb{L}_{S}G-rG\right) \left(
S\right)  \label{H-1}
\end{equation}
for $t<T_0$ and thus
\begin{equation}
\hspace{6pc} H(S) =h_{1}( S) 1_{\{K\le S<L_1\}} +h_{2}1_{\{S\geq L_{1}\}}
\label{H-2}
\end{equation}
for $S>0$ where%
\begin{equation}
\hspace{3pc} h_{1}\left( S\right) =\left( r\! -\! \delta \right)S \! -\!
r\left( S\! -\! K\right) =rK\! -\! \delta S\text{\ \ \ \ and\ \ \ }%
h_{2}=-r\left( L_{1}\! -\! K\right).
\end{equation}

6. Standard Markovian arguments lead to the following free-boundary problem
for the value function $C^{A,L}=C^{A,L}(S,t)$ and the upper exercise
boundary $B^{L,1}=(B^{L,1}(t))_{t\in[0,T_0]}$ to be determined
\begin{align}  \hspace{5pc}  \label{PDE}
&C^{A,L}_t \! +\! \mathbb{L}_{S} C^{A,L}-rC^{A,L}=0 & \hspace{-30pt}\text{in}\; \mathcal{C}_1  \\
 \label{IS}&C^{A,L}(B^{L,1}(t)+,t)=L_1-K & \hspace{-30pt}\text{for}\; t\in[0,T_0) \\
&C^{A,L}(S,t)>G(S,t) & \hspace{-30pt}\text{in}\; \mathcal{C}_1 \\
&C^{A,L}(S,t)=G(S,t) & \hspace{-30pt}\text{in}\; \mathcal{\widetilde{E}}_1
\end{align}
where the continuation set $\mathcal{C}_1$ and the exercise region $\mathcal{%
\widetilde{E}}_1$ on $[0,T_0)$ are given by
\begin{align}  \label{D}
\hspace{5pc} &\mathcal{C}_1= \{\, (S,t)\in\mathbb{R}^{+}\! \times\!
[0,T_0]:S<L_1\;\text{or}\;S>B^{L,1}(t) \, \} \\[3pt]
&\mathcal{\widetilde{E}}_{1}=\left\{ \left( S,t\right) \in \mathbb{R}%
^{+}\times \left[ 0,T_0\right] :L_{1}\leq S\leq B^{L,1}(t)\right\}.
\end{align}

\begin{remark}
We note that the system above does not impose a smooth-fit condition at $B^{L,1}$.
 In typical stopping time problems this property follows from the fact that an underlying process immediately
enters into the exercise region if it starts just above or below the exercise
boundary. However, in our problem, the boundary $B^{L,1}$ is not always increasing.
Therefore, this standard argument cannot be applied and we are not able to prove the smooth-fit property. Intuitively, it should hold and
the numerical results support this intuition. But in order to keep the analysis as rigorous as possible, we do not assume
that the smooth-fit condition holds and thus a local time term will appear in the pricing formulas.
\end{remark}

7.  We now assume that
the technical conditions required on $B^{L,1}$ for the
local time-space formula on curves (Peskir (2005)) are satisfied, i.e., $B^{L,1}$ is continuous and of bounded variation on $[0,T_0)$. We then apply the
formula to $e^{-r(T_{0}-t)}C^{A,L}(S_{T_{0}},T_{0})$ and obtain
\begin{align}
\hspace{2.0075pc}e^{-r(T_{0}-t)}& C^{A,L}(S_{T_{0}},T_{0})  \label{ltsf} \\
=\;& C^{A,L}(S,t)+M_{T_{0}}  \notag \\
& +\int_{t}^{T_{0}}e^{-r(v-t)}(C_{t}^{A,L}+\mathbb{L}_{S}C^{A,L}-rC^{A,L})%
\left( S_{v},v\right) dv  \notag \\
& +\frac{1}{2}\int_{t}^{T_{0}}e^{-r(v-t)}\Delta
_{S}C^{A,L}(B^{L,1}(v),v)1_{\{S_{v}=B^{L,1}(v)>L_{1}\}}d\ell _{v}^{B^{L,1}}
\notag \\
& +\frac{1}{2}\int_{t}^{T_{0}}e^{-r(v-t)}\Delta
_{S}C^{A,L}(B^{L,1}(v),v)1_{\{S_{v}=B^{L,1}(v)=L_{1}\}}d\ell _{v}^{B^{L,1}}
\notag \\
& +\frac{1}{2}\int_{t}^{T_{0}}e^{-r(v-t)}\Delta
_{S}C^{A,L}(L_{1},v)1_{\{S_{v}=L_{1}<B^{L,1}(v)\}}d\ell _{v}^{L_{1}}  \notag
\\
=\;& C^{A,L}(S,t)+M_{T_{0}}+\int_{t}^{T_{0}}e^{-r(v-t)}h_{2}1_{\{S_{v}\in
(L_{1},B^{L,1}(v))\}}dv  \notag \\
& +\frac{1}{2}\int_{t}^{T_{0}}e^{-r(v-t)}C_{S}^{A,L}(B^{L,1}(v)+,v)1_{%
\{S_{v}=B^{L,1}(v)\}}d\ell _{v}^{B^{L,1}}  \notag \\
& -\frac{1}{2}\int_{t}^{T_{0}}e^{-r(v-t)}C_{S}^{A,L}(L_{1}-,v)1_{%
\{S_{v}=L_{1}\}}d\ell _{v}^{L_{1}}  \notag \\
=\;& C^{A,L}(S,t)+M_{T_{0}}+\int_{t}^{T_{0}}e^{-r(v-t)}h_{2}1_{\{S_{v}\in
(L_{1},B^{L,1}(v))\}}dv  \notag \\
& +\frac{1}{2}\int_{t}^{T_{0}}e^{-r(v-t)}C_{S}^{A,L}(B^{L,1}(v)+,v)d\ell
_{v}^{B^{L,1}}  \notag \\
& -\frac{1}{2}\int_{t}^{T_{0}}e^{-r(v-t)}C_{S}^{A,L}(L_{1}-,v)d\ell
_{v}^{L_{1}}  \notag
\end{align}%
where $\Delta _{S}C^{A,L}(S,t)\equiv C_{S}^{A,L}(S+,t)-C_{S}^{A,L}(S-,t)$ is the
jump of the derivative of $C^{A,L}$ at $S>0$ for $t\in \lbrack 0,T_{0})$; $%
\ell ^{B^{L,1}}$ and $\ell ^{L_1}$ are the local times that $S$ spends at $B^{L,1}$ and $L_1$, respectively; $M=(M_{s})_{s\geq t}$ is the
martingale term, and we exploited \eqref{H-2} with \eqref{PDE}. We also used
that $C_{S}^{A,L}(B^{L,1}(t)-,t)=0$ and $C_{S}^{A,L}(L_{1}+,t)=0$ for $t\in
\lbrack 0,T_{0})$ when $B^{L,1}(t)>L_{1}$. Note that $C^{A,L}(S,t)$ is known
for $S<L_{1}$ and $t\in \lbrack 0,T_{0})$ as the optimal exercise rule is to
wait until we hit $L_{1}$ before $T_{0}$, otherwise we obtain the value $%
C^{0}(S_{T_{0}},T_{0})$, i.e.,
\begin{equation}
C^{A,L}(S,t)=(L_{1}\!-\!K)\mathsf{E}_{t}\left[ e^{-r\left( \tau
_{L_{1}}-t\right) }1_{\{\tau _{L_{1}}<T_{0}\}}\right] +\mathsf{E}_{t}\left[
e^{-r\left( T_{0}-t\right) }C^{0}\left( S_{T^{0}},T_{0}\right) 1_{\{\tau
_{L_{1}}\geq T_{0}\}}\right]   \label{ld}
\end{equation}%
where $\tau _{L_{1}}=\inf \{u\geq t:S_{u}=L_{1}\}$ denotes the first hitting
time of $L_{1}$ when $S_{t}=S<L_{1}$ (see Lemma \protect\ref{lemma:appendix} in Appendix for the formulas for both expectations).
\vspace{2pt}

Now, taking expectation $\mathsf{E}_{t}$, using the optional sampling
theorem, the terminal condition $C^{A,L}(S,T_{0})=C^{0}(S,T_{0})$ for $S>0$,
the formula \eqref{expected-local-time-2} in terms of local times at $L_{1}$
and $B^{L,1}$, and rearranging terms in \eqref{ltsf}, we obtain the early
exercise premium representation below.

\begin{theorem}
\label{teep}The price of the American capped call option has the EEP
representation
\begin{equation}
\hspace{3pc} C^{A,L}(S,t)=C^{E,L}(S,t)+\Pi(S,t;B^{L,1}(\,\cdot\,))
\label{EEP}
\end{equation}
for $S>0$ and $t<T_0$ where
\begin{equation}\label{C-EL}
\hspace{0pc} C^{E,L}(S,t)=\;\mathsf{E}_{t}\left[ e^{-r(T_0-t)}C^{0}\left(
S_{T_0},T_0\right) \right]
\end{equation}
\vspace{-20pt}
\begin{align}\label{Pi}
\hspace{0pc} \Pi(S&,t;B^{L,1}(\,\cdot\,))= \\
=\;&r(L_1-K)\int_{t}^{T_0}e^{-r(v-t)} \mathsf{E}_t \left[ 1_{\{S_v\in
(L_{1},B^{L,1}(v))\}} \right] dv  \notag \\
&-\frac{1}{2}\int_{t}^{T_0}e^{-r(v-t)} C_S^{A,L}(B^{L,1}(v)+,v)
\;\varphi\left(-\tfrac{\log{(B^{L,1}(v)/S)}-(r-\delta-\frac{\sigma^2}{2}%
)(v-t)}{\sigma\sqrt{v-t}}\right)\frac{\sigma B^{L,1}(v)}{\sqrt{v-t}}dv
\notag \\
&+\frac{1}{2}\int_{t}^{T_0}e^{-r(v-t)}C_{S}^{A,L}(L_1-,v) \;\varphi\left(-%
\tfrac{\log{(L_1/S)}-(r-\delta-\frac{\sigma^2}{2})(v-t)}{\sigma\sqrt{v-t}}%
\right)\frac{\sigma L_1}{\sqrt{v-t}}dv.  \notag
\end{align}
In this expression, $C^{E,L}(S,t)$ is the price of the European derivative
with payoff $C^{0}(S_{T_0},T_0)$ at the maturity date $T_0$ and $%
\Pi(S,t;B^{L,1}(\,\cdot\,))$ is the early exercise premium given the optimal
exercise boundary $B^{L,1}$.
\end{theorem}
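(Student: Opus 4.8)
The plan is to read the representation \eqref{EEP} directly off the local time--space identity \eqref{ltsf} obtained above, by taking the conditional expectation $\mathsf{E}_t[\,\cdot\,]$ of both sides, using optional sampling, and rearranging. The hypotheses needed for Peskir's formula on curves have been granted in the preceding step: continuity and bounded variation of $B^{L,1}$ on $[t,T_0)$ (assumed), the $C^{1,2}$-smoothness of $C^{A,L}$ in $\mathcal{C}_1$ by the strong Markov property, smoothness of $G$ on $\widetilde{\mathcal{E}}_1$, the facts $Q(S_v=B^{L,1}(v))=Q(S_v=L_1)=0$, local boundedness of $(\partial_t+\mathbb{L}_S-r)C^{A,L}$, which by \eqref{PDE} and \eqref{H-2} equals $0$ in $\mathcal{C}_1$ and the constant $h_2$ for $L_1<S<B^{L,1}(v)$, and continuity of the one-sided derivatives entering the formula. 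The right-hand side of \eqref{ltsf} has already been reduced to the two surviving local-time terms — one at $B^{L,1}$ with coefficient $C_S^{A,L}(B^{L,1}(v)+,v)$ and one at $L_1$ with coefficient $-C_S^{A,L}(L_1-,v)$ — using $C_S^{A,L}(B^{L,1}(v)-,v)=0$ and $C_S^{A,L}(L_1+,v)=0$ on the continuation side.

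Next I would take $\mathsf{E}_t$ in \eqref{ltsf}. On the left, the terminal identity $C^{A,L}(S_{T_0},T_0)=C^{0}(S_{T_0},T_0)$ (valid because $T_0$ was installed as the new maturity with terminal payoff $C^0$, see \eqref{C-0}) gives $\mathsf{E}_t[e^{-r(T_0-t)}C^{A,L}(S_{T_0},T_0)]=C^{E,L}(S,t)$ as in \eqref{C-EL}. Since the payoff of the capped option is bounded by $\max(L_1-K,L_2-K)$, the value $C^{A,L}$ is bounded on $[0,T_0]$, so $M$ is a bounded local martingale and $\mathsf{E}_t[M_{T_0}]=0$ by optional sampling. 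For the Lebesgue integral, Fubini's theorem and $h_2=-r(L_1-K)$ give $\mathsf{E}_t\bigl[\int_t^{T_0}e^{-r(v-t)}h_2 1_{\{S_v\in(L_1,B^{L,1}(v))\}}dv\bigr]=-r(L_1-K)\int_t^{T_0}e^{-r(v-t)}\mathsf{E}_t[1_{\{S_v\in(L_1,B^{L,1}(v))\}}]dv$; solving for $C^{A,L}(S,t)$ moves this term across the equality and produces the first term of $\Pi$ in \eqref{Pi}.

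The final step is to convert the two expected local-time terms. Pulling the expectation inside each integral and writing it against the expected local time as in \eqref{expected-local-time-2}, with the constant barrier replaced by the curve $v\mapsto B^{L,1}(v)$ in the first and by $L_1$ in the second, one gets $\mathsf{E}_t[d\ell_v^{B^{L,1}}]=\varphi\bigl(-\tfrac{\log(B^{L,1}(v)/S)-(r-\delta-\frac{\sigma^2}{2})(v-t)}{\sigma\sqrt{v-t}}\bigr)\frac{\sigma B^{L,1}(v)}{\sqrt{v-t}}dv$ and the analogous expression at $L_1$. Substituting turns $-\tfrac12\mathsf{E}_t[\int_t^{T_0}e^{-r(v-t)}C_S^{A,L}(B^{L,1}(v)+,v)d\ell_v^{B^{L,1}}]$ into the second term of \eqref{Pi} and $+\tfrac12\mathsf{E}_t[\int_t^{T_0}e^{-r(v-t)}C_S^{A,L}(L_1-,v)d\ell_v^{L_1}]$ into the third, which is exactly \eqref{EEP}. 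When $t^0\ge0$ and $t<t^0$ one reads these formulas with the convention $B^{L,1}(v)=+\infty$ on $[t,t^0]$, so $\ell^{B^{L,1}}\equiv0$ there and $1_{\{S_v\in(L_1,B^{L,1}(v))\}}=1_{\{S_v>L_1\}}$; nothing else changes.

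The main obstacle is structural rather than computational. Because $B^{L,1}$ is not monotone in this case, the standard argument forcing smooth fit at the upper boundary breaks down (as noted in the remark preceding the local time--space step), so $C_S^{A,L}(B^{L,1}(v)+,v)$ need not vanish and the local-time term at $B^{L,1}$ genuinely contributes to the price; this is precisely why one must go through Peskir's formula rather than the classical It\^o--Tanaka early-exercise-premium route, and why the non-standard second term appears in \eqref{Pi}. The remaining care is bookkeeping: tracking the three local times, splitting $\ell^{B^{L,1}}$ according to whether $B^{L,1}(v)>L_1$ or $B^{L,1}(v)=L_1$, and keeping the sign of each term as it is moved across the equality.
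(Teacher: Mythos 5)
Your proposal is correct and follows essentially the same route as the paper: apply the local time--space formula to $e^{-r(T_0-t)}C^{A,L}(S_{T_0},T_0)$, take $\mathsf{E}_t$, kill the martingale by optional sampling, use the terminal identity $C^{A,L}(\cdot,T_0)=C^{0}(\cdot,T_0)$ to produce $C^{E,L}$, and convert the two surviving local-time terms via \eqref{expected-local-time-2} applied at $L_1$ and along the curve $B^{L,1}$, with all signs handled as in \eqref{Pi}. Your added remarks on boundedness of $C^{A,L}$ (justifying optional sampling) and on the convention $B^{L,1}=+\infty$ on $[0,t^0]$ are consistent with, though not spelled out in, the paper's argument.
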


8. To characterize the optimal exercise boundary $B^{L,1}=(B^{L,1}(t))_{t\in[0,T_0]}$, we insert $%
S=B^{L,1}(t)$ for $t\in (t^0\vee 0,T_0)$ into \eqref{EEP} and use \eqref{IS} to derive the following recursive
integral equation for $B^{L,1}$
\begin{equation}  \label{IE}
\hspace{3pc} L_1-K=C^{E,L}( B^{L,1}(t),t)
+\Pi(B^{L,1}(t),t;B^{L,1}(\,\cdot\,))
\end{equation}
for $t\in (t^0\vee 0,T_0)$, subject to the boundary condition $B^{L,1}(T_0-)=L_{1}$.
This completes the proof of Theorem \ref{th:case1}$(i)$.

\end{proof}

\begin{remark}
It is important to note that the integral equation \eqref{IE} for $B^{L,1}$
is implicit, as it depends on the derivatives $C_{S}^{A,L}(L_{1}-,v)$ and $%
C_{S}^{A,L}(B^{L,1}(v)+,v)$ of $C^{A,L}$ at $L_{1}$ and $B^{L,1}$. As
mentioned above, the former derivative can be estimated independently of $%
B^{L,1}$ using \eqref{ld}. However, the latter one requires unknown values
of $C^{A,L}$ above $L_{1}$. To tackle this problem numerically, we use
backward induction with a quadrature scheme to approximate $\Pi $. Then, at
each time step $v$, we obtain the value of the boundary $B^{L,1}(v)$ as the
solution to an algebraic equation and also $C_{S}^{A,L}(B^{L,1}(v)+,v)$
using \eqref{EEP} as we already recovered $B^{L,1}(u)$ for $u>v$ (see
details in Section 7). We also note that we could avoid the presence of $%
C_{S}^{A,L}(B^{L,1}(v)+,v)$ by imposing the smooth-fit condition at $B^{L,1}$
(see Remark 4.6) and numerical results seem to provide support for this
condition. However, as we would like to keep the analysis as general and
rigorous as possible, we do not impose it.
\end{remark}

\begin{figure}[t]
\begin{center}
\includegraphics[scale=0.6]{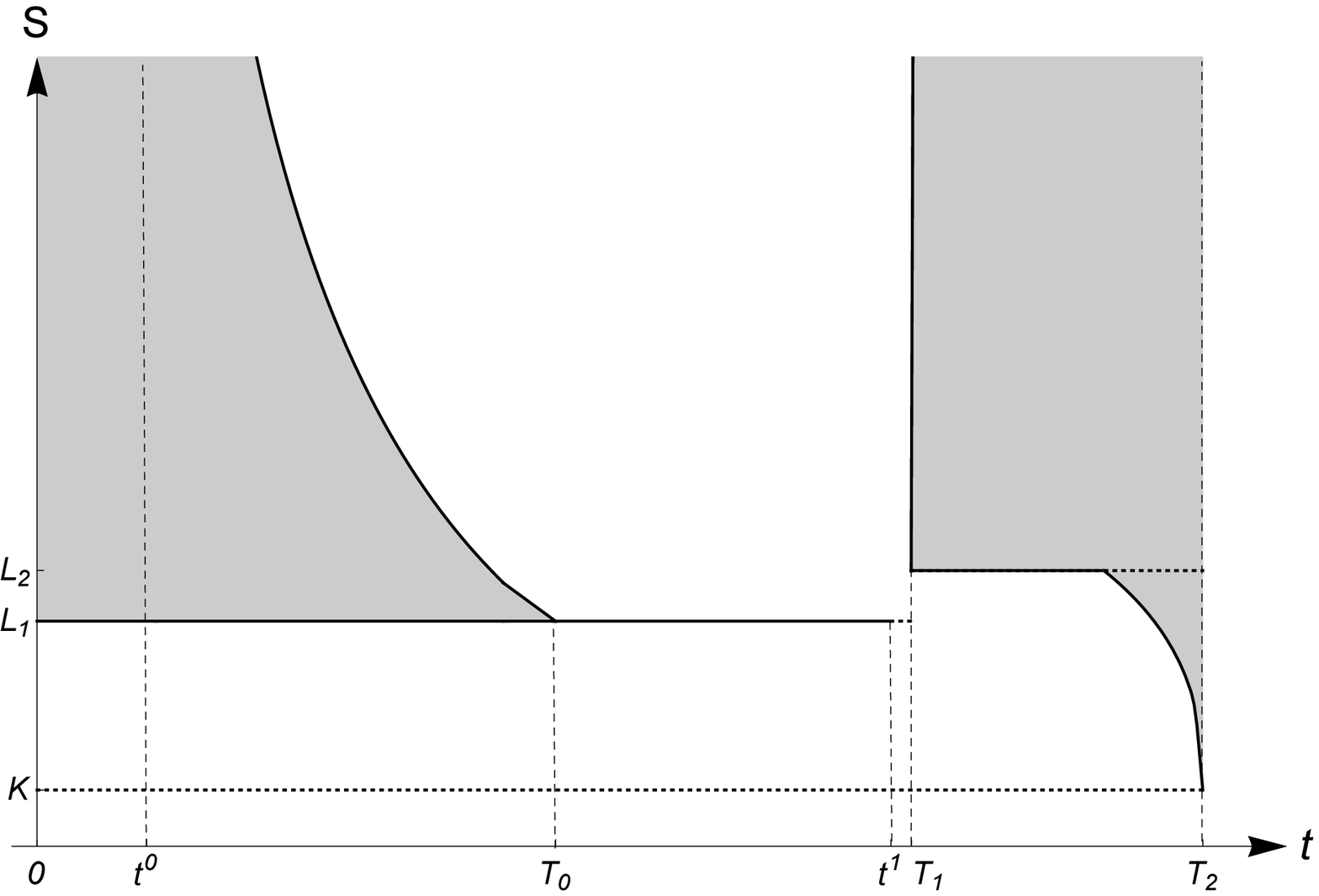}
\end{center}
\par
{}
\par
\leftskip=1.6cm \rightskip=1.6cm {\small \noindent \vspace{-10pt} }
\par
{\small \textbf{Figure 6.} This figure plots the immediate exercise region
(gray) $\mathcal{E}$ in problem \eqref{problem-1}. The part $\mathcal{E}_1$
of the exercise region on $[0,T_1)$ consists of $\mathcal{\widetilde{E}}_{1}$
(with upper boundary $B^{L,1}$ on $[0,T_0]$) and the horizontal segment $%
\mathcal{\overline{E}}_{1}=\{ \left( S,t\right) \in \mathbb{R}
^{+}\!\times\! [T_0,t^{1}] :S=L_{1}\}$. The parameter set is $T_1=3, T_2=4,
K=1, L_1=1.3, L_2=1.39, r=0.1, \delta=0.1, \sigma=0.3$. We obtain that $%
t^0=0.22$, $T_0=1.78$ and $t^1=2.93$. }
\par
\vspace{10pt}
\end{figure}

\begin{figure}[htb]
\begin{center}
\includegraphics[scale=0.7]{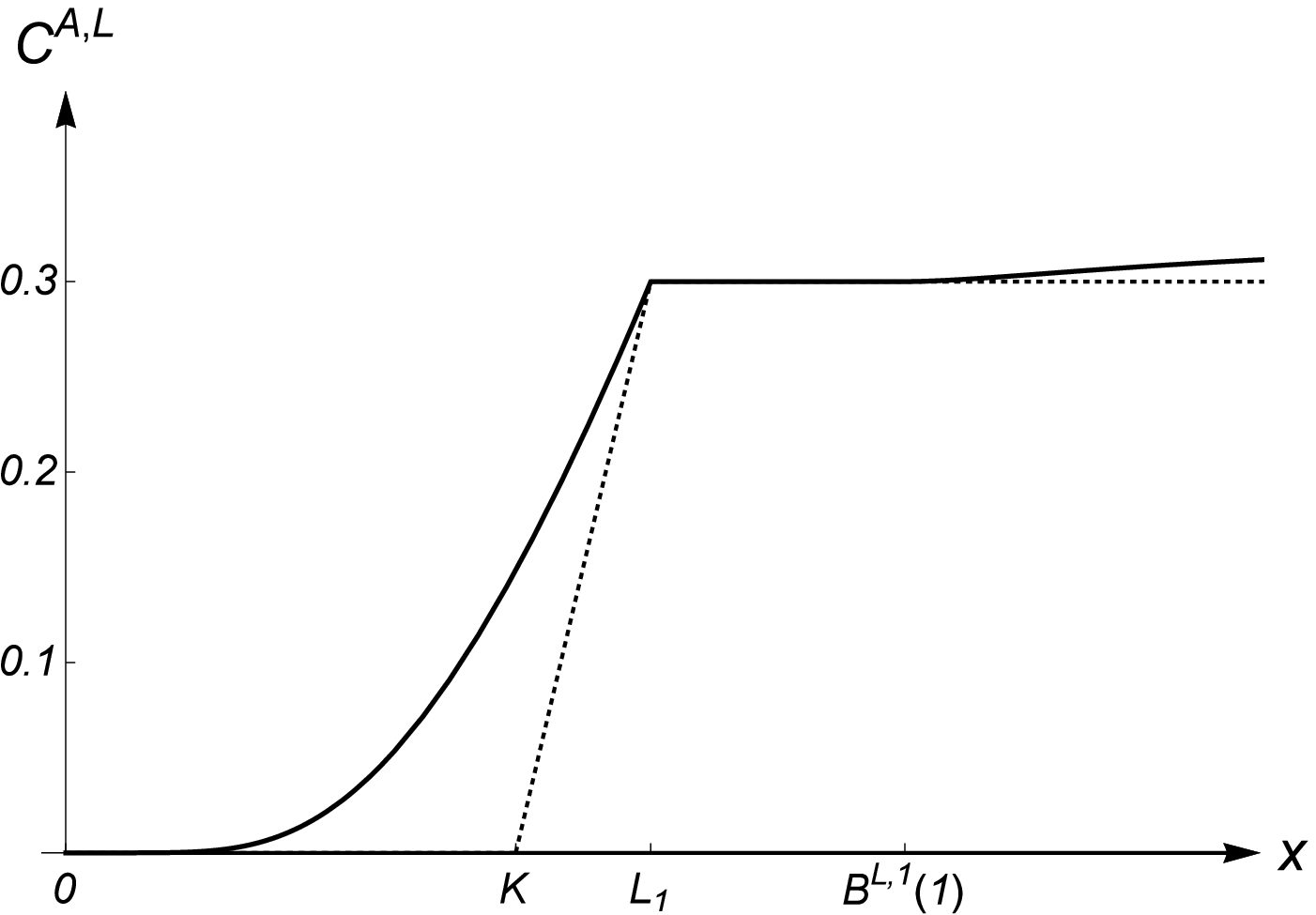}
\end{center}
\par
{}
\par
\leftskip=1.6cm \rightskip=1.6cm {\small \noindent \vspace{-10pt} }
\par
{\small \textbf{Figure 7.} This figure plots the American capped option
price $C^{A,}(S,1)$ in \eqref{problem-1}. The dotted line is the immediate
payoff $(S\wedge L_1-K)^+$. The parameter set is $T_1=3, T_2=4, K=1,
L_1=1.3, L_2=1.39, r=0.1, \delta=0.1, \sigma=0.3$. }
\par
\vspace{10pt}
\end{figure}

\begin{figure}[t]
\begin{center}
\includegraphics[scale=0.65]{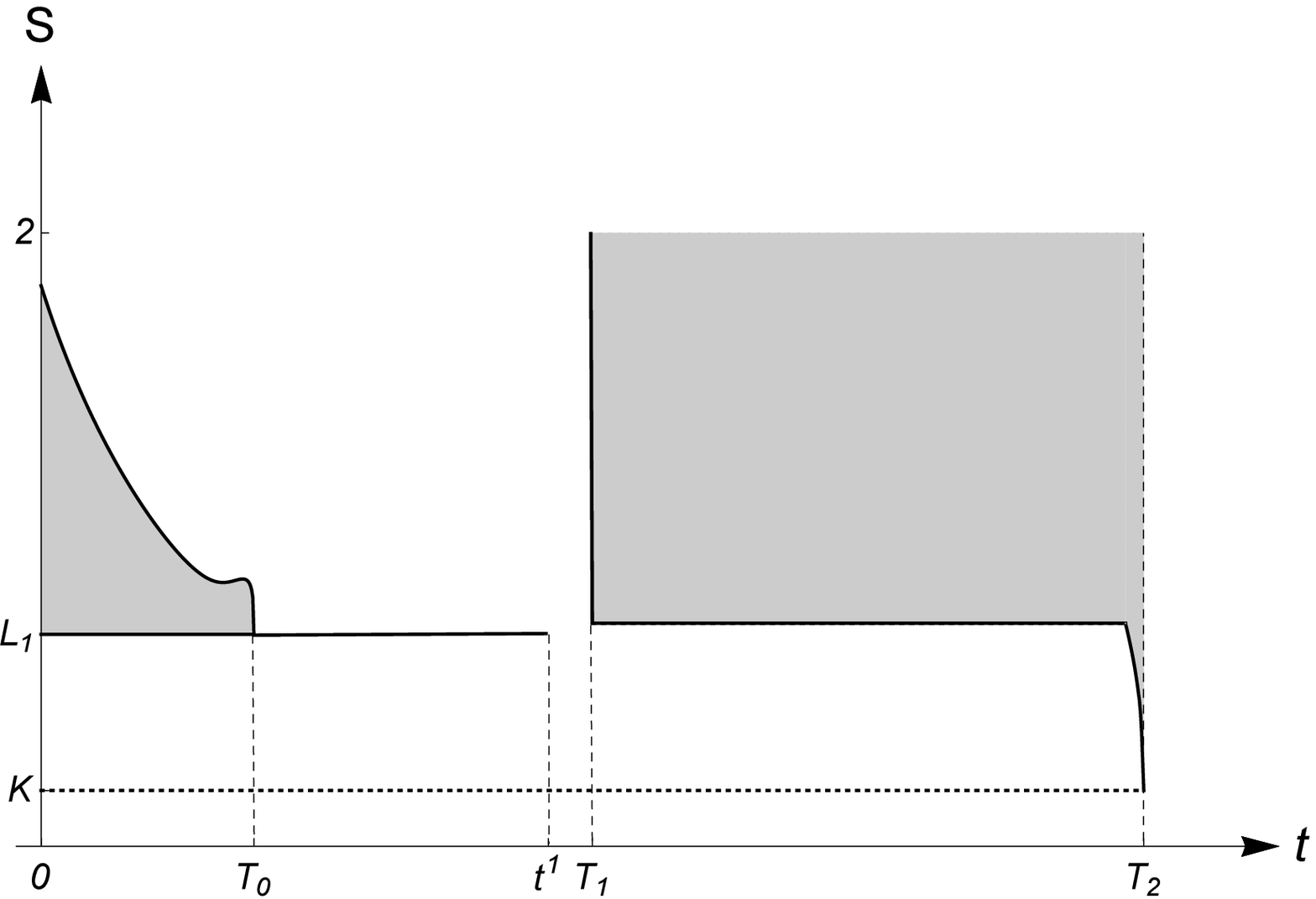}
\end{center}
\par
{}
\par
\leftskip=1.6cm \rightskip=1.6cm {\small \noindent \vspace{-10pt} }
\par
{\small \textbf{Figure 8.} This figure plots the immediate exercise region
(gray) $\mathcal{E}$ in problem \eqref{problem-1}. The part $\mathcal{E}_1$
of the exercise region on $[0,T_1)$ consists of $\mathcal{\widetilde{E}}_{1}$
(with upper boundary $B^{L,1}$ on $[0,T_0]$) and the horizontal segment $%
\mathcal{\overline{E}}_{1}=\{ \left( S,t\right) \in \mathbb{R}
^{+}\!\times\! [T_0,t^{1}] :S=L_{1}\}$. The exercise region on $[T_1,T_2)$
is characterized by $B^{L,2}$, see Theorem 3.2 $(i)$. The parameter set is $%
T_1=1, T_2=2, K=1, L_1=1.28, L_2=1.3, r=0.05, \delta=0.05, \sigma=0.5$. We
obtain that $T_0=0.386$ and $t^1=0.988$. It can be seen that the upper
boundary $B^{L,1}$ is not decreasing everywhere. }
\par
\vspace{10pt}
\end{figure}

\section{Case $B(T_1)\le L_1<L_2$}

Let us now assume that $L_{1}\geq B(T_{1})$, i.e., $t^{\ast }<T_{1}$. Note
that under this assumption we have that $B^{L,2}(t)=B(t)$ for $t\in \lbrack
T_{1},T_{2}]$. It appears that this case is easier to tackle than the one in
the previous section. The proof of this case is also divided into several
steps.

\begin{proof}[Proof of Theorem \protect\ref{th:case1}$(ii)$]

1. We start from the following observations on the structure of the exercise region on $[0,T_1]$.
If $B(t)\le S< L_1$ for $t\in[t^*,T_1)$, then it is optimal to immediately exercise the capped option at $(S,t)$. Indeed, we recall
that $C^{A,L}\le C^A$ and $C^A(S,t)=S-K$ for $S\ge B(t)$. Then we have that $S\wedge L_1-K = S-K \le C^{A,L}(S,t)\le C^A(S,t)=S-K$. Therefore,
$C^{A,L}(S,t)=S\wedge L_1 -K$ which means that $(S,t)$ belongs  to the exercise region.
\vs{2pt}

Next, it is clear that $(S,t)$ belongs to the continuation region for $S<\min (L_1,B(t))$ and $t\in[0,T_1)$. To prove it, the standard arguments with shorter maturity option can be applied
(see, e.g., the proof of Lemma 4.1 or Broadie and Detemple (1995)).
 Using exactly the same arguments as in the previous section (see the proof of Theorem 4.2$(ii)$), we can also prove that the exercise subregion is down-connected above $L_1$.
 \vs{2pt}

 Now we show that it is optimal to exercise at $(L_1,t)$ for all $t\in [0,T_1]$.
Indeed, it is already known that $(L_1,t)\in\cE$ for all $t\in [t^*,T_1]$. Now let us fix any $t<t^*$ and use the same idea as in  the proof of Theorem 4.2$(iii)$.
We choose the auxiliary capped option with new maturity $\wt{T}_1=t^*\le T_1$ for the cap $L_1$ so that $B(\wt{T}_1)=L_1$ . Obviously, the price of the option increases. Also, we have that $\wt{t}^1$ (defined as in the previous section) equals $\wt{T}_1$. Thus, using the result of Theorem 4.2$(iii)$, we know that
$(L_1,t)$ belongs to the exercise region of the auxiliary option. As the price of this option is greater than that of the original and their payoffs coincide at $(L_1,t)$, we can conclude that $(L_1,t)\in \cE_1$.
\vs{2pt}

 We then notice that  $C^{A,L}(S,t)$ coincides with the price of the capped option $C^{A,L_1}(S,t)$ with single cap $L_1$ for $S\le L_1$ and $t<T_2$. This due to the fact that both contracts have the same optimal exercise policy and the same payoff below $L_1$.
\vs{2pt}

The next observation is that there exists $T_0<T_1$ such that $(S,t)$ belongs to the continuation region for all $S>L_1$ and $t\in[T_0,T_1)$.  To show this, for $S>L_1$ and $t\in[0,T_1)$ let us consider the policy of waiting until $T_1$ unless the stock price hits the cap $L_1$. The value of this strategy is then given by
 \begin{align}\label{terminal-payoff-case2}
C^{0}(S,t)\equiv (L_1-K)\EE_t \left[e^{-r(\tau_{L_1}-t)}1_{\{\tau_{L_1}< T_{1}\}}\right]+\EE_{t}\left[ e^{-r\left( T_{1}-t\right) }(S_{T_{1}}\wedge L_2-K)1_{\{\tau_{L_1}\ge T_{1}\}} \right]
\end{align}%
  where $\tau_{L_1}$ is the first hitting time of the cap $L_1$. In fact, this is the price of the European capped barrier down-and-out call option with rebate payoff $L_1-K$ at the barrier $L_1$. Direct examination of its well-known expression shows that the slope $C^{0}_S(L_1+,t)$ at $L_1$ converges to $1$ as $t\rightarrow T_1$. Due to its continuity, for $t$ sufficiently close to $T_1$, the derivative at $L_1$ is still positive. We then define $T_0$ as the largest positive solution to the equation $C^{0}_S(L_1,t)=0$ if one exists; otherwise, $T_0=0$. As the slope is positive for $t\in(T_0,T_1)$, we have that $C^{0}(S,t)>L_1-K$ and it is optimal to wait at $(S,t)$ for $S>L_1$.

The option price at $T_0$ then reads
 \begin{align}\label{terminal-payoff-case2a}\hs{5pc}
C^{A,L}(S,T_0)=C^{A,L_1}(S,T_0)
\end{align}%
for $S\le L_1$ and
 \begin{align}\label{terminal-payoff-case2b}\hs{5pc}
C^{A,L}(S,T_0)=C^0(S,T_0)\end{align}%
for $S>L_1$ and where $\tau_{L_1}$ is the first hitting time of the cap $L_1$.
\vs{2pt}

Finally,  if $t^0\equiv T_1 -\frac{1}{r}\log((L_2-K)/(L_1-K))\ge 0$, then  $(S,t)\in%
\mathcal{E}_{1}$ for any $t\in[0,t^0]$ and $S\ge L_1$. The proof is the same as in the previous case.

\vs{6pt}

2. The results above motivate us to define the upper exercise boundary $B^{L,1}=(B^{L,1}(t))_{t\in[0,T_1)}$ such that the exercise subregion $\cE_1$ on $[0,T_1)$ is given as
\begin{align}\hs{5pc}
\mathcal{E}_{1}=&\left\{ \left( S,t\right) \in \mathbb{R}%
^{+}\times \left[ 0,T^{1}\right) :L_{1}\leq S\leq B^{L,1}(t)\right\}\\
&\cup \left\{ \left( S,t\right) \in \mathbb{R}%
^{+}\times \left[ t^*,T^{1}\right) :B(t)\leq S\leq L_{1}\right\}.\notag
\end{align}
It is clear that $B^{L,1}=L_1$ for $t\in[T_0,T_1)$ and $B^{L,1}(t)=+\infty$ for $t\in[0,t^0]$ if $t^0\ge 0$.
\vs{2pt}

As we know the optimal exercise policy on $[T_0,T_2]$, we set the new maturity date as $T_0$.
 Hence, the American capped option \eqref{problem-1}
is equivalent to an American capped derivative with maturity date $T_{0}$
and exercise payoff
\begin{equation}
\hspace{5pc}G(S_{\tau },\tau )=\left( S_{\tau }\wedge L_{1}-K\right)
^{+}1_{\left\{ \tau <T_{0}\right\} }+C^{A,L}\left(S_{T_0},T_0\right)
1_{\left\{ \tau =T_{0}\right\} }
\end{equation}%
at $\tau \in \left[ 0,T_{0}\right]$ where $C^{A,L}(S,T_0)$ is given in \eqref{terminal-payoff-case2}-\eqref{terminal-payoff-case2b}. The equivalent contract has a constant
cap $L_{1}$ on the interval $\left[ 0,T_{0}\right) $ and a known terminal value at $T_{0}$.

For this contract, the instantaneous benefits of waiting to exercise are
given by%
\begin{equation}
\hspace{5pc} H\left( S\right) \equiv \left( \mathbb{L}_{S}G-rG\right) \left(
S\right)
\end{equation}
for $t<T_0$ and for $S>0$, and thus
\begin{equation}
\hspace{5pc} H(S) =h_{1}( S) 1_{\{K\le S<L_1\}} +h_{2}1_{\{S\geq L_{1}\}}
\label{H-2-case2}
\end{equation}
 where%
\begin{equation}
\hspace{3pc} h_{1}\left( S\right) =\left( r\! -\! \delta \right)S \! -\!
r\left( S\! -\! K\right) =rK\! -\! \delta S\text{\ \ \ \ and\ \ \ }%
h_{2}=-r\left( L_{1}\! -\! K\right).
\end{equation}

Using standard arguments we obtain that the option price $C^{A,L}=C^{A,L}(S,t)$ solves PDE
\begin{align}  \hspace{5pc}  \label{PDE-case2}
C^{A,L}_t \! +\! \mathbb{L}_{S} C^{A,L}-rC^{A,L}=0
\end{align}
in the continuation set $\mathcal{C}_1= \{\, (S,t)\in\mathbb{R}^{+}\! \times\!
[0,T_0]:S<\min(L_1,B(t))\;\text{or}\;S>B^{L,1}(t) \, \}.$
\vs{6pt}

3. We apply the
local time-space formula on curves (again assuming that $B^{L,1}$ is continuous and of bounded variation on $[0,T_0]$)  to obtain
\begin{align}
\hspace{2.0075pc}e^{-r(T_{0}-t)}& C^{A,L}(S_{T_{0}},T_{0})  \label{ltsf-case2} \\
=\;& C^{A,L}(S,t)+M_{T_{0}}  \notag \\
& +\int_{t}^{T_{0}}e^{-r(v-t)}(C_{t}^{A,L}+\mathbb{L}_{S}C^{A,L}-rC^{A,L})%
\left( S_{v},v\right) dv  \notag \\
& +\frac{1}{2}\int_{t}^{T_{0}}e^{-r(v-t)}\Delta
_{S}C^{A,L}(B^{L,1}(v),v)1_{\{S_{v}=B^{L,1}(v)>L_{1}\}}d\ell _{v}^{B^{L,1}}
\notag \\
& +\frac{1}{2}\int_{t}^{T_{0}}e^{-r(v-t)}\Delta
_{S}C^{A,L}(B^{L,1}(v),v)1_{\{S_{v}=B^{L,1}(v)=L_{1}\}}d\ell _{v}^{B^{L,1}}
\notag \\
& +\frac{1}{2}\int_{t}^{T_{0}}e^{-r(v-t)}\Delta
_{S}C^{A,L}(L_{1},v)1_{\{S_{v}=L_{1}<B^{L,1}(v)\}}d\ell _{v}^{L_{1}}  \notag
\\
=\;& C^{A,L}(S,t)+M_{T_{0}}+\int_{t}^{T_{0}}e^{-r(v-t)}h_{1}(S_{v})1_{\{S_{v}\in
(B(v),L_{1})\}}dv \notag\\
&+\int_{t}^{T_{0}}e^{-r(v-t)}h_{2}1_{\{S_{v}\in
(L_{1},B^{L,1}(v))\}}dv  \notag \\
& +\frac{1}{2}\int_{t}^{T_{0}}e^{-r(v-t)}C_{S}^{A,L}(B^{L,1}(v)+,v)1_{%
\{S_{v}=B^{L,1}(v)\}}d\ell _{v}^{B^{L,1}}  \notag \\
& -\frac{1}{2}\int_{t}^{T_{0}}e^{-r(v-t)}C_{S}^{A,L}(L_{1}-,v)1_{%
\{S_{v}=L_{1}\}}d\ell _{v}^{L_{1}}  \notag \\
=\;& C^{A,L}(S,t)+M_{T_{0}}+\int_{t}^{T_{0}}e^{-r(v-t)}h_{1}(S_{v})1_{\{S_{v}\in
(B(v),L_{1})\}}dv \notag\\
&+\int_{t}^{T_{0}}e^{-r(v-t)}h_{2}1_{\{S_{v}\in
(L_{1},B^{L,1}(v))\}}dv  \notag \\
& +\frac{1}{2}\int_{t}^{T_{0}}e^{-r(v-t)}C_{S}^{A,L}(B^{L,1}(v)+,v)d\ell
_{v}^{B^{L,1}}  \notag \\
& -\frac{1}{2}\int_{t}^{T_{0}}e^{-r(v-t)}C_{S}^{A,L}(L_{1}-,v)d\ell
_{v}^{L_{1}}  \notag
\end{align}%
where $M=(M_{s})_{s\geq t}$ is the
martingale term and we used \eqref{H-2-case2} and \eqref{PDE-case2}. We also used
that $C_{S}^{A,L}(B^{L,1}(t)-,t)=0$ and $C_{S}^{A,L}(L_{1}+,t)=0$ for $t\in
\lbrack 0,T_{0})$ when $B^{L,1}(t)>L_{1}$. We recall that $C^{A,L}(S,t)=C^{A,L_1}(S,t)$
for $S<L_{1}$ and $t\in \lbrack 0,T_{1})$.

Now, taking expectation $\mathsf{E}_{t}$, using the optional sampling
theorem, the terminal condition \eqref{terminal-payoff-case2a}-\eqref{terminal-payoff-case2b},
 rearranging terms in \eqref{ltsf-case2}, we obtain the early
exercise premium representation below.

\begin{figure}[t]
\begin{center}
\includegraphics[scale=0.65]{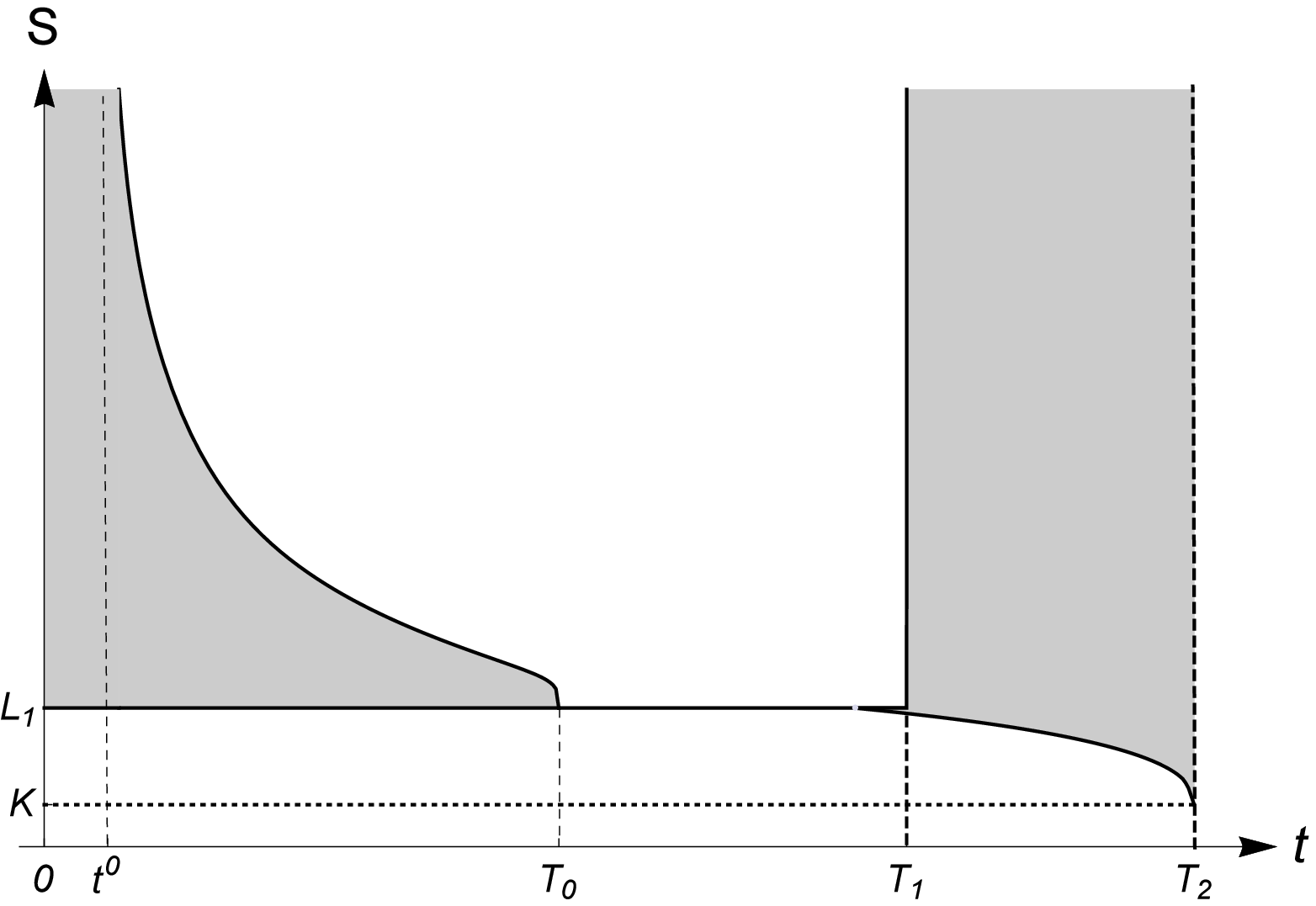}
\end{center}
\par
{}
\par
\leftskip=1.6cm \rightskip=1.6cm {\small \noindent \vspace{-10pt} }
\par
{\small \textbf{Figure 9.} This figure plots the immediate exercise region
(gray) $\mathcal{E}$ in problem \eqref{problem-1} when $B(T_1)\le L_1<L_2$.
The parameter set is $T_1=3, T_2=4,
K=1, L_1=1.46, L_2=1.5, r=0.03, \delta=0.05, \sigma=0.25$. We obtain that
$t^0=0.22$ and $T_0=1.79$. }
\par
\vspace{10pt}
\end{figure}

\begin{theorem}
\label{teep}The price of the American capped call option has the EEP
representation
\begin{equation}
\hspace{3pc} C^{A,L}(S,t)=C^{E,L}(S,t)+\Pi(S,t;B^{L,1}(\,\cdot\,))
\label{EEP-case2}
\end{equation}
for $S>0$ and $t<T_0$ where
\begin{equation}\label{C-EL-case2}
\hspace{0pc} C^{E,L}(S,t)=\;\mathsf{E}_{t}\left[ e^{-r(T_0-t)}C^{A,L}\left(
S_{T_0},T_0\right) \right]
\end{equation}
\vspace{-25pt}
\begin{align}\label{Pi-case2}
\hspace{0pc} \Pi(S&,t;B^{L,1}(\,\cdot\,))= \\
=\;&\int_{t}^{T_0}e^{-r(v-t)} \mathsf{E}_t \left[ (\delta S_v-rK)1_{\{S_v\in
(B(v),L_1)\}} \right] dv  \notag \\
&+r(L_1-K)\int_{t}^{T_0}e^{-r(v-t)} \mathsf{E}_t \left[ 1_{\{S_v\in
(L_{1},B^{L,1}(v))\}} \right] dv  \notag \\
&-\frac{1}{2}\int_{t}^{T_0}e^{-r(v-t)} C_S^{A,L}(B^{L,1}(v)+,v)
\;\varphi\left(-\tfrac{\log{(B^{L,1}(v)/S)}-(r-\delta-\frac{\sigma^2}{2}%
)(v-t)}{\sigma\sqrt{v-t}}\right)\frac{\sigma B^{L,1}(v)}{\sqrt{v-t}}dv
\notag \\
&+\frac{1}{2}\int_{t}^{T_0}e^{-r(v-t)}C_{S}^{A,L_1}(L_1-,v) \;\varphi\left(-%
\tfrac{\log{(L_1/S)}-(r-\delta-\frac{\sigma^2}{2})(v-t)}{\sigma\sqrt{v-t}}%
\right)\frac{\sigma L_1}{\sqrt{v-t}}dv.  \notag
\end{align}
In this expression, $C^{E,L}(S,t)$ is the price of the European derivative
with payoff $C^{A,L_2}(S_{T_0},T_0)$ at the maturity date $T_0$ and $%
\Pi(S,t;B^{L,1}(\,\cdot\,))$ is the early exercise premium given the optimal
exercise boundary $B^{L,1}$.
\end{theorem}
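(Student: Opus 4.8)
The plan is to repeat, on the shorter horizon $[0,T_0]$, the argument used to prove Theorem 2.3, now with terminal payoff $C^{A,L}(\,\cdot\,,T_0)$ given by \eqref{terminal-payoff-case2a}--\eqref{terminal-payoff-case2b}. The chain of equalities in \eqref{ltsf-case2} already carries out the bulk of the work, so what remains is (a) to record the regularity facts that license the use of the local time--space formula on curves of Peskir (2005), and (b) to take expectations in \eqref{ltsf-case2}, convert the expected local time terms into ordinary time integrals, and collect terms.

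For (a) I would verify the hypotheses exactly as in the proof of Theorem 2.3: $C^{A,L}$ is $C^{1,2}$ on the continuation set $\mathcal C_1$ by the strong Markov property, while on the two exercise strips it coincides with the smooth functions $S\mapsto S-K$ on $\{(S,v):B(v)<S<L_1\}$ and $S\mapsto L_1-K$ on $\{(S,v):L_1<S<B^{L,1}(v)\}$; the curves $B^{L,1}$, $B$ and the constant $L_1$ are continuous and of bounded variation ($B^{L,1}$ by the standing assumption of Step~3, $B$ because it is monotone); $Q(S_v=B^{L,1}(v))=Q(S_v=L_1)=Q(S_v=B(v))=0$ since $S$ admits a density; $C^{A,L}_t+\mathbb L_S C^{A,L}-rC^{A,L}$ is locally bounded, being $0$ on $\mathcal C_1$ and equal to $H$ of \eqref{H-2-case2} on the exercise strips; $C^{A,L}(\,\cdot\,,v)$ is convex on each of the three strips; and $v\mapsto C^{A,L}_S(B^{L,1}(v)\pm,v)$, $C^{A,L}_S(L_1\pm,v)$, $C^{A,L}_S(B(v)\pm,v)$ are continuous on $[0,T_0)$. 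In particular, the classical argument applies along $B$ (the regular boundary of the standard exercise set $\{(S,v):S\ge B(v)\}$), so smooth fit holds there and $\Delta_S C^{A,L}(B(v),v)=0$, which is why no local time at $B$ appears in \eqref{ltsf-case2}; and when $B^{L,1}(v)>L_1$ one has $C^{A,L}_S(B^{L,1}(v)-,v)=0$ and $C^{A,L}_S(L_1+,v)=0$, which were used to pass to the final lines of \eqref{ltsf-case2}.

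For (b), having applied the formula to $e^{-r(T_0-t)}C^{A,L}(S_{T_0},T_0)$ as in \eqref{ltsf-case2}, I would take $\mathsf{E}_t$, invoke the optional sampling theorem to discard the martingale term $M_{T_0}$, substitute the terminal condition $C^{A,L}(S,T_0)$ from \eqref{terminal-payoff-case2a}--\eqref{terminal-payoff-case2b}, and use \eqref{expected-local-time-2} (with $L_2$ replaced by $L_1$, respectively by $B^{L,1}(v)$) to rewrite $\mathsf{E}_t[\,\cdot\, d\ell^{L_1}_v]$ and $\mathsf{E}_t[\,\cdot\, d\ell^{B^{L,1}}_v]$ as Lebesgue integrals against the transition density. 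Writing $h_1(S)=-(\delta S-rK)$ and $h_2=-r(L_1-K)$, recalling $C^{A,L}_S(L_1-,v)=C^{A,L_1}_S(L_1-,v)$ below $L_1$, and moving the terminal term to the left-hand side then yields \eqref{EEP-case2} with $C^{E,L}$ as in \eqref{C-EL-case2} and $\Pi$ as in \eqref{Pi-case2}; $C^{E,L}$ is recognised as the price of the European claim paying $C^{A,L}(S_{T_0},T_0)$ at $T_0$, and $\Pi$ as the early exercise premium given $B^{L,1}$.

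The main obstacle is step (a), not the algebra. The delicate point is the $C^{1,2}$ regularity of $C^{A,L}$ up to $B^{L,1}$ from the continuation side together with continuity of $v\mapsto C^{A,L}_S(B^{L,1}(v)+,v)$; this is genuinely nontrivial here because $B^{L,1}$ need not be monotone and smooth fit at $B^{L,1}$ is deliberately not imposed (see Remark 4.6), so a local time term at $B^{L,1}$ survives. As in Remark 4.6 and the proof of Theorem 2.3 I would either fold these properties into the technical standing assumptions on $B^{L,1}$, or invoke the weakened hypotheses of Remark 3.2 in Peskir (2005), which suffice here; the convexity and local boundedness across the three strips are then routine.
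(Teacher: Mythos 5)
Your proposal is correct and follows essentially the same route as the paper: apply Peskir's local time--space formula on $[t,T_0]$ with terminal payoff $C^{A,L}(\cdot,T_0)$, kill the drift term on the continuation set via the PDE, evaluate it on the two exercise strips to get $h_1$ and $h_2$, keep local time terms only at $L_1$ and $B^{L,1}$ (smooth fit at $B$, no smooth fit imposed at $B^{L,1}$), then take expectations and rewrite the expected local times via \eqref{expected-local-time-2}. Your step (a) is in fact somewhat more explicit than the paper, which simply assumes the requisite regularity of $B^{L,1}$ and imports the verification from the proof of Theorem 2.3.
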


4. To characterize the optimal exercise boundary $B^{L,1}=(B^{L,1}(t))_{t\in[0,T_0]}$, we insert $%
S=B^{L,1}(t)$ for $t\in (t^0\vee 0,T_0)$ into \eqref{EEP-case2} to derive the following recursive
integral equation for $B^{L,1}$
\begin{equation}  \label{IE-case2}
\hspace{3pc} L_1-K=C^{E,L}( B^{L,1}(t),t)
+\Pi(B^{L,1}(t),t;B^{L,1}(\,\cdot\,))
\end{equation}
for $t\in (t^0\vee 0,T_0)$, subject to the boundary condition $B^{L,1}(T_0-)=L_{1}$.
This completes the proof of Theorem \ref{th:case1}$(ii)$.
\end{proof}


\section{Case $L_1>L_2$}

Finally, we consider the case $L_{1}>L_{2}$ and assume that the cap is
left-continuous
\begin{equation}
\hspace{6.1133pc}L_{\tau }=L_{1}1_{\tau \leq T_{1}}+L_{2}1_{T_{1}<\tau \leq
T_{2}}.
\end{equation}%
In this case we already know the optimal exercise rule on the interval $%
(T_{1},T_{2}]$.

\begin{proof}[Proof of Theorem \protect\ref{th:case1}$(iii)$]

1. We first note that the value of the option at $T_1$ is given as
\begin{align}\hs{5pc}
C^{A,L}(S,T_1)=G(S,T_1)\equiv \max(S\wedge L_1-K,C^{A,L_2}(S,T_1))
\end{align}
for $S>0$ as we decide whether to exercise it immediately and receive $S\wedge L_1-K$ or continue and get $C^{A,L_2}(S,T_1)$. Using that $L_1>L_2$ and the known structure of $C^{A,L_2}$ we may rewrite the option price as
\begin{align}\hs{5pc}\label{terminal-payoff-case3}
C^{A,L}(S,T_1)=(S\wedge L_1-K)1_{\{S\ge L_2\wedge B(T_1)\}}+C^{A,L_2}(S,T_1)1_{\{S<L_2\wedge B(T_1)\}}
\end{align}
for $S>0$. We also note that the payoff $G(S,t)\equiv (S\wedge L_1-K)^+$ at $t\in [0,T_1)$ is dominated by the value at $T_1$ and thus we can apply the results of Palczewski and Stettner (2010) to conclude that $C^{A,L}$ is continuous on $\mathbb{R}^+\times [0,T_1]$ and that the optimal exercise time on $[0,T_1]$ is given in the standard form
\begin{align}\hs{5pc}
\tau_*=\inf\{t>0: C^{A,L}(S_t,t)=G(S_t,t)\}.
\end{align}
One can also see that the value function $C^{A,L}(S,t)$ is discontinuous from the right at $T_1+$ for $S>L_2\wedge B(T_1)$. This is not crucial, as we concentrate on the interval $[0,T_1]$.
 \vs{6pt}

2. We now note that, in this case, the cap $L$ is non-increasing on $[0,T_2]$ and thus the option price $C^{A,L}(S,t)$ is decreasing in $t$ for fixed $S>0$. Therefore, the exercise region is right-connected. Standard dominance arguments show that the exercise region is up-connected below the cap level $L_1$ on $[0,T_1)$.
The next observation is that it is optimal to exercise at $(S,t)$ for $S\ge L_1$ and $t\in[0,T_1]$, as one attains the maximum possible payoff there.
It is also clear that the local benefits of waiting to exercise are positive when $S<rK/\delta$, so the holder should not exercise the option below $rK/\delta$ prior to $T_1$.

Consequently, we can define the non-increasing exercise boundary $B^{L,1}=(B^{L,1}(t))_{t\in[0,T_1]}$ such that
the exercise subregion $\cE_1$ on $[0,T_1]$ is given as
\begin{align}\hs{5pc}
\mathcal{E}_{1}=\left\{ \left( S,t\right) \in \mathbb{R}^{+}\times \left[ 0,T^{1}\right] : S\ge B^{L,1}(t)\right\}.
\end{align}
From the arguments above and \eqref{terminal-payoff-case3}, we have that $B^{L,1}(T_1-)=\max(rK/\delta,L_2\wedge B(T_1))$ and $B^{L,1}(T_1)=L_2\wedge B(T_1)$. Therefore the boundary $B^{L,1}$
may exhibit a jump at $T_1$.

We define time $t^*_1$ as the root of the equation $B^{L,1}(t)=L_1$ on $[0,T_1)$. If $B^{L,1}<L_1$ on $[0,T_1)$, then we let $t^*_1=0$.
As we know the optimal exercise policy on $(T_1,T_2]$ we set the new maturity date as $T_1$.
 Hence,  the American capped option \eqref{problem-1}
is equivalent to an American capped derivative with maturity date $T_{1}$
and exercise payoff
\begin{equation}
\hspace{5pc}G(S_{\tau },\tau )=\left( S_{\tau }\wedge L_{1}-K\right)
^{+}1_{\left\{ \tau <T_{1}\right\} }+G\left( S_{T_1},T_1 \right)
1_{\left\{ \tau =T_{1}\right\} }
\end{equation}%
at $\tau \in \left[ 0,T_{1}\right] $ where $G(S,T_1)$ is given in \eqref{terminal-payoff-case3}.
We define the instantaneous benefits $H(S)$ of waiting to exercise, the function $h_1 (S)$ and the constant $h_2$ as in the previous section.
Also, the option price $C^{A,L}$ solves the PDE
\begin{align}  \hspace{5pc}  \label{PDE-case3}
C^{A,L}_t \! +\! \mathbb{L}_{S} C^{A,L}-rC^{A,L}=0
\end{align}
in the continuation set $\mathcal{C}_1= \{\, (S,t)\in\mathbb{R}^{+}\! \times\!
[0,T_1):S<B^{L,1}(t) \, \}$.

We recall that the boundary $B^{L,1}$ is non-increasing and this fact allows us to easily prove that the smooth-fit condition holds at $B^{L,1}(t)$ for $t\in(t^*_1,T_1)$ (see e.g., p. 381, Section 25 in Peskir and Shiryaev (2006)). The monotonicity of $B^{L,1}$ also enables us to show the continuity of $B^{L,1}$ on $[0,T_1)$.
\vs{6pt}

\begin{figure}[t]
\begin{center}
\includegraphics[scale=0.65]{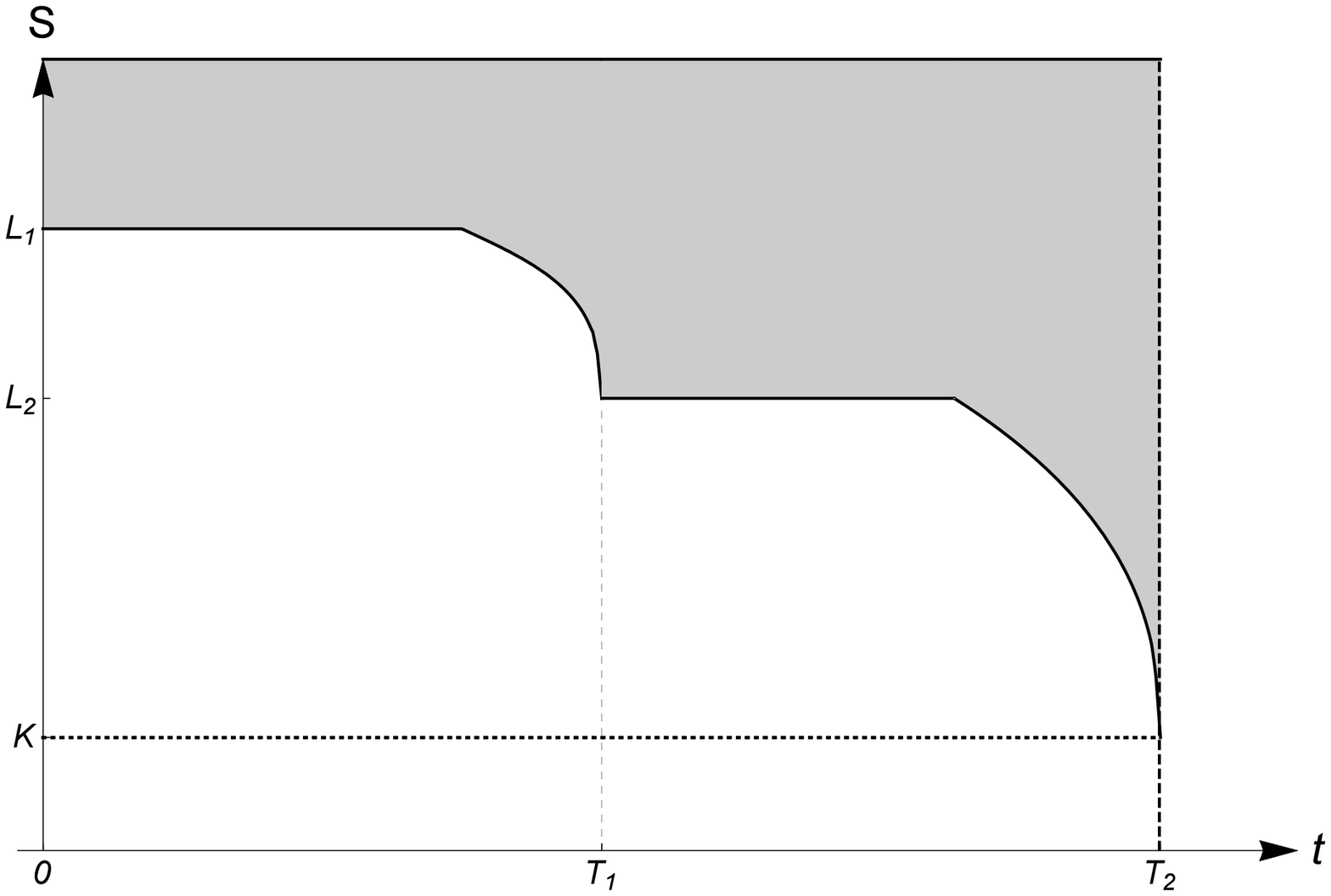}
\end{center}
\par
{}
\par
\leftskip=1.6cm \rightskip=1.6cm {\small \noindent \vspace{-10pt} }
\par
{\small \textbf{Figure 10.} This figure plots the immediate exercise region
(gray) $\mathcal{E}$ in problem \eqref{problem-1} when $L_1>L_2$.
The parameter set is $T_1=1, T_2=2,
K=1, L_1=1.45, L_2=1.3, r=0.03, \delta=0.05, \sigma=0.25$. We obtain that $%
t^*_1=0.75$ and $t^*=1.63$. }
\par
\vspace{10pt}
\end{figure}

3. In this case, we can apply the
local time-space formula on curves as the boundary $B^{L,1}$ is continuous and of bounded variation so that we obtain
\begin{align} \label{ltsf-case3} \hs{1pc}
e^{-r(T_1-t)}&C^{A,L}(S_{T_1},T_1)\\
=\;&C^{A,L}(S,t)+ \int_t^{T_1}
e^{-r(u-t)}\left(C^{A,L}_t \p\L_{S} C^{A,L}
\m rC^{A,L}\right)(S_u,u)du+M_{T_1}\nonumber\\
 &+\frac{1}{2}\int_t^{T_1}
e^{-r(u-t)}\left(C^{A,L}_S (S_u+,u)-C^{A,L}_S (S_u-,u)\right)1_{\{S_u=L_1\}}d\ell^{L_1}_u\nonumber\\
 &+\frac{1}{2}\int_{t^*_1\vee t}^{T_1}
e^{-r(u-t)}\left(C^{A,L}_S (S_u+,u)-C^{A,L}_S (S_u-,u)\right)1_{\{S_u=B^{L,1}(u)\}}d\ell^{B^{L,1}}_u\nonumber \\
=\;&C^{A,L}(S,t)+\int_{t^*_1\vee t}^{T_1} e^{-r(u-t)}(rK\m\delta S_u)1_{\{ B^{L,1}(u)\le S_u\le L_1\}}du\nonumber\\
& -r(L_1\m K)\int_t^{T_1}
e^{-r(u-t)} 1_{\{S_u\ge L_1\}}du+M_{T_1}\nonumber\\
&-\frac{1}{2}\int_t^{T_1}
e^{-r(u-t)}C^{A,L}_S (L_1-,u)d\ell^{L_1}_u\nonumber
 \end{align}
where $M=(M_t)_{t\ge 0}$ is the martingale part
and we used the PDE \eqref{PDE-case3}, the smooth-fit condition at $B^{L}$ on $(t^*_1,T_1)$,
and that $C^{A,L}_S(L_1+,\cdot)=0$.

Now, taking expectation $\mathsf{E}_{t}$, using the optional sampling
theorem, the terminal condition $C^{A,L}(S,T_{1})=G(S,T_{1})$ for $S>0$,
 rearranging terms in \eqref{ltsf-case3}, we obtain the early
exercise premium representation below.

\begin{theorem}
\label{teep}The price of the American capped call option has the EEP
representation
\begin{equation}
\hspace{3pc} C^{A,L}(S,t)=C^{E,L}(S,t)+\Pi(S,t;B^{L,1}(\,\cdot\,))
\label{EEP-case3}
\end{equation}
for $S>0$ and $t<T_1$ where
\begin{equation}\label{C-EL-case3}
\hspace{3pc}
C^{E,L}(S,t)=\;\mathsf{E}_{t}\left[ e^{-r(T_1-t)}G\left(S_{T_1},T_1\right) \right]
\end{equation}
\vspace{-25pt}
\begin{align}\label{Pi-case3}
\hspace{3pc} \Pi(S&,t;B^{L,1}(\,\cdot\,))= \\
=\;&\int_{t\vee t^*_1}^{T_1}e^{-r(u-t)} \mathsf{E}_t \left[ (\delta S_u-rK)1_{\{S_u\in
(B^{L,1}(u),L_1)\}} \right] du  \notag \\
&+r(L_1-K)\int_{t}^{T_1}e^{-r(u-t)} \mathsf{E}_t \left[ 1_{\{S_u\ge L_1\}} \right] du  \notag \\
&+\frac{1}{2}\int_{t}^{T_1}e^{-r(u-t)}C_{S}^{A,L}(L_1-,u) \;\varphi\left(-%
\tfrac{\log{(L_1/S)}-(r-\delta-\frac{\sigma^2}{2})(u-t)}{\sigma\sqrt{u-t}}%
\right)\frac{\sigma L_1}{\sqrt{u-t}}du.  \notag
\end{align}
In this expression, $C^{E,L}(S,t)$ is the price of the European derivative
with payoff $G(S_{T_1},T_1)$ at the maturity date $T_1$ and $%
\Pi(S,t;B^{L,1}(\,\cdot\,))$ is the early exercise premium given the optimal
exercise boundary $B^{L,1}$. We note that $C_{S}^{A,L}(L_1-,u)=1$ on $ [t^*_1,T_1)$.
\end{theorem}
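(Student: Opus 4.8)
The plan is to establish \eqref{EEP-case3} by the same route used for cases (i) and (ii): apply the local time-space formula on curves of Peskir (2005) to the discounted value process $e^{-r(u-t)}C^{A,L}(S_u,u)$ on $[t,T_1]$ and then take conditional expectations. First I would collect the regularity facts needed to invoke the formula. By the strong Markov property and standard parabolic theory, $C^{A,L}$ is $C^{1,2}$ in the continuation set $\cC_1=\{(S,t):S<B^{L,1}(t)\}$ and solves \eqref{PDE-case3} there; in the exercise region it equals $S\m K$ on $\{B^{L,1}(u)\le S\le L_1\}$ and the constant $L_1\m K$ on $\{S\ge L_1\}$, hence is smooth and convex on each piece. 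The lower curve $B^{L,1}$ is continuous and of bounded variation on $[0,T_1)$ (being non-increasing, as shown in Step~2 above) and the level $L_1$ is constant, so a geometric Brownian motion spends zero Lebesgue time at each of them. Finally, smooth-fit $C^{A,L}_S(B^{L,1}(u)-,u)=C^{A,L}_S(B^{L,1}(u)+,u)$ holds on $(t^*_1,T_1)$ by the monotonicity of $B^{L,1}$ (cf. Peskir and Shiryaev (2006)), while above the upper level the value is flat, so $C^{A,L}_S(L_1+,\cdot)=0$.

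With these in hand, the formula applied to $e^{-r(T_1-t)}C^{A,L}(S_{T_1},T_1)$ yields \eqref{ltsf-case3}. The volume integral over $\cC_1$ vanishes by \eqref{PDE-case3}; over the exercise region it equals $(\L_S\m r)(S\m K)=rK\m\delta S=h_1(S)$ on $\{B^{L,1}(u)<S_u<L_1\}$ and $(\L_S\m r)(L_1\m K)=\m r(L_1\m K)=h_2$ on $\{S_u\ge L_1\}$. Of the two local-time integrals, the one carried by $\ell^{B^{L,1}}$ drops out since the derivative jump $\Delta_S C^{A,L}(B^{L,1}(u),u)$ is zero by smooth-fit, whereas the one carried by $\ell^{L_1}$ survives with jump $C^{A,L}_S(L_1+,u)-C^{A,L}_S(L_1-,u)=-C^{A,L}_S(L_1-,u)$, producing the term $-\tfrac12\int_t^{T_1}e^{-r(u-t)}C^{A,L}_S(L_1-,u)\,d\ell^{L_1}_u$. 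Taking $\EE_t$, the martingale part vanishes by optional sampling, $\EE_t[e^{-r(T_1-t)}C^{A,L}(S_{T_1},T_1)]=C^{E,L}(S,t)$ by the terminal identity $C^{A,L}(\cdot,T_1)=G(\cdot,T_1)$ of \eqref{terminal-payoff-case3}, and solving for $C^{A,L}(S,t)$ gives \eqref{EEP-case3} with the first two integrals of $\Pi$ as stated. The surviving local-time term becomes the Gaussian-density integral in \eqref{Pi-case3} by inserting the expectation and applying the identity for $d\EE_t[\ell^{L_1}_u]$ of Remark~2.4 (the computation there with $L_2$ replaced by $L_1$).

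It remains to justify $C^{A,L}_S(L_1-,u)=1$ for $u\in[t^*_1,T_1)$. By definition $t^*_1$ is the root of $B^{L,1}(t)=L_1$, and since $B^{L,1}$ is non-increasing, $B^{L,1}(u)\le L_1$ for $u\ge t^*_1$; hence a left neighbourhood of $L_1$ lies in the exercise region $\cE_1$, where $C^{A,L}(S,u)=S\wedge L_1-K=S-K$, so the left derivative at $L_1$ equals $1$. (For $u<t^*_1$ one has $B^{L,1}(u)=L_1$ and $C^{A,L}_S(L_1-,u)$ must instead be read off from \eqref{EEP-case3} or from the representation of $C^{A,L}$ below the cap.)

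I expect the only real obstacle to be the technical verification that Peskir's hypotheses hold uniformly up to $T_1$ — in particular the behaviour near $t^*_1$, where $B^{L,1}$ touches $L_1$ and $C^{A,L}_S(L_1-,\cdot)$ switches regime, and near $T_1$, where $B^{L,1}$ may jump. The jump at $T_1$ is an endpoint effect absorbed by the terminal condition, and the monotonicity of $B^{L,1}$ already secures its continuity, bounded variation and smooth-fit on $(t^*_1,T_1)$, so beyond careful bookkeeping of the indicator sets no further difficulty arises.
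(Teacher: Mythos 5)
Your proposal follows essentially the same route as the paper's own proof: apply Peskir's local time-space formula on curves to $e^{-r(T_1-t)}C^{A,L}(S_{T_1},T_1)$, kill the volume integral in the continuation set via \eqref{PDE-case3}, evaluate it on the two pieces of the exercise region to get $h_1$ and $h_2$, drop the local time at $B^{L,1}$ by smooth-fit on $(t^*_1,T_1)$, retain the local time at $L_1$ with jump $-C^{A,L}_S(L_1-,\cdot)$, and conclude by optional sampling, the terminal identity $C^{A,L}(\cdot,T_1)=G(\cdot,T_1)$ and Remark 2.4 for the expected local time. The argument, including the justification that $C^{A,L}_S(L_1-,u)=1$ on $[t^*_1,T_1)$ from the exercise region containing a left neighbourhood of $L_1$ there, is correct and matches the paper.
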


4. To characterize the optimal exercise boundary $B^{L,1}=(B^{L,1}(t))_{t\in[0,T_1]}$, we insert
$S=B^{L,1}(t)$ into \eqref{EEP-case3} to derive the following recursive
integral equation for $B^{L,1}$
\begin{equation}  \label{IE-case3}
\hspace{3pc} B^{L,1}(t)-K=C^{E,L}( B^{L,1}(t),t)
+\Pi(B^{L,1}(t),t;B^{L,1}(\,\cdot\,))
\end{equation}
for $t\in [t^*_1,T_1)$, subject to the boundary condition $B^{L,1}(T_1-)=\max(rK/\delta,L_2\wedge B(T_1))$.
\end{proof}

Let us now briefly discuss the case of right-continuous cap $L$ with $%
L_{1}>L_{2}$. In this instance, we cannot employ the results of Palczewski
and Stettner (2010) as the payoff jumps down at $T_{1}$. In fact, in some of
the possible cases there is no optimal stopping time. For example, if $%
\delta $ is sufficiently small, then $rK/\delta >L_{2}$ and it is optimal to
wait for $S\in (L_{2},rK/\delta \wedge L_{1})$ and $t<T_{1}$ as the local
benefits of waiting are positive. However, at $t=T_{1}$, the payoff suddenly
decreases from $S-K$ to $L_{2}-K$. The latter is the maximum possible payoff
on $[T_{1}.T_{2}]$. Thus, there is no optimal exercise rule in this case.

Standard arguments show that the values of both option versions
(right-continuous and left-continuous caps) coincide at $t<T_{1}$ and $%
t>T_{1}$ for given $S>0$. The differences between two contracts are that (i)
the right-continuous version may not have an optimal exercise time; (ii) the
values at $T_{1}$ differ; (iii) the value function of the right-continuous
version has a discontinuity from the left and that of the left-continuous
version a discontinuity from the right at $T_{1}$. We also note that even
though the right-continuous version does not have an optimal exercise time
in some cases, one can construct a sequence of stopping times such that the
corresponding value functions converge to the option price when $t<T_{1}$.

\section{Numerical Algorithm}

In this section, we describe the numerical solution to the American capped
option problem \eqref{problem-1} for all three cases: (i) $%
L_1<\min(L_2,B(T_1))$; (ii) $B(T_1)\le L_1<L_2$; (iii) $L_1>L_2$. \vspace{2pt%
}

(i) Here we discuss the numerical algorithm for the case $%
L_1<\min(L_2,B(T_1))$ using \eqref{EEP} and \eqref{IE}, and provide a
step-by-step algorithm. \vspace{2pt}

\emph{Step 1}. We solve the well-known recursive integral equation for the
boundary $B$ of the uncapped call option (for details of the numerical
implementation, see, e.g., Chapter 8 of Detemple (2006)). This gives the
early exercise premium representation for the uncapped option price $%
C^{A}(S,t)$. \vspace{2pt}

\emph{Step 2}. Using the formula \eqref{capped-price-1}, Lemma \protect\ref{lemma:appendix} and the EEP formula
for $C^A$,  we compute $C^{A,L_2}(L_2\! -\! \varepsilon,t)$ for any $%
t\in[0,t^*)$ and fixed small $\varepsilon>0$ to estimate $%
C^{A,L_{2}}_S(L_2-,t)$ (see Figure 3). We then exploit \eqref{capped-price-3}
to compute $C^{A,L_2}(S,T_1)$ for any $S>0$ (see Figure 2). It is more
efficient to use \eqref{capped-price-3} and compute one-dimensional
integrals rather than the formula \eqref{capped-price-1}, as the latter
requires the integration of the call option price $C^A$, which is also given
in an integral form. \vspace{2pt}

\emph{Step 3}. Here we compute the function $C^{w}(S,t)$ for $S>0$ and $%
t<T_{1}$ by using \eqref{Cw}. \vspace{2pt}

\emph{Step 4}. We are now ready to compute $B^{w}(t)$ for all $t\in \lbrack
0,T_{1}]$ using the function $C^{w}$ from \emph{Step 3} and the definition %
\eqref{B1}, and also to determine numerically $t^{1}$ (see Figure 4).
\vspace{2pt}

\emph{Step 5}. Here, we identify $T_{0}$. For this, we compute $%
C^{0}(L_{1}\! +\! \varepsilon ,t)$ for very small $\varepsilon >0$ and $%
t<t^{1} $ using \eqref{C-0}, Lemma \protect\ref{lemma:appendix} and the values of $C^{w}$ at $t^{1}$ (see \emph{%
Step 3}) to estimate $C^{0}_S(L_{1}+,t)$ (see Figure 5). Starting from $%
t=t^{1}$, we go backward in time and observe that $C^{0}(L_{1}\! +\!
\varepsilon ,t)>L_{1}-K$ if and only if $t>T_{0}$ for some $T_{0}<t^{1}$.
Thus, we obtain $T_{0}$. \vspace{2pt}

\emph{Step 6}. Before solving the recursive integral equation \eqref{IE} for
$B^{L,1}$, we estimate $C_{S}^{A,L}( L_1-,t)$ for $t\in[0,T_0)$ as $\frac{1}{%
\varepsilon}(L_1-K-C^{A,L}( L_1-\varepsilon,t))$ for fixed small $%
\varepsilon>0$, where we use \eqref{ld}, Lemma \protect\ref{lemma:appendix} and values of $C^0$ to compute $%
C^{A,L}( L_1\! -\! \varepsilon,t)$. \vspace{2pt}

\emph{Step 7}. Finally, we have all ingredients to solve the integral
equation \eqref{IE} numerically. We discretize the interval $[0,T_{0}]$ with
step $h$ and approximate the integrals by a quadrature scheme. Then, using
the terminal condition $B^{L,1}(T_{0})=L_{1}$ and backward induction, we
recover the boundary $B^{L,1}$ at points $T_{0}-h,T_{0}-2h,...,0$ by solving
the sequence of algebraic equations. We also have to obtain $%
C_{S}^{A,L}(B^{L,1}(T_{0}-ih)+,T_{0}-ih)$ at each time step, as it is
required to compute $B^{L,1}$. The derivative is estimated from \eqref{EEP}
as we already computed $B^{L,1}(T_{0}-jh)$ and $%
C_{S}^{A,L}(B^{L,1}(T_{0}-jh)+,T_{0}-jh)$ for $1\leq j<i$. Hence, we also
need to estimate $C_{S}^{A,L}(L_1+,T_{0})$, which simply equals $%
C_{S}^{0}(L_1+,T_{0})$. Having interpolated and computed $B^{L,1}$ on $%
[0,T_{0}]$ (see Figure 6), we can recover the American capped option price $%
C^{A,L}$ using \eqref{EEP} (see Figure 7). \vspace{3pt}

\emph{Numerical examples}.

\emph{Example 1}. To illustrate the algorithm, we choose a parameters set
and implement all the steps above to obtain the immediate exercise region
and the price of the American capped call option. The parameters values are:
$T_1=3, T_2=4, K=1, L_1=1.3, L_2=1.39, r=0.1, \delta=0.1, \sigma=0.3$. For
this set, we have that $T_0=1.78, t^1=2.93$ and $t^*=3.66$. We refer to
Figures 1-7 for the graphical illustrations. Numerical experiments seem to
provide evidence that the smooth pasting condition at $B^{L,1}$ holds, so
that $C_{S}^{A,L}(B^{L,1}(v)+,v)=0$, and for this parameters set, the
boundary $B^{L,1}$ is decreasing.

\emph{Example 2}. An interesting issue in this problem is the monotonicity
of the upper boundary $B^{L,1}$. There are two opposite factors that affect
it: non-decreasing payoff and decreasing time horizon when $t$ increases.
This tradeoff prevents us from proving the monotonicity of $B^{L,1}$. In
fact, for the set of parameters in Figure 8, we obtain that the boundary is
non-monotone with some local periods of increasing behavior. This is another
striking feature of the optimal stopping problem \eqref{problem-1}. \vspace{%
6pt}

(ii) Now we turn to the case $B(T_1)\le L_1<L_2$. As we saw in Section 5,
the structure of the exercise region is simpler than in the case $%
L_1<\min(L_2,B(T_1))$ as $t^1=T_1$ and hence the numerical algorithm is less
involved. In fact, we can skip \emph{Steps 3-4} from the algorithm above and
only perform \emph{Steps 1,2,5,6 and 7} to solve \eqref{IE-case2}. We
illustrate this case in Figure 9 where we plot the immediate exercise region
$\mathcal{E}$.

\vspace{6pt}

(iii) Finally, we discuss the case $L_{1}>L_{2}$. As the exercise region is
up-connected, the numerical algorithm becomes quite simple. The approach is
similar to the algorithm for the classical uncapped call option boundary in
\emph{Step 1}. We simply discretize the interval $[0,T_{1}]$, then starting
from $B^{L,1}(T_{1}-)=\max (rK/\delta ,L_{2}\wedge B(T_{1}))$ we use
backward induction and solve the sequence of algebraic equations to recover
the exercise boundary $B^{L,1}$ on $[0,T_{1}]$. We plot the immediate
exercise region $\mathcal{E}$ in Figure 10.

\vspace{6pt}

\section{Appendix}

For the numerical computation of certain expectations appearing in the pricing formulas, we use the following auxiliary lemma.

\begin{lemma}\label{lemma:appendix}
 Let us fix some $0\le t<T$ and constant level $L>0$, and consider the first hitting time $\tau_{L}=\inf \{u\geq t:S_{u}=L\}$
of the level $L$ when the stock price starts at $S_t=S$ at time $t$. Then the following expectations can be computed as 
  \begin{equation*}
\hspace{0pc}\EE_t\left[e^{-r(\tau_L-t)}1_{\{\tau_L<T\}}\right] =\left\{
\begin{array}{l}
\lambda^{2\phi/\sigma^2}\Phi(d_0)
   +\lambda^{2\alpha/\sigma^2}\Phi(d_0+2f\sqrt{T-t}/\sigma) \quad\text{for}\; S<L\\[3pt]
\lambda^{2\phi/\sigma^2}\Phi(-d_0)
   +\lambda^{2\alpha/\sigma^2}\Phi(-d_0-2f\sqrt{T-t}/\sigma) \quad\text{for}\; S>L%
\end{array}%
\right.
\end{equation*}%
  where 
  \begin{align*}
  d_0=\frac{\log\lambda-f(T-t)}{\sigma\sqrt{T-t}}
 \end{align*}
 and $f=\sqrt{b^2+2r\sigma^2}$, $b=-(r-\delta-\sigma^2/2)$, $\phi=(b-f)/2$, $\alpha=(b+f)/2$ and $\lambda=S/L$.
\vs{2pt}

Now if we take some integrable function $G$, then we compute the following expectations as
 \begin{equation*}
\hspace{0pc}\EE_t\left[e^{-r(\tau_L\wedge T-t)}G(S_T)1_{\{\tau_L\ge T\}}\right] =\left\{
\begin{array}{l}
 e^{-r(T-t)}\int_0^L G(x)u(x,t,T)dx \quad\text{for}\; S<L\\[3pt] 
  e^{-r(T-t)}\int_L^{+\infty} G(x)u(x,t,T)dx \quad\text{for}\; S>L
\end{array}%
\right.
\end{equation*}%
 where 
  \begin{align*}
 &u(x,t,T)=\frac{\varphi(d^-(x))-\lambda^{1-2(r-\delta)/\sigma^2}\varphi(d^+(x))}{x\sigma\sqrt{T-t}}\\
 &d^{\pm}(x)=\frac{\pm\log\lambda+\log(x/L)+b(T-t)}{\sigma\sqrt{T-t}}.
 \end{align*}
 
 \begin{proof}
 The proof can be found in the Appendix of Broadie and Detemple (1995).
 \end{proof}

\end{lemma}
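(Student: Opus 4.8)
The plan is to reduce both identities to the first-passage problem of a Brownian motion with constant drift. First I would pass to logarithms: writing $X_u=\log(S_u/L)=\log\lambda-b(u-t)+\sigma(W_u-W_t)$, the time $\tau_L$ becomes the first passage of $X$ to the level $0$, started from $X_t=\log\lambda$. For $S>L$ this is a down-crossing and for $S<L$ an up-crossing, and the two situations are interchanged by the reflection $X\mapsto-X$ (which sends $\lambda\mapsto 1/\lambda$ and $b\mapsto-b$), so it suffices to treat one of them.

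For the first expectation I would use the classical inverse-Gaussian first-passage density of $X$ to $0$, namely $f_{\tau_L}(s)=\frac{|\log\lambda|}{\sigma\sqrt{2\pi}(s-t)^{3/2}}\exp\bigl(-\frac{(\log\lambda\mp b(s-t))^2}{2\sigma^2(s-t)}\bigr)$ for $s>t$, and then compute $\EE_t[e^{-r(\tau_L-t)}1_{\{\tau_L<T\}}]=\int_t^T e^{-r(s-t)}f_{\tau_L}(s)\,ds$. Substituting $v=s-t$ and completing the square in the exponent rewrites the integrand as $\lambda^{b/\sigma^2}\,\frac{|\log\lambda|}{\sigma\sqrt{2\pi}}\,v^{-3/2}\exp\bigl(-\frac{(\log\lambda)^2}{2\sigma^2 v}-\frac{f^2 v}{2\sigma^2}\bigr)$ with $f^2=b^2+2r\sigma^2$; the remaining integral is the standard truncated Laplace transform of a first-passage time, equal (after normalising by $c=|\log\lambda|/\sigma$, $d=f/\sigma$) to $e^{-cd}\Phi\bigl(\tfrac{d(T-t)-c}{\sqrt{T-t}}\bigr)+e^{cd}\Phi\bigl(\tfrac{-d(T-t)-c}{\sqrt{T-t}}\bigr)$. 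Collecting powers of $\lambda$, using $\lambda^{b/\sigma^2}e^{\mp cd}=\lambda^{(b\mp f)/\sigma^2}$, which equals $\lambda^{2\phi/\sigma^2}$ or $\lambda^{2\alpha/\sigma^2}$ since $\phi=(b-f)/2$ and $\alpha=(b+f)/2$, and rewriting the $\Phi$-arguments via $d_0=(\log\lambda-f(T-t))/(\sigma\sqrt{T-t})$, produces exactly the two displayed formulas (the $S<L$ and $S>L$ cases differing only by the sign of $\log\lambda$, hence of the $\Phi$-arguments). Alternatively one can bypass the passage-time density and argue via optional sampling of the exponential martingales $\exp(\gamma_\pm X_u-r(u-t))$, where $\gamma_\pm=(b\pm f)/\sigma^2=2\alpha/\sigma^2,\,2\phi/\sigma^2$ are the roots of $\tfrac{\sigma^2}{2}\gamma^2-b\gamma-r=0$; the same algebra resurfaces there.

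For the second expectation, the key observation is that $\tau_L\wedge T=T$ on $\{\tau_L\ge T\}$, so the quantity equals $e^{-r(T-t)}\EE_t[G(S_T)1_{\{\tau_L\ge T\}}]$. On that event $X$ remains strictly on one side of $0$ throughout $[t,T]$; the law of $S_T$ restricted to $\{\tau_L\ge T\}$ therefore has a sub-density obtained from the Gaussian density of $S_T$ by the reflection principle (method of images), applied after a Girsanov change of measure that removes the drift $b$. This sub-density is precisely $u(x,t,T)$ of the statement, supported on $(0,L)$ when $S<L$ and on $(L,\infty)$ when $S>L$; the image term carries the Girsanov exponent $1-2(r-\delta)/\sigma^2$ and $d^\pm(x)$ are the corresponding standardised log-moneyness variables. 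Integrating $G$ against this sub-density over the appropriate half-line gives the claimed formula.

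The only real obstacle is bookkeeping rather than conceptual: getting all signs right, matching the exponents $2\phi/\sigma^2$, $2\alpha/\sigma^2$ and $1-2(r-\delta)/\sigma^2$, and tracking the Girsanov weight correctly through the reflection when the underlying drift $b$ is nonzero. Once the first-passage density and the killed transition density are written down, everything reduces to elementary Gaussian calculus, and the output coincides with the formulas derived in the Appendix of Broadie and Detemple (1995).
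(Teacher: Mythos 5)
Your derivation is correct: the log-transformation to a drifted Brownian first-passage problem, the truncated Laplace transform of the inverse-Gaussian density (or equivalently optional sampling of the two exponential martingales with exponents $2\phi/\sigma^2$ and $2\alpha/\sigma^2$), and the method-of-images killed transition density with image weight $\lambda^{2b/\sigma^2}=\lambda^{1-2(r-\delta)/\sigma^2}$ all check out against the stated formulas. The paper itself offers no proof beyond citing the Appendix of Broadie and Detemple (1995), and your argument is precisely the standard derivation that reference contains, so there is nothing to add beyond noting that the $\mp$ sign you place inside the first-passage density's exponent is superfluous (the square makes both cases identical) and does not affect your final expressions.
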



\begin{thebibliography}{99}
\bibitem{PS} \textsc{Boyle, P. \emph{and} Turnbull, S.} (1989) Pricing and
hedging capped options. \emph{J. Futures Markets} 9 (41--54).

\bibitem{PS} \textsc{Broadie, M. \emph{and} Detemple, J.} (1995) American
capped call options on dividend-paying assets. \emph{Rev. Financ. Stud.} 8
(161--191).

\bibitem{Chance} \textsc{Chance, D.} (1994). The pricing and hedging of
limited exercise caps and spreads. \emph{Journal of Financial Research} 17
(561--584).

\bibitem{Det} \textsc{Detemple, J.} (2006). \emph{American-Style Derivatives.%
} Chapman \& Hall/CRC.

\bibitem{Flesacker} \textsc{Flesacker, B.} (1992). The design and valuation
of capped stock index options. Working Paper, University of Illinois at
Urbana-Champaign.

\bibitem{IK} \textsc{Karatzas, I.} (1988). On the pricing of American
options. \emph{Appl. Math. Optim.} 17 (37--60).

\bibitem{Pa-St} \textsc{Palczewski, J. \emph{and} Stettner, L.} (2010).
Finite horizon optimal stopping of discontinuous functionals with
applications to impulse control with delay. \emph{SIAM J. Control Optim.} 48
(4874--4909).

\bibitem{Pe-1} \textsc{Peskir, G.} (2005). A change-of-variable formula with
local time on curves. \emph{J. Theoret. Probab.} 18 (499--535).

\bibitem{PS} \textsc{Peskir, G. \emph{and} Shiryaev, A. N.} (2006). \emph{%
Optimal Stopping and Free-Boundary Problems.} Lectures in Mathematics, ETH
Z\"urich, Birkh\"auser.

\bibitem{Qiu} \textsc{Qiu, S.} (2016). American eagle options. \emph{%
Research Report No. 1 (2016), Probab. Statist. Group Manchester} (45 pp).
\end{thebibliography}
\end{document}